\documentclass[a4paper,UKenglish,cleveref, autoref, thm-restate]{lipics-v2021}
\usepackage{tikz-cd}
\usepackage{enumerate}
\usepackage{multicol}
\usepackage{hyperref}
\usepackage{adjustbox}
\usepackage{tikz}
\usepackage{float}
\usepackage{ifthen}
\usepackage{booktabs}
\usetikzlibrary{graphs.standard}
\usetikzlibrary{arrows.meta}
\usetikzlibrary{decorations.pathreplacing, calc, shapes.geometric}
\usepackage{graphicx}

\theoremstyle{plain}
\newtheorem{fact}[theorem]{Fact}
\crefname{fact}{Fact}{Facts}

\definecolor{pkc}{RGB}{31,110,190}   
\definecolor{ftc}{RGB}{31,140,90}    
\definecolor{brc}{RGB}{200,60,50}    
\definecolor{ntc}{RGB}{90,90,95}     

\newcommand{\del}{\mathrm{del}}
\newcommand{\w}{\mathbf{w}}
\providecommand{\VCdim}{\operatorname{VCdim}}

\newcommand{\ind}{\mathrm{ind}}
\newcommand{\forkindep}[1][]{%
  \mathrel{
    \mathop{
      \vcenter{
        \hbox{\oalign{\noalign{\kern-.3ex}\hfil$\vert$\hfil\cr
              \noalign{\kern-.7ex}
              $\smile$\cr\noalign{\kern-.3ex}}}
      }
    }\displaylimits_{#1}
  }
}

\newcommand{\flip}{\mathrm{flip}}
\newcommand{\rk}{\mathsf{rk}}

\newcommand{\td}{\mathrm{td}}

\newcommand{\dist}{\mathrm{dist}}

\newcommand{\C}{\mathcal{C}}

\newcommand{\Ind}[1]
{#1\setbox0=\hbox{$#1x$}\kern\wd0\hbox to 0pt{\hss$#1\mid$\hss} \lower.9\ht0\hbox to 0pt{\hss$#1\smile$\hss}\kern\wd0}

\newcommand{\notind}[1]
{#1\setbox0=\hbox{$#1x$}\kern\wd0
\hbox to 0pt{\mathchardef\nn=12854\hss$#1\nn$\kern1.4\wd0\hss}
\hbox to 0pt{\hss$#1\mid$\hss}\lower.9\ht0 \hbox to 0pt{\hss$#1\smile$\hss}\kern\wd0}

\newcommand{\N}{\mathbb{N}}

\newcommand{\R}{\mathbb{R}}
\newcommand{\Gaif}{\mathsf{Gaif}}

\newcommand{\Bcal}{\ensuremath{\mathcal{B}}}

\newcommand{\Fcal}{\ensuremath{\mathcal{F}}}

\newcommand{\Ocal}{\ensuremath{\mathcal{O}}}
\newcommand{\Pcal}{\ensuremath{\mathcal{P}}}

\newcommand{\Rcal}{\ensuremath{\mathcal{R}}}

\newcommand{\Xcal}{\ensuremath{\mathcal{X}}}
\newcommand{\Ycal}{\ensuremath{\mathcal{Y}}}

\providecommand{\C}{\mathcal{C}}
\providecommand{\N}{\mathbb{N}}
\providecommand{\R}{\mathbb{R}}
\providecommand{\Fcal}{\mathcal{F}}
\providecommand{\Bcal}{\mathcal{B}}
\providecommand{\Pcal}{\mathcal{P}}
\providecommand{\Ocal}{\mathcal{O}}
\providecommand{\Rcal}{\mathcal{R}}
\providecommand{\ind}{\operatorname{ind}}
\providecommand{\flip}{\operatorname{flip}}
\providecommand{\dist}{\operatorname{dist}}
\providecommand{\rk}{\operatorname{rk}}
\providecommand{\td}{\mathrm{td}}

\providecommand{\Gaif}{\operatorname{Gaif}}
\providecommand{\forkindep}{\mathrel{\raisebox{0.15ex}{$\smile$}\!\!\!\!\raisebox{-0.9ex}{$|$}}}
\declaretheorem[name=Question,sibling=theorem]{question}

\title{Flip-packability: uniform characterisations of tame graph classes}
\titlerunning{Flip-packability: uniform characterisations of tame graph classes}

\author{Ioannis Eleftheriadis}{Department of Computer Science, University of Oxford \and \url{https://www.cs.ox.ac.uk/people/ioannis.eleftheriadis/} }{ioannis.eleftheriadis@cs.ox.ac.uk}{https://orcid.org/0000-0003-4764-8894}{}
\authorrunning{Ioannis Eleftheriadis}
\Copyright{Ioannis Eleftheriadis}

\ccsdesc[500]{Theory of computation~Finite Model Theory}
\ccsdesc[300]{Mathematics of computing~Graph theory}

\keywords{Monadic stability, monadic dependence, flips, separability, packing, first-order model checking, finite model theory}

\funding{The work was supported by UKRI EP/X024431/1.}

\EventEditors{}
\EventNoEds{0}
\EventLongTitle{}
\EventShortTitle{}
\EventAcronym{}
\EventYear{}
\EventDate{}
\EventLocation{}
\EventLogo{}
\SeriesVolume{}
\ArticleNo{}

\begin{document}
\hideLIPIcs
\nolinenumbers
\maketitle

\begin{abstract}
A class of graphs is \emph{monadically dependent} if one cannot encode all graphs in coloured graphs from the class using a fixed first-order formula, and \emph{monadically stable} if one cannot even encode arbitrarily long linear orders. Bonnet et al.\ (ICALP 2025) characterised monadic dependence by \emph{flip-separability}: for every vertex weighting, boundedly many flips --- complementations of the adjacency relation within a vertex subset --- make every ball of radius $r$ carry at most an $\varepsilon$-fraction of the weight, so that every set carrying an $\varepsilon$-fraction has two elements pulled apart.

We introduce \emph{flip-packability}: boundedly many graphs, each obtained from the input by boundedly many flips and all determined by the weighting before any set is presented, such that every set carrying an $\varepsilon$-fraction of the weight has $m$ elements pairwise far apart in one of them. The number of flips producing each graph depends on the radius alone; only the number of graphs depends on $\varepsilon$ and $m$. We prove that a class of graphs is flip-packable if and only if it is monadically stable, and $2$-flip-packable, that is, flip-packable with $m=2$, if and only if it is monadically dependent. The passage from two scattered elements to $m$ is thus exactly what separates the two notions. For monadically stable classes we show that the flipped graphs can be computed from the weighting
in cubic time.

Varying the three parameters of the definition --- the sparsifying operation, the radius, and the number $m$ of elements scattered --- produces eight known characterisations of sparse and dense graph classes from the same template. In each case $m$ separates a depth-like notion from its width-like relaxation: treedepth from treewidth, shrubdepth from cliquewidth, and monadic stability from monadic dependence.
\end{abstract}

\section{Introduction}

Deciding whether a first-order sentence holds in a given finite graph is in general hard: the
problem is complete for the parameterised complexity class $\mathrm{AW}[*]$. It becomes tractable
on restricted classes of inputs, and the search for the exact boundary of tractability has led to a
substantial body of work over the last fifteen years, combining structural graph theory with model
theory and parameterised complexity; see \cite{pilipczuk-lens,siebertz-vigny} for surveys. For
classes of sparse graphs the answer is known and is remarkably clean: if $\C$ is closed under
subgraphs, then first-order model checking is fixed-parameter tractable on $\C$ if and only if $\C$
is \emph{nowhere dense}~\cite{deciding}. This already covers a great deal, including planar graphs,
graphs of bounded treewidth or bounded degree, and every class excluding a fixed minor. Beyond
sparsity, the picture has come into focus only recently. The conjectured frontier is \emph{monadic
dependence}: for classes closed under \textit{induced} subgraphs it is necessary for
tractability~\cite{flipbreak}, and it is known to be sufficient in the more restrictive setting of
\emph{monadic stability}~\cite{dreier2023firstorder}.

Both notions originate in model theory \cite{baldwin1985second, shelah-hanf} and are described by what they
forbid. A class is monadically dependent if one cannot produce every finite graph from members of
$\C$ by colouring the vertices with a bounded number of colours and then applying a fixed
first-order formula; it is monadically stable if one cannot even produce every finite linear order
this way. Colours are precisely what makes the notions robust: both are preserved under exactly the
transformations to which first-order model checking is insensitive.

Algorithms do not manipulate colourings and formulas directly, however. Instead, they rely on
combinatorial descriptions of these classes, which supply decompositions to recurse on, and much
recent effort has gone into finding such descriptions
\cite{indiscernibles,flippergame,flipbreak,separability}. Interestingly, these are often modelled
on properties of sparse graphs. For example, an archetypal sparse statement is that every
$n$-vertex tree has a \emph{centroid}, a vertex whose removal leaves components of at most $n/2$
vertices, and so, by the same argument on a tree decomposition, graphs of treewidth $k$ have
balanced separators of $k+1$ vertices in the same sense. Iterating, for every $\varepsilon>0$
boundedly many deletions leave no component with more than an $\varepsilon$-fraction of the
vertices. Beyond bounded treewidth such global separators need not exist, but a \emph{local} form
survives on every nowhere dense class: for every radius $r$ and $\varepsilon>0$, boundedly many
deletions leave no ball of radius $r$ with more than an $\varepsilon$-fraction of the vertices
\cite{nevsetvril2016structural}.

Deleting vertices is useless on dense graphs --- a clique stays a clique --- and the operation
that replaces it is the \emph{flip}: complementing the adjacency relation within a subset of the
vertices. A single flip turns a clique into an independent set, and flips in general sparsify dense
graphs in a controlled and reversible way. Bonnet et al.\ \cite{separability} recently established
the dense form of the local statement above, and showed that it characterises monadic dependence:
a class is \emph{flip-separable} if for every $r$ and $\varepsilon$, for every graph in the class
and every weighting of its vertices, boundedly many flips make every ball of radius $r$ carry at
most an $\varepsilon$-fraction of the weight, the balls around the few vertices that alone carry
more than that excepted. Equivalently, every set carrying an $\varepsilon$-fraction of the weight
and consisting of lighter vertices has two elements pulled apart at distance greater than $r$, since such a set cannot lie
inside a ball. Flip-separability is one of a family of characterisations of the two monadic
notions by flips, mirroring those of nowhere density by deletions, which we recall in
\Cref{sec:prelims}; it is the one this paper builds on.

\subparagraph*{Our contribution.}
Our starting point is a feature of flip-separability that is easy to pass over: the number of flips
it needs depends on $\varepsilon$ as well as on $r$, and grows as $\varepsilon$ shrinks. 
While this is unavoidable even in trees, a careful analysis of separators reveals a finer relationship between these quantities, allowing to move the dependence on
$\varepsilon$ somewhere less costly.

To illustrate, consider how a separator in a tree is found.
Given a set $X$ of at least $\varepsilon n$ vertices,
delete a centroid, see which component $X$ went into, delete that component's centroid, and so on;
the component holding $X$ halves each time, so $X$ is split within $\log(1/\varepsilon)$ steps. This
separator was built \emph{following} $X$. But the vertices it could ever use are the centroids of
the components that are heavy at some level of the recursion, and there are boundedly many of
those, so they can be listed before $X$ is seen. Deleting the whole list at once is the balanced
separator of the sparse statement above. The list says more than its union, though: presented with
any heavy $X$, a \emph{single} entry splits it, namely the centroid of the component in which the
procedure following $X$ would have stopped. Which entry depends on $X$; the list does not; and each
entry is one vertex, however small $\varepsilon$ is. The cost of $\varepsilon$ has moved from the
size of the separator to the length of the list.

This gives two elements of $X$ pulled apart, which is all that separability asks. Pulling $m$
elements apart is a different matter. The centroid step made progress by shrinking the region that
holds $X$, and it produced a split because $X$ cannot fit in a region of fewer than $\varepsilon n$
vertices. For $m$ elements in pairwise distinct components, $X$ must meet $m$ of the components
left by a single deleted vertex, and no choice of vertex can force this: on a path no vertex has
more than two branches, and $k$ deletions leave at most $k+1$ components. What makes it possible is
height. Root a tree of height $h$ and follow $X$ downwards: at each node, either $X$ meets $m$ of
the subtrees below, and deleting that node alone separates them, or $X$ is confined to fewer than
$m$ of them, one of which holds a $\frac1m$-fraction of $X$, and we descend there. After $i$ steps
the current subtree holds at least $\varepsilon n/m^i$ vertices of $X$; so if $n>m^h/\varepsilon$
the descent cannot reach a leaf and must therefore stop within $h$ steps at a node
below which $X$ meets $m$ subtrees. The candidate nodes are again boundedly many and listable in
advance. 
\Cref{fig:trees} shows both constructions. What they
share is the shape of the answer: a bounded list of small parameter sets, fixed before $X$, one of
which serves whatever $X$ arrives; the size of each set is fixed independently of $\varepsilon$
and $m$, and only the length of the list depends on them.

\begin{figure}[h]
\centering
\includegraphics[width=0.75\textwidth]{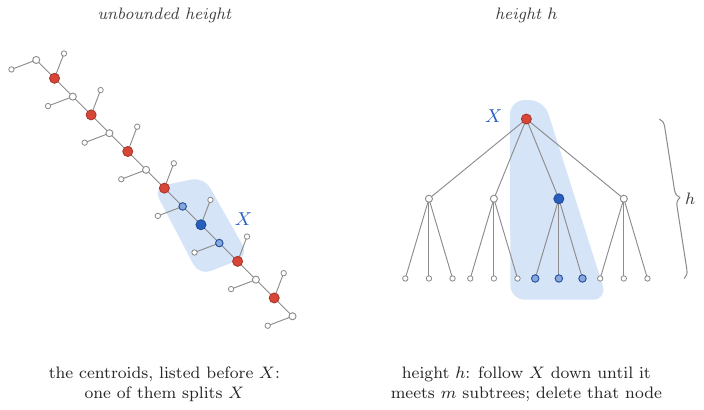}
\caption{\small Serving a heavy set $X$, shaded, in a tree. Red vertices are the candidates listed
before $X$ is seen; blue is the one that serves it. Left: the centroids of all heavy components,
one of which splits $X$. Right: in a tree of height $h$, $X$ followed downwards cannot end on a
leaf, so at some node it meets $m$ subtrees, and deleting that node scatters it.}
\label{fig:trees}
\end{figure}

\textit{Flip-packability}, the property this paper introduces, is the bounded-height construction
for dense graphs, with flips for deletions and balls of radius $r$ for components. Trees of unbounded height correspond to monadically dependent classes and trees of bounded
height to monadically stable ones; we return to this analogy below. Our main result is that, as
bounded height did, monadic stability buys the passage from two elements to $m$. Here a
\emph{weighting} assigns a non-negative weight to every vertex.

\begin{theorem}\label{thm:main-intro}
A class $\C$ of graphs is monadically stable if and only if it is \emph{flip-packable}: for every
radius $r$ there is $k$ such that for every $\varepsilon>0$ and every $m$ there are $\delta>0$ and
$t$ with the following property. For every $G\in\C$ and every weighting of $G$, there are $t$
graphs, each obtained from $G$ by at most $k$ flips, such that every set carrying an
$\varepsilon$-fraction of the weight, none of whose vertices carries a $\delta$-fraction, contains
$m$ vertices pairwise at distance more than $r$ in one of them.
\end{theorem}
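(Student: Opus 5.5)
We prove the two directions separately. The backward direction is by contraposition, starting from the characterisation of non-monadic-stability by flips recalled in \Cref{sec:prelims}. If $\C$ is not monadically stable, then either it is not monadically dependent, or it is monadically dependent but contains arbitrarily large flipped ladders (half-graphs) as $r$-flipped configurations. The latter is the flip-flatness/flipper-game obstruction for stability. In the first case $\C$ is not flip-separable, by the result of Bonnet et al. Since $2$-flip-packability for some $t$ trivially implies flip-separability with $tk$ flips (apply all $t$ flip sets at once, noting that distances only grow under a common refinement of the flips), $\C$ is not even $2$-flip-packable.

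In the second case we take a large ladder $a_1,\dots,a_n,b_1,\dots,b_n$ with $a_i \sim b_j$ iff $i\le j$, up to boundedly many flips and $r$-subdivisions, and weight its vertices uniformly. A fixed graph obtained by $k$ flips is determined by a partition into at most $2^k$ parts. By Ramsey, a long interval of indices is monochromatic in this partition, and on such an interval the flipped ladder is still a ladder or a co-ladder. In it, any set of $m\ge 3$ vertices contains two at distance at most $2$ (or $r$ in the subdivided case). With $t$ graphs we intersect the Ramsey refinements $t$ times. Choosing $n$ large relative to $t$ leaves an interval of density at least $\varepsilon$ that is homogeneous in all $t$ graphs, which yields a heavy set that no graph scatters. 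The key point is that $k$ depends only on $r$, while $n$ may depend on $t$.

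For the forward direction we imitate the bounded-height tree argument. Monadic stability gives the flipper game with bounded rounds, or equivalently flip-flatness, where the number of rounds $\ell$ and the flips per round depend only on $r$. We take the flips from one round of the flipper-game strategy for radius, say, $2r$. Following a heavy set $X$, at each stage we have a current graph $G_i$, obtained by at most $i\cdot c(r)$ flips, and a ball region containing a large fraction of $X$. Flip-separability (which holds, since stability implies dependence), applied to the weighting restricted to the current region, lets us split the weight among balls. Either $X$ then meets $m$ balls far apart, and we are done with that graph, or fewer than $m$ balls of some $\Theta(\varepsilon)$-scale cover a $1/m$-fraction of $X$. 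In the latter case Splitter moves to the centre of the heaviest ball and Flipper answers with its flips.

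Since the game ends within $\ell$ rounds, the descent must stop before $\ell$ steps, because the region eventually becomes a single vertex, which is excluded by $\delta$. The candidate graphs are indexed by sequences of choices of heavy balls. The heavy balls at each level are boundedly many (about $1/\varepsilon'$ of them), so $t\le (1/\varepsilon')^{\ell}$. The total number of flips is $k=\ell\cdot c(r)$ plus the separability flips. This forces the main obstacle: the separability flips depend on $\varepsilon$, whereas $k$ must not.

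To address this, I would replace separability by the local-separator version of the flipper game. At each level we only need the Flipper flips, and the $m$ scattered vertices must be found among the ball centres of previous levels, using flip-flatness of the heavy-ball centres to extract $m$ pairwise far vertices. The weight-threshold scales are all chosen in advance from the weighting, so the list is fixed before $X$ appears. For cubic time, each level computes ball weights and Flipper moves in quadratic time, repeated over boundedly many branches and levels.
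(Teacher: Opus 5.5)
Both directions of your proposal have a genuine gap.

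\textbf{From flip-packability to stability.} Your ladder argument fails on the order of quantifiers. To refute flip-packability you must fix $\varepsilon$ and $m$ first and then defeat every $t$. The common refinement of $t$ partitions, each into at most $p$ parts with $p$ depending on $k$, only guarantees a part of density about $p^{-t}$. This density does not improve as $n$ grows, so for large $t$ it falls below any fixed $\varepsilon$, and you get no $\varepsilon$-heavy set that is homogeneous in all $t$ graphs.

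You do not need such a set. Take $\varepsilon=1$ and the indicator weighting $\w_A$ of a set $A$ with $|A|>1/\delta$. Every vertex is then $\delta$-light and $A$ itself is $1$-heavy, so some single flip over at most $k$ parameters $m$-scatters $A$. That is flip-flatness with the same $k$, and \Cref{thm:flipflat} finishes. The case split and the ladder analysis become unnecessary.

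Your first case also rests on a false claim. The union of the $t$ parameter sets does not give one flip dominating all $t$ flips: over $\emptyset$ the path $P_4$ has only itself and its complement as flips, and neither realises the flip-metric. Reaching flip-separability needs the argument of \Cref{lem:light-balls} together with \Cref{fact:conversion}(2), at a loss of a constant factor in the radius.

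\textbf{From stability to flip-packability.} You correctly identify the obstacle but do not overcome it. There are two concrete problems, plus a missing final step.

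First, heavy $r$-balls are not boundedly many. The bound $1/\varepsilon'$ holds only for pairwise disjoint balls; distinct heavy balls may overlap and be unboundedly many. So indexing the candidate graphs by heavy balls makes $t$ grow with $n$. The paper fixes this in three steps:
\begin{itemize}
\item take an inclusion-maximal family of pairwise disjoint balls $B^r_{S'}(v_i)$ of weight at least $\varepsilon/m$, of which there are at most $m/\varepsilon$;
\item note, by the triangle inequality, that every heavy $r$-ball meets one of them and hence lies inside $B^{3r}_{S'}(v_i)$;
\item choose Flipper's move at radius $3r$, so that the rank drops on these enlarged balls.
\end{itemize}
This is also why the flip-metric $\dist_{S'}$ over the accumulated parameters, a genuine metric on all of $V(G)$, is used rather than balls of a flipped graph inside a shrinking arena.

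Second, your repair does not work. Finding the $m$ scattered vertices among earlier ball centres via flip-flatness fails because centres need not lie in $X$, and there are only boundedly many of them, so flatness says nothing about them. The missing idea is to drop flip-separability entirely and make the dichotomy in the flip-metric over Flipper's parameters $S'$ alone.
\begin{itemize}
\item Either $\ind^r_{S'}(X)\ge m$, and the list entry for the current node already serves $X$.
\item Or a maximal $r$-independent $Z\subseteq X$ has fewer than $m$ elements. Then the balls around $Z$ cover $X$, one of them carries an $\varepsilon/m$-fraction of its weight, and you descend into that ball.
\end{itemize}
The only parameters ever used are then Flipper's moves, so $k$ is the rank bound and depends on $r$ alone. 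The descent bottoms out when the arena lies inside the parameters. There the choice $\delta=\varepsilon/m^{k+1}$ guarantees that the surviving part of $X$ has more than $m$ vertices, all isolated by a flip over the parameters.

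Finally, $m$ elements far apart in the flip-metric are not yet far apart in a single flip. Run everything with target $m'=\Rcal(m,\Xi(k))$ instead of $m$, colour each pair by a flip over $F$ that separates it, and apply Ramsey's theorem.
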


\begin{figure}[ht]
\centering
\includegraphics[width=0.92\textwidth]{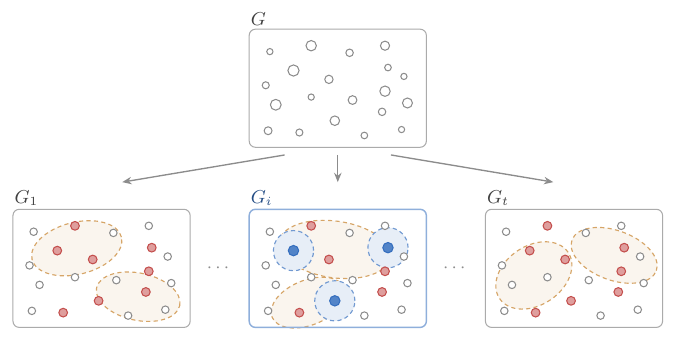}
\caption{\small Flip-packability. From $G$ and a weighting, drawn as the sizes of the vertices,
one computes $t$ graphs $G_1,\dots,G_t$, each obtained from $G$ by at most $k$ flips; the orange shaded
regions mark the flipped parts. Only afterwards is an $\varepsilon$-heavy set $X$ of
$\delta$-light vertices presented, drawn in red. One of the $t$ graphs then contains $m$ elements
of $X$ pairwise at distance more than $r$, drawn in blue. The
number $k$ of flips depends on $r$ alone; only $t$ and $\delta$ depend on $\varepsilon$ and $m$.}
\label{fig:packing-op}
\end{figure}

See \Cref{fig:packing-op} for an illustration of flip-packability. Notice that when $\varepsilon=1$ only the support of the weighting needs serving, and the
property essentially becomes \emph{flip-flatness} --- after boundedly many flips, every large
vertex set has a large $r$-scattered subset, the dense analogue of the \emph{uniform
quasi-wideness} of nowhere dense classes \cite{NOdM11} --- which characterises monadic stability
\cite{indiscernibles}. Put differently: flip-flatness scatters many elements of a single large set,
flip-separability scatters two elements of every dense subset, and flip-packability scatters many
elements of every dense subset, all from one family fixed in advance. So \Cref{thm:main-intro} is
not a new necessary condition but the statement that monadic stability implies a considerably more
uniform one than either property it descends from.

Moreover, using a
technique of Bonnet et al.\ \cite{separability}, the $t$ flipped graphs in the definition of flip-packability can essentially be merged
into a single one in which every heavy set of light vertices still has $m$ scattered elements. But
the merging is precisely what loses the information that the number of flips producing each graph
depends on the radius alone: the merged graph is obtained by a number of flips that grows with
$\varepsilon$ and $m$, and once that dependence is allowed the statement no longer separates
stability from dependence, since flip-separability with tolerance $\varepsilon/m$ already provides
it. \Cref{thm:main-intro} therefore rests on two features of the family at once, its two-tier shape
and the $m$ elements it scatters. It is natural to ask how much each contributes, and our second
result answers this by locating the case $m=2$ exactly.

\begin{theorem}\label{thm:2packing-intro}
A class $\C$ of graphs is monadically dependent if and only if it is \emph{$2$-flip-packable},
that is, satisfies the case $m=2$ of flip-packability.
\end{theorem}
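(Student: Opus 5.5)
Both directions go through flip-separability, which Bonnet et al.\ \cite{separability} show characterises monadic dependence. The backward direction is the easier one. Suppose $\C$ is $2$-flip-packable, fix $r$, and let $k=k(r)$. For given $\varepsilon$, the case $m=2$ yields $\delta$ and $t$. Given $G$ and a weighting, let $G_1,\dots,G_t$ be the graphs it provides. Each $G_i$ arises from $G$ by at most $k$ flips, i.e.\ by a partition of $V(G)$ into at most $2^k$ parts, adjacency between two parts being flipped or not. Take the common refinement $\Pcal$ of these $t$ partitions; it has at most $2^{kt}$ parts. I would then apply the merging technique of \cite{separability} alluded to above: by a $\Pcal$-flip of $G$ together with a bounded number of further flips (using the distance-$r$ structure of each $G_i$), we obtain one graph $H$ in which every pair that is far apart in some $G_i$ remains at distance more than $r'$ in $H$, for a suitably reduced radius. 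Then every $\varepsilon$-heavy set of $\delta$-light vertices has two elements far apart in $H$. So no $r'$-ball of $H$ contains an $\varepsilon$-heavy set of $\delta$-light vertices, which is flip-separability with $\varepsilon$ doubled. Replacing $r$ by a larger radius at the start absorbs the loss $r\mapsto r'$. Hence $\C$ is monadically dependent.

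For the forward direction, the aim is to mimic the tree-centroid picture from the introduction: list in advance the flips that separability would use while following $X$. Fix $r$, and let $k_0$ be the number of flips flip-separability needs for radius $r$ and some fixed constant tolerance, say $1/2$. This number depends on $r$ alone. Set $k=k_0$. Given a weighting $w$, apply flip-separability with tolerance $1/2$ to obtain a $k_0$-flip $G'$ in which every $r$-ball has weight at most $w(G)/2$, apart from balls around the few heavy vertices. This $G'$ is the first list entry. For a set $X$ that is not split in $G'$, $X$ lies in some $r$-ball $B$ of $G'$, which carries at most half the weight. Recurse on the weighting $w|_B$ (normalised), again with tolerance $1/2$, applied to $G$ itself so that each list entry is still only $k_0$ flips away from $G$. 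After $\lceil\log(1/\varepsilon)\rceil$ levels, $X$ can no longer fit inside a ball, so it is split at some level.

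The main obstacle is that the balls $B$ are not boundedly many, so the list as described is infinite. I would try to overcome this by choosing a bounded family of representative heavy balls. Since the relevant weight exceeds $\varepsilon w(G)$ and the balls of $G'$ have radius $r$, one would hope for a packing or VC-type bound: monadic dependence gives bounded VC-dimension of $r$-balls in flips, so an $\varepsilon$-net argument yields boundedly many centres whose balls cover every heavy ball up to a loss in radius, say $r\mapsto 3r$. Consequently, the recursion tree has bounded branching $\mathrm{poly}(1/\varepsilon)$ and depth $\log(1/\varepsilon)$, giving $t$ bounded in $\varepsilon$ alone. The second subtlety is that separation inside the restricted weighting on $B$ must yield two vertices far apart in the flipped graph of the whole of $G$. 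Flip-separability applied to $G$ with the weighting supported on the $3r$-ball handles this directly, since distances are measured in the flipped $G$. Finally, the light-vertex exception in each recursive application is handled by choosing $\delta$ smaller than $\varepsilon$ times the tolerances used at all levels.
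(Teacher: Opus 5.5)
Your proposal has two gaps, one in each direction. Both can be repaired, and once repaired the forward direction is a valid route that differs from the paper's.

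\textbf{Backward direction.} This breaks at its last step. The merge itself is fine. Each $G_i$ is a flip over the common refinement $\Pcal$, so \cite[Lemma~8]{separability}, the argument behind \Cref{fact:conversion}(2), gives one flip $H$ over a bounded refinement in which $\dist_{G_i}(a,b)>6r'$ forces $\dist_H(a,b)>r'$.

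The problem is that the net only speaks about sets of $\delta$-light vertices. \Cref{def:flipsep} bounds the \emph{whole} weight of $B^{r'}_H(u)$ for every $\varepsilon$-light $u$, including vertices of weight in $[\delta,\varepsilon)$. Those vertices are $\varepsilon$-light, invisible to the net, and can carry almost all the weight. For instance, weight a clique uniformly on $N$ vertices with $1/\varepsilon<N\le1/\delta$. There is then no $\varepsilon$-heavy set of $\delta$-light vertices, so $\{\emptyset\}$ is a valid net. The clique itself satisfies your guarantee on $H$, yet every ball around an $\varepsilon$-light vertex has weight $1$. Doubling $\varepsilon$ does not help.

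The missing step is the one in \Cref{lem:light-balls}. Make the at most $1/\delta$ vertices that are $\delta$-heavy into singleton parts (parameters) before merging, so that they end up isolated. Then balls around $\delta$-light vertices consist of $\delta$-light vertices, where your argument applies (with separation at radius greater than $2r'$). Balls around the remaining $\varepsilon$-light vertices are singletons.

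\textbf{Forward direction.} The paper reads the two-tier family off the compression argument of \cite[Lemma~20]{separability} (\Cref{fact:sparsifying-family}), and has to inspect that proof to see that set sizes depend on $r$ alone. You instead use \Cref{thm:flipsep} as a black box at fixed tolerance $\frac12$ and pay for the recursion in the length of the list, a finite-radius analogue of \Cref{lem:app-halving}. That idea works, but as written the halving is unjustified:
\begin{itemize}
\item Separability at radius $r$ makes $r$-balls light, but after your covering step you recurse into $3r$-balls, whose weight nothing controls. So the regions need not shrink, and the depth bound $\lceil\log(1/\varepsilon)\rceil$ fails.
\item The centres of an $\varepsilon$-net may be $\frac12$-heavy for the restricted weighting, and separability exempts exactly those balls.
\end{itemize}

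The repair is as follows. Apply separability at radius $3r$ and tolerance $\frac12$; its constant still depends on $r$ alone. Apply it to $\w$ restricted to the current region $D$, meaning the intersection of the regions chosen so far, and only for $\w(D)\ge\varepsilon$; call the resulting flip $G'_D$. Replace the VC argument by plain packing: take a maximal family of pairwise disjoint sets $D\cap B^r_{G'_D}(v_i)$ of weight at least $\varepsilon$ with $\delta$-light centres. There are at most $1/\varepsilon$ of them.

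A heavy $X\subseteq D\cap B^r_{G'_D}(x)$ with $x\in X$ then lies, by maximality and the triangle inequality, in some $D\cap B^{3r}_{G'_D}(v_i)$. That set weighs less than $\frac12\w(D)$, because $\w(v_i)<\delta\le\varepsilon/2\le\frac12\w(D)$.

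Repaired this way, your forward direction needs only the statement of \Cref{thm:flipsep}. The paper's route instead exposes the family inside the proof of \cite{separability}, which is what \Cref{thm:algo-dependent} later makes effective.
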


So the two-tier form is already present in monadically dependent classes --- just as the list of
single centroids exists in every tree --- and what monadic stability adds is exactly the passage
from two scattered elements to $m$, with the number of flips per graph still independent of $m$.
The proof sandwiches $2$-flip-packability between monadic dependence and flip-separability. The
family for dependent classes is read off the proof of \cite{separability}, which compresses a
large family into a bounded one of exactly this shape and merges it into a single flip only at the
very end; conversely, the merging just described turns any such family into the flip of
flip-separability. So $2$-flip-packability is the two-tier form of flip-separability, and the two
are equivalent. 

The proof of \Cref{thm:main-intro} on the other hand is an induction on the length of the \emph{Flipper game} of Gajarsk\'y et al.\ \cite{flippergame}, a two-player localisation game in which one player repeatedly flips the graph and the other zooms into a ball, and whose bounded length characterises monadic stability; in the analogy above it plays the role of the height of the tree. At each round the arena is packed with disjoint balls of large weight, of which only boundedly many fit, and a heavy set either is already pulled apart by the flips played so far or concentrates inside one of the packed balls, where the game is a round closer to its end. 
Read
algorithmically, the result is a preprocessing statement. The family of flipped graphs plays the
role of an $\varepsilon$-net \cite{HW87}: it is computed from the weighting, before any query,
and then serves every sufficiently heavy and balanced query set. For monadically stable classes
this is effective, by driving the induction with the algorithmic Flipper strategy of
\cite{flippergame-full}.

\begin{theorem}\label{thm:algo-intro}
Let $\C$ be a monadically stable class of graphs, let $r$, $\varepsilon$ and $m$ be given, and let
$k$, $t$ and $\delta$ be the constants of \Cref{thm:main-intro} for them. There is an algorithm
that, given an $n$-vertex graph $G\in\C$ and a weighting of $G$ with rational weights, computes
in time $\Ocal_{\C,r,\varepsilon,m}(n^3)$ a family of $t$ graphs, each obtained from $G$ by at
most $k$ flips, such that every set carrying an $\varepsilon$-fraction of the weight, none of
whose vertices carries a $\delta$-fraction, contains $m$ vertices pairwise at distance more than
$r$ in one of them. Arithmetic on the weights is counted at unit cost.
\end{theorem}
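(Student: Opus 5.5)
The plan is to make the existential proof of \Cref{thm:main-intro} effective. That proof is an induction on the length of the Flipper game: at each round the arena is packed with heavy disjoint balls, and the recursion proceeds into each packed ball. Three things need to be computable: (i) Flipper's moves, (ii) the maximal packings of heavy balls, and (iii) the bookkeeping that assembles the final family of $t$ flipped graphs. For (i), I would invoke the algorithmic Flipper strategy of \cite{flippergame-full}. Given a monadically stable class $\C$ and radius $r$, it computes in time polynomial in $n$ (I expect $\Ocal(n^2)$ per move) a flip of bounded size that is a winning move against any Connector choice. The number of rounds is bounded by $\ell=\ell(\C,r)$, so the recursion depth is a constant.

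The recursion tree itself is bounded. The branching factor at each node is the number of disjoint $\varepsilon'$-heavy balls that can be packed, which is at most $1/\varepsilon'$ for the current threshold $\varepsilon'$. The depth is at most $\ell$. So the tree has $\Ocal_{\C,r,\varepsilon,m}(1)$ nodes, and each leaf-to-root path contributes one output graph: $G$ flipped by the composition of the flips played along that path. This gives at most $k$ flips, with $k$ depending on $r$ only, exactly as in the existential proof. The algorithm therefore runs the induction verbatim. At a node with current arena $A$, given as a vertex set together with the accumulated flip $G'$ of $G$, it proceeds as follows:
\begin{enumerate}
\item Compute Flipper's move.
\item Compute all balls of radius $r$ (or the enlarged radius the proof uses) in the flipped graph restricted to $A$, by running BFS from every vertex. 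This costs $\Ocal(n)$ BFS runs, each of $\Ocal(n^2)$ time, since flipped graphs may be dense; the total is $\Ocal(n^3)$.
\item From the ball weights, greedily select a maximal family of pairwise disjoint heavy balls. This is where the rational arithmetic enters, at unit cost. Greedy maximality is what the existential argument actually uses: either $X$ is scattered by the flips so far, or it concentrates in a ball meeting a packed one, and we enlarge radii accordingly.
\item Recurse into each selected ball, with Connector's move taken to be that ball.
\end{enumerate}

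Summing over the constantly many nodes gives $\Ocal_{\C,r,\varepsilon,m}(n^3)$, dominated by the all-pairs distance computation in flipped graphs. Each flipped graph is represented implicitly, as $G$ together with a bounded list of flip sets. Adjacency queries then cost $\Ocal(1)$, and BFS on the flipped graph costs $\Ocal(n^2)$.

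The main obstacle is the match between the existential proof and what the algorithm computes. Two points need checking. First, the existential induction may choose packings or Connector moves non-constructively, for instance using a ball whose centre is tailored to the unknown set $X$. I would restructure it so that every ball the proof could use is among the computed packed balls, or within bounded enlargement of them, which is the same listing-in-advance principle as the tree centroids. Second, the algorithmic Flipper strategy of \cite{flippergame-full} must be applied with the radius used inside the induction, possibly $2r$ or larger, and must be robust to an arena that is a ball rather than the whole graph. It must also be robust to Connector moves that are not genuine single balls of the game but packed heavy balls. I expect the former to be handled by running the game at the enlarged radius from the start, which only changes $k$ as a function of $r$. The latter requires that the strategy's progress measure applies to any choice of Connector's ball, which holds because winning strategies are against arbitrary Connector play.
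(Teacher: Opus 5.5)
\textbf{Main gap: the Localiser moves in your recursion are not legal moves of the game that $\mathrm{flip}^\star$ wins.} Your skeleton matches the paper's proof of \Cref{thm:algo}: greedy packing at every node, BFS from every vertex in $\Ocal(n^3)$, and a recursion tree of size independent of $n$. But the step you dismiss at the end is where the real work lies, and your justification for it is not valid.

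The strategy $\mathrm{flip}^\star$ of \cite[Theorem~11.2]{flippergame-full} is only guaranteed to win the shrinking (Induced-Subgraph) game. There, Localiser's answer must lie in a radius-$r$ ball of the current \emph{arena}, with distances measured in the subgraph induced on the arena. Your recursion must instead descend into $A\cap B^{3r}_{G'}(v_j)$, with distances taken in the \emph{whole} flipped graph $G'$. This is forced, because the dichotomy driving the induction has to be taken in $G'$: either $X$ has $m$ elements pairwise at distance more than $r$, or it concentrates in a single $r$-ball. The output graphs are flips of all of $G$, so only $G'$-distances count.

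Shortest paths may leave $A$, so such a set is in general not contained in any $3r$-ball of $G'[A]$. It is therefore not a legal Localiser move. The fact that winning strategies work against arbitrary Connector play says nothing about illegal moves, and your depth bound $\ell$ is then unjustified.

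Computing the balls inside $G'[A]$ instead, as your step~2 may be read, makes the moves legal but breaks the dichotomy. A heavy $X$ can be scattered in $G'[A]$, and so concentrated in no arena ball, without being scattered in $G'$; then nothing in your family serves it.

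The missing ingredient is \Cref{lem:algo-simulate}: $\mathrm{flip}^\star$ remains $\ell$-winning against a Localiser whose answer is merely contained in some radius-$r$ ball of $G\oplus F_1\oplus\dots\oplus F_i$. This is not a formal consequence of the cited theorem. It has to be extracted from its proof, where \cite[Claim~11.3]{flippergame-full} turns out to use only that containment. The paper then proceeds as follows:
\begin{itemize}
\item In \Cref{thm:rank-algo} it simulates $\mathrm{flip}^\star$ in the confining game, converting each atomic flip into boundedly many parameters by \Cref{fact:conversion}(1).
\item It runs Stage~1 of \Cref{lem:winning-packing} in the flip-metric with target $m'=\Rcal(m,\Xi(k))$.
\item It recovers single flips by Ramsey.
\end{itemize}

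\textbf{Secondary error: output one graph per node, not per leaf-to-root path.} Since you compose atomic flips, later flips can create edges and shorten distances. So a set found $r$-scattered in the graph current at an internal node need not be scattered in the composition at a leaf below it. You must output the current graph at every node; there are still boundedly many. In the paper's setting this issue does not arise, because $\flip(G/F)\subseteq\flip(G/S)$ for $F\subseteq S$.

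With \Cref{lem:algo-simulate} and this fix, your direct atomic-flip route would go through. It would even bypass the metric conversion and the Ramsey stage.

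Your other worry, about packings tailored to $X$, is unfounded: Stage~1 already fixes the packing before $X$ is presented.
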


In fact, $1/\delta$ and $t$ may be taken polynomial in $m$ and $1/\varepsilon$, with exponents
depending on $r$ alone (\Cref{cor:polynomial}); at $\varepsilon=1$ this recovers the polynomial
and algorithmic flip-flatness of \cite[Theorem~1.4]{indiscernibles}, and for general $\varepsilon$
it says that serving every dense subset, rather than the single large set flatness is handed,
costs only polynomially more. 
Moreover, the same algorithm computes in cubic time the single flip of flip-separability on every monadically stable class. By making effective the compression argument of \cite{separability}, we show that the same holds, with a larger exponent, on all monadically dependent classes (\Cref{thm:algo-dependent,cor:algo-sep}).

\subparagraph*{Everything is packing.}
\Cref{thm:main-intro,thm:2packing-intro} read $m$ as a dial: unbounded, it gives monadic stability;
fixed to $2$, monadic dependence. This places flip-packability in a wider picture. Two further
features of the definition can be varied independently of $m$: the sparsifying operation, which may
be flips or vertex deletions, and the radius, which may be finite or infinite, in which case balls
are connected components. Each of the eight resulting statements characterises a known notion.

\begin{theorem}\label{thm:everything}
For a class $\C$ of graphs, with no closure assumptions, the eight properties obtained from
flip-packability by fixing the operation, the radius, and whether $m$ is unbounded or fixed to $2$
characterise the notions in \Cref{tab:variants-intro}.
\end{theorem}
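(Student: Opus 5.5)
We prove the eight equivalences separately. Each cell of \Cref{tab:variants-intro} has a forward direction (the notion implies the packing property) and a converse (the property forces the notion). The two flip cells with finite radius are \Cref{thm:main-intro,thm:2packing-intro}: monadic stability for unbounded $m$, and monadic dependence for $m=2$. The remaining six cells are organised along the two axes. Replacing flips by deletions should give the sparse counterparts: nowhere denseness for $m=2$ and bounded expansion-free local depth, i.e.\ nowhere denseness again in the form of uniform quasi-wideness. Since the class is not assumed closed under subgraphs, we expect the unbounded-$m$ deletion cell to be nowhere dense as well, and the $m=2$ deletion cell to match the deletion form of separability of \cite{nevsetvril2016structural}. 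Replacing the radius by $\infty$ should give the global notions: bounded treedepth and bounded treewidth for deletions, and bounded shrubdepth and bounded cliquewidth for flips. We will confirm this assignment against the table before writing anything.

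For the forward directions we copy the tree arguments of the introduction, replacing the tree by the relevant decomposition. With radius $\infty$ and deletions, a treedepth decomposition of height $h$ plays the role of the rooted tree. The candidate list consists of the ancestor sets of the at most $m^h/\varepsilon$-many heavy nodes met by the descent, and deleting the root path of a node (at most $h$ vertices) separates its $m$ subtrees. For $m=2$ and treewidth, we use the centroid list of a tree decomposition of width $k$: each entry is one bag, so $k+1$ deletions. Flips replace deletions in the same way. For shrubdepth, a tree model of bounded depth and label set $L$ lets boundedly many flips between label classes at a node disconnect its child subtrees. For cliquewidth, balanced separators in a bounded-rank decomposition (rank-width) are realised by boundedly many flips, and the centroid argument then gives $m=2$. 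The finite-radius deletion cells follow from the weighted local separators and uniform quasi-wideness on nowhere dense classes, by the same Splitter-game induction that \Cref{thm:main-intro} runs on the Flipper game.

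For the converses we produce obstructions. Suppose a class fails the notion. Then it contains the canonical witnesses: long paths for treedepth, large grids or walls for treewidth, and their flip-analogues for shrubdepth and cliquewidth. These are long paths or half-graphs up to flips, and comparability grids or large rank-width cut-matrices. For the finite-radius sparse cells the witnesses are $r$-subdivided cliques. On each witness we fix the uniform weighting and show that every family of $t$ graphs, each obtained by $k$ deletions or flips with $k$ depending only on $r$, misses some heavy light set $X$.
\begin{itemize}
\item For $m=2$ without bounded width, a counting argument over the $t$ bounded modifications leaves a heavy, well-connected piece that no single modification disconnects.
\item For unbounded $m$ with only bounded width, on a long path the $k$ deletions leave at most $k+1$ components. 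So $X$, taken to be an interval of length $\varepsilon n$, meets at most $k+1$ of them in any single graph. Choosing $m>k+1$ defeats the property, since $k$ must be chosen before $m$.
\end{itemize}

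The main obstacle is the dense global converses. We must show that bounded flips with $k$ independent of $m$ cannot scatter $m$ points of a heavy set in graphs of unbounded shrubdepth. For this we would use the characterisation of bounded shrubdepth by excluding long paths as induced subgraphs together with half-graphs, up to bounded cliquewidth. We would then argue that $k$ flips partition the vertices into $2^k$ classes, and that inside a long induced path the flipped graph still connects all but boundedly many segments. This caps the number of components met by any interval at a function of $k$.
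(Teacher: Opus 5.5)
\paragraph{Forward directions.} These are essentially the paper's. Your descent along a bounded-height decomposition is the rank induction of \Cref{lem:winning-packing}; the paper runs it on $\rk^\del_\infty$ and $\rk_\infty$ rather than on elimination forests and tree models. Note that the list must consist of \emph{all} sufficiently heavy nodes, fixed before $X$, not those ``met by the descent''. Note also that $\delta$ has to absorb the elements of $X$ deleted on root paths. Your centroid list is \Cref{lem:app-halving}.

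\paragraph{Converses: the gap.} Here you depart from the paper, and here the proposal has a genuine gap. The paper builds no obstructions at all. It weakens packability to flatness via $\varepsilon=1$ and an indicator weighting (\Cref{lem:app-packing-flat}). It weakens $2$-packability to breakability via the light-balls lemma, a split and bipartite Ramsey (\Cref{lem:app-m2-break}). It then imports the characterisations of \cite[Section~18]{flipbreak-full}. Your replacement breaks down in the row of flips at radius $\infty$.

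\paragraph{Shrubdepth.} The characterisation you intend to use is false. Take the path $p_1q_1p_2q_2\cdots q_{n-1}p_n$ and add all edges among the $p_i$. These graphs have the following properties.
\begin{itemize}
\item They have cliquewidth at most $5$.
\item Every induced path in them has at most four vertices: it contains at most two $p_i$, and no $q_i$ can be an inner vertex, since its two neighbours are adjacent.
\item They contain no semi-induced half-graph $\{x_iy_j: i\le j\}$ of order $6$. Indeed $x_1,\dots,x_4$ and $y_3,\dots,y_6$ each have at least three neighbours, so they are $p$'s, which makes $x_4y_3$ an edge.
\end{itemize}
Yet one flip of $\{p_1,\dots,p_n\}$ turns these graphs into paths, so their shrubdepth is unbounded. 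The true obstructions are semi-induced paths, bipartite complements of paths and half-graphs (Ossona de Mendez, Pilipczuk and Siebertz). Your component count would have to be redone for those, allowing arbitrary edges inside the two sides.

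\paragraph{Cliquewidth.} This converse, on which the all-$m$ case of unbounded cliquewidth also rests, is not yet an argument.
\begin{itemize}
\item Comparability grids occur in graphs of large rankwidth only as vertex-minors, reached by unboundedly many local complementations. So they say nothing directly about $k$ flips of the host graph.
\item The robust handle is cut-rank, and there your ``counting argument over the $t$ modifications'' has nothing to count. Large rankwidth gives a set $W$ that no cut of small cut-rank splits in a balanced way. Hence every single bounded flip has a component containing more than two thirds of $W$. But one $X$ must be served against all $t\,\Xi(k)$ flips at once, and the common intersection of these components need not be heavy; it may be empty once there are four of them.
\end{itemize}
The missing idea is the merge in \Cref{lem:light-balls}. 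Put $\bigcup\Fcal$ and the at most $1/\delta$ heavy vertices into one bounded set $S$; over $S$, every component of the flip-metric around a light vertex is light. Then pass to a single flip by \Cref{fact:conversion}(2). It is this one flip with light components that contradicts large rankwidth, or, in the paper, yields breakability.

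\paragraph{Closure assumptions.} With no closure assumptions, the class need not contain your witnesses. It only contains graphs that contain them as subgraphs, minors or induced subgraphs. For flips, note that an $F$-flip of $G$ restricts to a partition flip of the witness with at most $2^{|F|}+|F|$ parts, not to a definable flip of it. So each obstruction has to be proved against partition flips.
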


\begin{table}[htbp]
  \caption{The variants of flip-packability, obtained by varying the sparsifying operation, the
  radius, and the number of elements scattered.  
  Precise statements are in \Cref{sec:variants} and proofs in \Cref{app:variants} of the appendix.}
  \label{tab:variants-intro}
  \centering
  \begin{tabular}{@{}llll@{}}
    \toprule
    Operation & Radius & All $m$ & $m=2$ \\
    \midrule
    deletions & $r\in\N$  & nowhere dense & nowhere dense \\
    flips     & $r\in\N$  & monadically stable  & monadically dependent \\
    deletions & $\infty$  & bounded treedepth & bounded treewidth \\
    flips     & $\infty$  & bounded shrubdepth & bounded cliquewidth \\
    \bottomrule
  \end{tabular}
\end{table}

The tree example is the third row read across both columns: with deletions and components, the
list of centroids is the $m=2$ entry and the bounded-height construction the all-$m$ entry. For a
tree, bounded treewidth is automatic and bounded treedepth is bounded height, so the row divides
exactly where the analogy did. The scheme itself is not new. Dreier, M\"ahlmann and Toru\'nczyk
classify two properties along the same two axes --- flip-flatness, and \emph{flip-breakability},
under which after boundedly many flips every large vertex set splits into two large parts far from
each other, which characterises monadic dependence \cite{flipbreak} --- and prove the eight
characterisations that fill the table \cite[Sections~1 and 18]{flipbreak-full}. What is new is that
all eight are instances of a single definition. Flatness and breakability are essentially the two
values that $m$ can take, unbounded and bounded, so what were two two-dimensional families become
one three-dimensional family, and the relation between them becomes a parameter of the definition. While the two could be unified more cheaply, by asking for $m$ large parts pairwise far apart ---
flatness being the case of singleton parts and breakability that of two --- such a property would
still be handed the set it must split, whereas flip-packability is uniform in that it scatters every dense subset from
one family fixed in advance.

The eight statements are proved along the same two arguments, once the operation and the radius
are changed. In each case the packing property is first weakened to the corresponding variant of
flatness, by taking $\varepsilon=1$, or of breakability, by taking $m=2$ and splitting a large set
along the balls produced; these are the properties \cite[Section~18]{flipbreak-full} shows to
characterise the eight entries, so only the converses remain. For the left-hand column the
converse is essentially the induction used for \Cref{thm:main-intro}, ran on a suitable game rank. 
For the right-hand column there is no such rank, and the family is obtained instead by listing a balanced separator for every region heavy enough to hold $X$, at
every level of a recursion of depth $\Ocal(\log\frac1\varepsilon)$, much like the list of centroids of the tree analogy. 

\subparagraph*{Relational structures.}
Flip-packability makes sense for classes of structures in a finite relational signature, with the
flips of Przybyszewski and Toru\'nczyk \cite{flipfork}: a flip of a structure over a set of
parameters is a structure on the same domain, in a possibly different signature, that is
quantifier-free interdefinable with the original using those parameters. In \Cref{app:relational}
of the appendix we show that the characterisation of \Cref{thm:main-intro} extends to the
relational setting. The proof is the same, with the relational Flipper theorem of \cite{flipfork}
in place of \cite{flippergame}, modulo some additional properties of relational flips. 

\subparagraph*{Where the property comes from.}
Flip-packability was found by passing to the limit. In a companion paper \cite{companion} we show
that the infinitary forms of the flip characterisations --- ``large'' replaced by ``infinite'',
``boundedly many flips'' by ``finitely many'', and asked of every model of the theory of $\C$ ---
all characterise monadic stability, including those that characterise monadic dependence in the
finite. The limit form of flip-separability is particularly simple: for every non-principal
ultrafilter, some finite flip makes every $r$-ball null. Three parameters of the finitary property
are vacuous there --- the tolerance, the number of scattered elements, and the family of flips ---
and restoring them by hand is exactly flip-packability. The present paper is self-contained and
finitary.

\subparagraph*{Organisation.}
\Cref{sec:prelims} fixes notation and recalls the flip characterisations of the two monadic
notions. \Cref{sec:packing} introduces flip-packability and proves \Cref{thm:main-intro}.
\Cref{sec:backtosep} treats the case $m=2$ and proves \Cref{thm:2packing-intro}.
\Cref{sec:variants} surveys the variants of \Cref{thm:everything}, with proofs in
\Cref{app:variants}, and \Cref{sec:algorithmic} states the algorithmic results, with proofs in
\Cref{app:algorithmic}. \Cref{app:relational}
explains how the results extend to structures in a finite relational signature.

\section{Preliminaries}\label{sec:prelims}

Graphs are finite, simple and undirected. For a graph $G$ we write $N(v)$ for the neighbourhood of
$v$, $\dist_G(u,v)$ for the distance, $B^r_G(v):=\{u:\dist_G(u,v)\le r\}$ for the ball of radius
$r$, and $G-S$ for the graph obtained by deleting the vertices of $S$. A set $A$ is
\emph{$r$-independent} in $G$ if its elements are pairwise at distance greater than $r$, and
$\ind^r_G(X)$ denotes the largest size of an $r$-independent subset of $X$. For a set $V$ we write
$[V]^{\le k}$ for its subsets of size at most $k$. Distances are measured in $\N\cup\{\infty\}$,
and at radius $\infty$ we read ``distance greater than $\infty$'' as ``in distinct connected
components''. A class of graphs is any set of graphs; no closure assumptions are made.

A \emph{weighting} of $G$ is a function $\w:V(G)\to\R_{\ge0}$ that is not identically zero,
extended to sets by $\w(X):=\sum_{v\in X}\w(v)$. A set $X\subseteq V(G)$ is
\emph{$\varepsilon$-heavy} for $\w$ if $\w(X)\ge\varepsilon\cdot\w(V(G))$, and $\w$ is
\emph{$\delta$-balanced} if no vertex is $\delta$-heavy, that is, $\w(v)<\delta\cdot\w(V(G))$ for
all $v$. The \emph{indicator weighting} of a non-empty $A\subseteq V(G)$ is $\w_A:=\mathbf 1_A$; it
is $\delta$-balanced as soon as $|A|>1/\delta$, and a set is $\varepsilon$-heavy for it exactly
when it contains at least $\varepsilon|A|$ elements of $A$. Every notion below is invariant under
scaling $\w$, so in proofs we assume $\w(V(G))=1$, where $\varepsilon$-heavy reads
$\w(X)\ge\varepsilon$ and $\delta$-balanced reads $\w(v)<\delta$.

\subparagraph*{VC-dimension, monadic stability and dependence.}
We use the standard terminology of VC-theory for graphs. A graph $G$ \emph{shatters} a set
$A\subseteq V(G)$ if every $B\subseteq A$ is of the form $N(u)\cap A$ for some vertex $u$. The
\emph{VC-dimension} $\VCdim(G)$ of $G$ is the largest size of a set it shatters, and a class $\C$
has \emph{bounded VC-dimension} if $\VCdim(\C):=\sup\{\VCdim(G):G\in\C\}$ is finite.

By a \textit{vertex-coloured graph} we mean a relational structure on the vertex set
with a binary relation $E$ for adjacency and a unary relation for each colour. A transduction
$\mathsf T$ consists of a finite set of colours $\Sigma$ and a first-order formula $\varphi(x,y)$
in the language of $\Sigma$-coloured graphs. Given a graph $G$, the family $\mathsf T(G)$ consists
of all induced subgraphs of the graphs $H$ with $V(H)=V(G)$ for which some colouring of $G$, that
is, some choice of sets $U_C\subseteq V(G)$ for $C\in\Sigma$, makes $uv\in E(H)$ equivalent to
$\varphi(u,v)$ holding in the coloured graph, for all distinct $u,v\in V(H)$. For a class $\C$ we
put $\mathsf T(\C):=\bigcup_{G\in\C}\mathsf T(G)$, and $\C$ \emph{transduces} a class $\mathcal D$
if $\mathcal D\subseteq\mathsf T(\C)$ for some transduction $\mathsf T$.

A graph class is \emph{monadically dependent} if it does not transduce the class of all graphs,
and \emph{monadically stable} if it does not transduce the class of all \emph{half-graphs}, that
is, the graph-theoretic analogues of linear orders: the bipartite graphs with sides 
$a_1,\dots,a_n$ and $b_1,\dots,b_n$ and edges $a_ib_j$ for $i\le j$. Monadically stable classes are
monadically dependent, and monadically dependent classes have bounded VC-dimension, as every class of
unbounded VC-dimension transduces all graphs.

\subparagraph*{Flips.}
A \emph{flip} of a graph $G$ is the graph obtained by complementing the adjacency relation within
a set $A\subseteq V(G)$, or more generally between two sets $A,B\subseteq V(G)$; a graph is
\emph{obtained from $G$ by $k$ flips} if it results from at most $k$ such operations applied in
sequence. This is the notion of the introduction. As in \cite{separability}, it is more convenient
to view a sequence of flips as a single operation on a partition. For a partition $\Pcal$ of
$V(G)$, a \emph{$\Pcal$-flip} of $G$ is a graph obtained by choosing a set of unordered pairs of
parts, the two members of a pair possibly equal, and complementing the adjacency between the two
parts of every chosen pair. A graph obtained by $k$ flips is a $\Pcal$-flip for a partition into at
most $4^{k}$ parts, and a $\Pcal$-flip with $|\Pcal|\le p$ is obtained by at most $p^2$ flips, so
the two views are interchangeable up to a change of constants.

Following \cite{separability} and the works it builds on \cite{BonnetDGKMST22,flippergame}, we
work with partitions defined from a bounded set of parameter vertices. Fix $S\subseteq V(G)$. The
\emph{$S$-classes} of $G$ are the singletons $\{s\}$ for $s\in S$ and the sets
$\{v\in V(G)\setminus S : N(v)\cap S=A\}$ for $A\subseteq S$, so there are at most $2^{|S|}+|S|$ of
them; we write $\Pcal_S$ for the partition into $S$-classes and $[v]_S$ for the class of $v$. An
\emph{$S$-flip} of $G$ is a $\Pcal_S$-flip, that is, a graph $G'$ on $V(G)$ determined by a set
$\Pcal$ of unordered pairs of $S$-classes by
\[ uv\in E(G')\iff uv\in E(G)\ \oplus\ \{[u]_S,[v]_S\}\in\Pcal. \]
This is the \emph{$S$-definable flip} of \cite{BonnetDGKMST22}, and we write $\flip(G/S)$ for the
set of $S$-flips of $G$. Since there are at most $(2^{|S|}+|S|)^2$ pairs of classes,
\[ |\flip(G/S)|\;\le\;\Xi(|S|):=2^{(2^{|S|}+|S|)^2}, \]
a bound depending on $|S|$ alone, and $\Xi$ is monotone. Three properties are used throughout.
If $F\subseteq S$ then the $S$-classes refine the $F$-classes, so $\flip(G/F)\subseteq\flip(G/S)$.
Every $s\in S$ is isolated in some $S$-flip: $N(s)$ is a union of $S$-classes, and complementing
the pairs $\{\{s\},C\}$ for these classes $C$ removes every edge at $s$. Doing so for all $s\in S$
at once gives an $S$-flip in which $S$ is isolated and which agrees with $G-S$ elsewhere, so flips
subsume vertex deletions. 

Since it is usually irrelevant \emph{which} flip scatters a given set, we quantify over all of them
at once: the \emph{flip-metric over $S$}, introduced by Gajarsk\'y et al.\ \cite{flippergame}, is
given by
\[ \dist_S(u,v):=\max_{G'\in\flip(G/S)}\dist_{G'}(u,v). \]
As a maximum of finitely many metrics with values in $\N\cup\{\infty\}$, this is a valid metric; in
particular it satisfies the triangle inequality, which we use freely. We write $B^r_S(u)$ for its
balls, so that $B^r_S(u)=\bigcap_{G'\in\flip(G/S)}B^r_{G'}(u)$, and note that $B^r_S(s)=\{s\}$ for
$s\in S$ by the isolating flip, and that $\dist_F\le\dist_S$ for $F\subseteq S$. We write
$\ind^r_S(X)$ for the largest size of a subset of $X$ that is $r$-independent in the flip-metric
over $S$, as opposed to $\ind^r_{G'}(X)$ in a single flip $G'$. The two differ, but the flip-metric
is realised by at most $\Xi(|S|)$ flips, which lets us pass from one to the other by a Ramsey
argument.

\subparagraph*{Metric conversion.}
Definable flips are less powerful than partition flips, and the flip-metric is not the metric of
any one graph; still, both gaps close up to a constant factor in the radius. 

\begin{fact}[metric conversion]\label{fact:conversion}
Let $G$ be a graph.
\begin{enumerate}
\item \cite[Lemma~4]{separability} If $G$ has VC-dimension at most $d$, then for every partition
$\Pcal$ of $V(G)$ and every $\Pcal$-flip $H$ of $G$ there are $S\subseteq V(G)$ with
$|S|\le\Ocal(d|\Pcal|^2)$ and $H'\in\flip(G/S)$ with $B^r_{H'}(v)\subseteq B^{5r}_H(v)$ for all
$v\in V(G)$ and $r\in\N$; in particular every component of $H'$ is contained in one of $H$.
\item \cite[Lemma~8]{separability} For every $S\subseteq V(G)$ with $|S|\le k$ there are
$T\supseteq S$ with $|T|\le k+ (2^k+k)^2$ and $G'\in\flip(G/T)$ with $B^r_{G'}(v)\subseteq B^{6r}_S(v)$
for all $v\in V(G)$ and $r\in\N$, computable from $G$ and $S$ in time $\Ocal(4^kn^2)$.
\end{enumerate}
\end{fact}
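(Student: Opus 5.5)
Both items are quoted from \cite{separability}, so the plan is to reconstruct their arguments in a uniform way. The common shape is this: we exhibit a definable flip $H'$ and then prove a \emph{local edge certificate}. Each edge $uv$ of $H'$ must be matched by a path of bounded length $c$ between $u$ and $v$ in the target metric, with $c=5$ for the first item and $c=6$ for the second. Once this holds, concatenating the certificates along a path of length $r$ in $H'$ gives $B^r_{H'}(v)\subseteq B^{cr}(v)$. The component statement is the case $r=\infty$. So in both items everything reduces to choosing the parameter set and the flip rule, and then checking single edges.

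For the second item, fix $S$ with $|S|\le k$. I would enlarge it to $T$ by adding, for every unordered pair $\{C,D\}$ of $S$-classes, one representative vertex where one exists. For a non-homogeneous pair, the representative is chosen to witness the pair. There are at most $(2^k+k)^2$ pairs, which gives the stated bound on $|T|$. Let $G'$ be the $T$-flip that isolates $T$ and, on each pair of $S$-classes, complements the adjacency exactly when the representative configuration shows the pair to be ``mostly adjacent''.

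Now take an edge $uv$ of $G'$ with $u\in C$ and $v\in D$. I would show that in every $S$-flip $G''$, the vertices $u$ and $v$ are joined by a path of length at most $6$. The path goes through the representatives of $\{C,D\}$ and of $\{C,C\}$, $\{D,D\}$, using that each $S$-flip treats all of $C\times D$ uniformly. The case analysis is over whether $G''$ flips $\{C,D\}$, and the refinement of $T$-classes over $S$-classes is what lets the representatives see $u$ and $v$ correctly. Computing the classes takes $\Ocal(kn)$ time and scanning pairs of vertices to find representatives takes $\Ocal(4^kn^2)$, which gives the running time.

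For the first item, let $\Pcal$ be a partition and $H$ a $\Pcal$-flip. For each pair of parts $(P,Q)$ I would use bounded VC-dimension, through an $\varepsilon$-net / Haussler packing argument on the neighbourhood set system restricted to $Q$. This yields $\Ocal(d)$ vertices whose adjacency pattern almost determines the adjacency of $P$-vertices to $Q$. Taking the union over all pairs gives $|S|=\Ocal(d|\Pcal|^2)$. We then flip a pair of $S$-classes if and only if the majority of its vertex pairs lie in pairs of parts flipped by $H$.

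The main obstacle is the certificate for the ``error'' edges of $H'$: edges $uv$ where the $S$-flip disagrees with $H$ because the $S$-class of $u$ straddles two parts of $\Pcal$. For these I would try to prove that the net property forces a common neighbourhood structure in $H$. Concretely, $u$ and some vertex $u'$ of the same part, and similarly $v$ and $v'$, should have $H$-distance at most $2$, with $u'v'$ an $H$-edge. That gives the path of length $2+1+2=5$. If the net argument does not yield this directly, the fallback is to add to $S$ further witnesses for each straddling class, which only changes the constant in $\Ocal(d|\Pcal|^2)$.
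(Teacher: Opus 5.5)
Your reduction to single edges is sound: $\dist_S$ and $\dist_H$ satisfy the triangle inequality, so it suffices that every edge of the new flip joins vertices at distance at most $6$, respectively $5$. The gap lies in the flips you then try to certify.

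In the second item, $G'$ decides each pair of $S$-classes by one bit and only isolates $T$. The $T$-classes therefore never enter the flip itself, and no such $G'$ exists in general. Take $S=\{s\}$, with classes $\{s\}$, $C=N(s)$ and $D=V(G)\setminus N[s]$. Let $C$ and $D$ be independent, and split $C=C_1\cup C_0$ with $C_1$ complete to $D$ and $C_0$ anticomplete to $D$, all of size larger than $|T|$. The $S$-flip complementing $\{\{s\},C\}$ and $\{C,D\}$ isolates every vertex of $C_1$, and the one complementing only $\{\{s\},C\}$ isolates every vertex of $C_0$. So $B^{6}_S(c)=\{c\}$ for all $c\in C$, and an admissible $G'$ has no edge at $C$. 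Whichever bit you choose for $\{C,D\}$, edges from $C_1\setminus T$ or from $C_0\setminus T$ to $D\setminus T$ survive.

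What works is to flip over the refinement: adjacency to a representative $d\in D$ separates $C_1$ from $C_0$, and one complements $C_1\times D$ but not $C_0\times D$. This is what the source does. It refines $\Pcal_S$ by adjacency to one vertex per \emph{ordered} pair of $S$-classes and takes the flip over that refinement. The choice of which blocks to complement is governed by a diameter dichotomy for bipartite graphs \cite[Lemma~10]{separability}, not by density. Density is the wrong criterion even when one bit suffices: for $S=\emptyset$ and $G=K_{9N}\sqcup K_N$ the graph is mostly adjacent, but its complement joins the two cliques, which are at infinite $G$-distance. The right choice is $G$ itself, whose edges are at distance $2$ in $\bar G$. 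Your paths ``through the representatives'' are never actually constructed, and in the first example they cannot exist.

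In the first item, the certificate for the error edges is the entire content of the lemma, and you leave it as something you would try to prove. The majority rule is moreover false already at VC-dimension $0$. Let $G$ be edgeless and $\Pcal=\{P_1,P_2\}$, with $|P_1|\ge 10|P_2|$ and $|P_2|$ larger than your bound on $|S|$. Let $H$ complement $\{P_1,P_1\}$ only, so that $H$ is a clique on $P_1$ together with the isolated vertices of $P_2$. All vertices outside $S$ are twins and form one $S$-class, most of whose pairs lie in $P_1\times P_1$. Your $H'$ therefore complements this class and joins $P_2\setminus S$ to $P_1\setminus S$, at infinite $H$-distance, whereas $H':=G$ is admissible. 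Your fallback cannot repair this, since no parameter set separates twins. The flip bits have to be chosen so that every edge they create is certifiable, and showing that such a choice exists over $\Ocal(d|\Pcal|^2)$ parameters is where bounded VC-dimension genuinely enters. The paper does not reprove this; it imports it as \cite[Lemma~4]{separability}.
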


Item (2) is \cite[Lemma~8]{separability} applied to the partition $\Pcal_S$, whose metric is
$\dist_S$: it yields a refinement $\Pcal'$ of $\Pcal_S$ and a $\Pcal'$-flip $G'$ with
$B^r_{G'}(v)\subseteq B^{6r}_S(v)$. Inspecting its proof, $\Pcal'$ refines $\Pcal_S$ by
neighbourhood types over a set $A$ of at most $|\Pcal_S|^2$ vertices, one for each ordered pair of
$S$-classes, so $\Pcal_{S\cup A}$ refines $\Pcal'$ and $G'$ is an $(S\cup A)$-flip. Bounded
VC-dimension only improves the bound on $|T|$, to $k^{\Ocal(d^2)}$ \cite[Lemma~5]{separability}.
Thus the flip-metric over a bounded set is always dominated by the metric of a single definable
flip over a bounded set, and on classes of bounded VC-dimension, which include all monadically
dependent classes, partition flips are dominated by definable ones as well, so that there the three
viewpoints --- sequences of flip operations, partition flips and definable flips --- are
interchangeable up to constant factors in the radius. This is how the statements of the
introduction, given for flip operations, relate to those of the body, given for definable flips. 

\subparagraph*{Combinatorial characterisations.}
We now formally define the combinatorial properties characterising monadic stability and
dependence, in their definable-flip formulation, the conversion coming from \Cref{fact:conversion}. The first property is the dense analogue of uniform
quasi-wideness \cite{NOdM11}. Note that the number $k$ of parameters depends only on the radius,
the target size $m$ being absorbed by the threshold $f(m)$; this pattern recurs.

\begin{definition}\label{def:flipflat}
A class $\C$ is \emph{flip-flat} if for every $r\in\N$ there are $k\in\N$ and $f:\N\to\N$ such that
for all $G\in\C$, $m\in\N$ and $A\subseteq V(G)$ with $|A|\ge f(m)$ there are $S\subseteq V(G)$
with $|S|\le k$ and $G'\in\flip(G/S)$ with $\ind^r_{G'}(A)\ge m$.
\end{definition}

\begin{theorem}[\cite{indiscernibles}]\label{thm:flipflat}
A class of graphs is flip-flat if and only if it is monadically stable.
\end{theorem}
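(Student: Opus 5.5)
This statement is the theorem of \cite{indiscernibles}, which the paper cites rather than proves. I would not reprove it from scratch; the plan below is how I would reconstruct it from the standard tools.

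\emph{Flip-flatness implies monadic stability.} I would argue by contraposition. Suppose $\C$ is not monadically stable, so it transduces all half-graphs. The standard route is to extract from this a large \emph{ladder-like} configuration in graphs of $\C$, via the Ramsey-type normalisation of transductions used in \cite{flippergame}. This gives sequences $a_1,\dots,a_n$ and $b_1,\dots,b_n$, and a radius $r$ depending only on the transduction, such that the order between $a_i$ and $b_j$ is witnessed by short paths or by adjacency patterns. I would then show that no flip over a set of $k$ parameters can make many of the $a_i$ pairwise $r$-far. A $k$-parameter flip partitions the sequence, after a Ramsey step, into boundedly many intervals on which the flip acts uniformly, because the $S$-classes number at most $2^k+k$. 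Inside a long homogeneous interval the half-graph structure still connects any two $a_i$ within distance about $2$ or $3$. This contradicts flip-flatness with $A=\{a_i\}$ and $m$ large.

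\emph{Monadic stability implies flip-flatness.} This is the main direction, and I would follow the indiscernible-sequence method. Given a large $A$, extract by Ramsey a long subsequence that is indiscernible for formulas of bounded quantifier rank in a bounded colouring. In monadically stable classes such sequences are totally indiscernible, and Baldwin--Shelah style analysis shows they have a bounded set $S$ of \emph{dominating} parameters. Each element's type over the rest is determined by its type over $S$ up to $r$-local interactions. Flipping according to the $S$-classes then removes all short connections between distinct sequence elements, leaving an $r$-independent subsequence of length $m$. The number $k=|S|$ must depend only on $r$. That requires a uniform bound, coming from the bounded length of the Flipper game \cite{flippergame} or from the locality of first-order formulas (Gaifman), and not on $m$.

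\emph{Main obstacle.} The hardest step is obtaining a parameter bound independent of $m$, with only the threshold $f(m)$ depending on $m$. I would first try an induction on the Flipper game length. In each round, either many elements of $A$ are already $r$-far in the current flip metric $\dist_S$, or a large fraction of $A$ lies in one ball, which the Connector chooses. After boundedly many rounds the game ends, so some round must produce scattering. Finally I would convert from the flip metric to a single definable flip using \Cref{fact:conversion}(2). That conversion costs a constant factor in the radius, which I would absorb by running the whole argument at radius $6r$.
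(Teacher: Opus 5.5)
The paper does not prove this theorem. It imports it from \cite{indiscernibles} and itself uses only the direction ``flip-flat $\Rightarrow$ monadically stable'', in the converse of \Cref{thm:main}.

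\textbf{Monadically stable $\Rightarrow$ flip-flat.} Your indiscernibles paragraph names ingredients but gives no mechanism for bounding $|S|$ independently of $m$, so it is not yet an argument. The Flipper-game induction you fall back on does work. It is Stage~1 of \Cref{lem:winning-packing} specialised to a single given set, which is the route the paper itself points out (\Cref{cor:polynomial} at $\varepsilon=1$). Run it at radius $6r$:
\begin{itemize}
\item If the current part of $A$ has fewer than $m$ elements pairwise at $S'$-distance $>6r$, then fewer than $m$ balls $B^{6r}_{S'}(z)$ with $z\in A$ cover it. One of them keeps a $\frac1m$-fraction, and Flipper's optimal move lowers the rank of that ball.
\item After at most $k$ rounds the surviving $\ge|A|/m^k$ elements are parameters, all isolated in one flip.
\item \Cref{fact:conversion}(2) then yields a single flip.
\end{itemize}
One caveat: the proof of \Cref{thm:finfliprank} you invoke must not itself rest on flip-flatness, or the argument is circular.

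\textbf{Flip-flat $\Rightarrow$ monadically stable: this is where the gap is.} You assert that transducing all half-graphs can be ``normalised'' into sequences whose order is witnessed by short paths or adjacency patterns. You then argue on homogeneous intervals with distances ``about $2$ or $3$''. Neither step is supplied.
\begin{itemize}
\item A transduction only gives a coloured formula $\varphi(x,y)$ with $\varphi(a_i,b_j)\iff i\le j$. Turning that into a concrete pattern in $G$ up to flips is a substantial structure theorem, not a Ramsey step.
\item Even if granted, your distance bound concerns half-graphs realised by edges. It does not cover path-witnessed ladders, where the flip also acts on the internal path vertices.
\end{itemize}

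\textbf{A direct fix via Gaifman locality.} None of the pattern machinery is needed.
\begin{itemize}
\item Let $\varphi$ have quantifier rank $q$, and take $G'\in\flip(G/S)$ with $|S|\le k$.
\item Expand $G'$ by the colours and by unary predicates for the at most $2^k+k$ $S$-classes. The edge relation of $G$ becomes quantifier-free definable, so $\varphi$ becomes some $\varphi'$ of rank $q$ over the expanded $G'$.
\item By Gaifman's theorem there is $\rho=\rho(q)$ such that, whenever $\dist_{G'}(a,b)>\rho$, the truth of $\varphi'(a,b)$ depends only on the local types of $a$ and of $b$ (of radius and rank bounded in $q$). There are at most $\tau=\tau(q,k,|\Sigma|)$ such types.
\item Apply flip-flatness at radius $2\rho$ (which fixes $k$) to $\{a_1,\dots,a_N\}$ with $m=3\tau+1$. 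This gives $a_{i_1},\dots,a_{i_4}$ with $i_1<\dots<i_4$, all of one local type and pairwise at distance $>2\rho$ in $G'$.
\item Then $b_{i_2}$ lies within distance $\rho$ of at most one of them. So it is far from some $a_{i_p}$ with $p\le2$ and from some $a_{i_q}$ with $q\ge3$.
\item These two have the same local type, yet $\varphi(a_{i_p},b_{i_2})$ holds and $\varphi(a_{i_q},b_{i_2})$ fails, a contradiction.
\end{itemize}
This short locality argument is the idea missing from your proposal.
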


Flip-flatness scatters a large set in one shot. The Flipper game instead localises repeatedly, and
asks how many rounds of flipping and zooming in are needed to exhaust the graph.

\begin{definition}\label{def:rank}
Fix $r\in\N$ and a graph $G$. For $U,S\subseteq V(G)$ the \emph{flip-separation rank of $U$ over
$S$} is
\[ \rk^G_r(U/S)=
   \begin{cases}
     0, & \text{if } U\subseteq S;\\
     1+\min_{s\in V(G)}\max_{u\in U}\rk^G_r\big(U\cap B^r_{S\cup\{s\}}(u)\,/\,S\cup\{s\}\big),
        & \text{otherwise,}
   \end{cases}
\]
and $\rk_r(G):=\rk^G_r(V(G)/\emptyset)$. A class $\C$ is \emph{winning for Flipper} if for every
$r\in\N$ there is $k\in\N$ with $\rk_r(G)\le k$ for every $G\in\C$.
\end{definition}

Equivalently, $\rk_r(G)$ is the value of a two-player game. The board consists of an \emph{arena},
initially $V(G)$, and the \emph{parameters} played so far, initially none. In each round Flipper
plays a vertex, adding it to the parameters, and Localiser answers with a vertex $u$ of the arena,
which then shrinks to its intersection with $B^r_S(u)$ over the enlarged parameter set $S$. Flipper
wins once nothing but parameters is left, and $\rk_r(G)$ is the number of rounds he needs under
optimal play. 
In the original form of the game \cite{flippergame}, sketched in the introduction, Flipper
plays a flip of the graph rather than a parameter and Localiser zooms into a ball of the flipped
graph; the form above is the \emph{confining Flipper game with quantifier-free definable
separation} of \cite[Section~3.1]{flippergame-full}, one of several forms shown there to be
equivalent up to a change of radius. This is the one our induction uses. 

\begin{theorem}[\cite{flippergame,flippergame-full}]\label{thm:finfliprank}
A class of graphs is winning for Flipper if and only if it is monadically stable.
\end{theorem}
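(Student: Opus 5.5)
The plan is to derive the statement from the original form of the theorem in \cite{flippergame}. That theorem says a class is monadically stable if and only if Flipper wins the game in which he plays a bounded number of flips and Localiser zooms into radius-$r$ balls of the flipped graph. The work is then to show that the confining, quantifier-free definable version of \Cref{def:rank} is equivalent to it, up to a change of radius and of the round bound. Both directions are simulations of one game by the other, and \Cref{fact:conversion} is the translation device.

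\emph{Monadically stable $\Rightarrow$ winning for Flipper.} Fix $r$.
\begin{enumerate}
\item Monadically stable classes have bounded VC-dimension $d$. By \cite{flippergame}, Flipper wins the original game at radius $5r$ within $\ell$ rounds, each move a flip with a bounded number $p$ of parts.
\item I simulate a strategy for the confining game. When the original strategy plays a $\Pcal$-flip $H$, I apply \Cref{fact:conversion}(1) to obtain $S_0$ with $|S_0|\le\Ocal(dp^2)$ and $H'\in\flip(G/S_0)$ such that $B^r_{H'}(v)\subseteq B^{5r}_H(v)$.
\item Flipper adds the vertices of $S_0$ one at a time. Localiser's answers in the intermediate rounds only shrink the arena, since the rank is monotone in $U$, so they can be ignored.
\item After the last vertex of $S_0$ is played, Localiser picks $u$. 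The new arena lies in $B^r_S(u)\subseteq B^r_{H'}(u)\subseteq B^{5r}_H(u)$, which is a legal Localiser move in the original game.
\item After $\ell$ simulated rounds the arena is exhausted, so $\rk_r(G)\le \ell\cdot\Ocal(dp^2)$.
\end{enumerate}

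\emph{Winning for Flipper $\Rightarrow$ monadically stable.} Fix $r$ and suppose $\rk_{6r}(G)\le k$ on $\C$. Flipper plays the original game at radius $r$ as follows.
\begin{enumerate}
\item After $i$ confining rounds with parameters $S$, $|S|\le i\le k$, apply \Cref{fact:conversion}(2) to get $T\supseteq S$ of size bounded in $k$ and $G'\in\flip(G/T)$ with $B^r_{G'}(v)\subseteq B^{6r}_S(v)$.
\item Flipper plays the flip $G'$, which uses a bounded number of flip operations.
\item A Localiser move into an $r$-ball of $G'$ then lands inside a $6r$-ball of the flip-metric over $S$. That is a legal confining move, so the rank decreases.
\end{enumerate}
Hence Flipper wins the original game at radius $r$ within $k$ rounds, and \cite{flippergame} gives monadic stability.

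The main obstacle I expect is matching the arenas. In \cite{flippergame} the game after a move is played on the \emph{induced subgraph} of the current ball, and later flips act on that subgraph. In \Cref{def:rank}, by contrast, balls are always measured in flips of the whole graph $G$, and parameters may lie outside the arena. The plan is to exploit two features. First, every flip of an induced subgraph is the restriction of a flip of $G$ with the same partition. Second, distances in an induced subgraph dominate distances in $G$, so the containments of balls go the right way in each simulation.

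Where they go the wrong way, namely a flip of $G$ that shortens paths through vertices outside the arena, I would pass through the equivalence of game variants in \cite[Section~3.1]{flippergame-full}. This costs a further constant factor in the radius, so the radii $5r$ and $6r$ above would be replaced by larger constant multiples of $r$.
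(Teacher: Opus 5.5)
Your second direction (bounded confining rank implies stability) is sound up to bookkeeping. A ball of the current arena $G'[A]$ lies in $B^r_{G'}(v)\subseteq B^{6r}_S(v)$, so the original Localiser is weaker than the confining one. The different win conditions (arena $\subseteq S$ versus a single vertex) cost only boundedly many isolating flips.

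The gap is in Step~4 of the forward direction. Your chain gives $B^r_S(u)\subseteq B^{5r}_H(u)$, where $H=G\oplus F_1\oplus\dots\oplus F_i$ is the flip of the \emph{whole} graph. The original strategy, however, is only guaranteed to win against answers contained in a ball of the current arena $H[A]$. Since $\dist_{H[A]}\ge\dist_H$, we have $B^{5r}_{H[A]}(u)\subseteq B^{5r}_H(u)\cap A$, which is the opposite of the inclusion you need. The domination you invoke in your last paragraphs is exactly what makes the induced-subgraph relaxation and your second direction harmless. In the first direction it says the simulated Localiser is strictly stronger than the one the strategy was built to beat. Shortcuts of $H$ through vertices outside $A$ are the generic situation, not an exceptional case to patch, and a black-box winning strategy for the original game gives no guarantee on such plays.

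Falling back on the equivalence of variants in \cite[Section~3.1]{flippergame-full} does not repair the simulation. The confining game with quantifier-free definable separation is one of those variants, so that citation already \emph{is} \Cref{thm:finfliprank}. Citing it is all the paper does: the theorem is imported, not proved, and with that citation your simulations become redundant.

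An actual argument for the forward direction needs a Flipper strategy proved to survive Localiser answers contained in balls of $G\oplus F_1\oplus\dots\oplus F_i$. The paper supplies exactly this when it needs an explicit confining strategy for \Cref{thm:rank-algo}. There, \Cref{lem:algo-simulate} reopens the proof of the algorithmic Flipper theorem. It checks that the argument of \cite[Claim~11.3]{flippergame-full} uses only one property of Localiser's move: the new arena lies in an $r$-ball of the current flipped graph on the whole vertex set. Once that is established, \Cref{fact:conversion}(1) is fed in exactly as in your Steps~2--4. Without such a look inside the proof, your Step~4 is unsupported.

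Two minor points: apply \Cref{fact:conversion}(1) to the cumulative flip $G\oplus F_1\oplus\dots\oplus F_i$ rather than to the latest move, and add one final round in which Flipper plays the last remaining non-parameter vertex.
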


The remaining two properties characterise monadic dependence. The first weakens flip-flatness by
asking only for a split into two far-apart parts; that it is a weakening is immediate, as an
$r$-independent set of size $2m$ splits into two such parts.

\begin{definition}\label{def:flipbreak}
A class $\C$ is \emph{flip-breakable} if for every $r\in\N$ there are $k\in\N$ and $f:\N\to\N$ such
that for all $G\in\C$, $m\in\N$ and $A\subseteq V(G)$ with $|A|\ge f(m)$ there are $S\subseteq V(G)$
with $|S|\le k$, $G'\in\flip(G/S)$ and $B_1,B_2\subseteq A$ of size at least $m$ with
$\dist_{G'}(b_1,b_2)>r$ for all $b_1\in B_1$, $b_2\in B_2$.
\end{definition}

\begin{theorem}[\cite{flipbreak}]\label{thm:flipbreak}
A class of graphs is flip-breakable if and only if it is monadically dependent.
\end{theorem}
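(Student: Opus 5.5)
Since this is the theorem of Dreier, M\"ahlmann and Toru\'nczyk~\cite{flipbreak}, the plan is to follow the two directions of their proof rather than to find a new one. The two directions are of very different difficulty, and I would treat them separately.

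\emph{Flip-breakable implies monadically dependent.} I would argue by contraposition. Assume $\C$ is not monadically dependent. The first step is to invoke the pattern-based characterisation of monadic dependence from \cite{flipbreak}. It says that, for some fixed radius $r_0$, $\C$ contains, up to boundedly many flips and taking induced subgraphs, arbitrarily large $r_0$-subdivided bicliques or their dense analogues (star, clique and half-graph crossings of order $n$). The second step reduces to these patterns, using two facts: a bounded number of flips composes with the definable flips of \Cref{def:flipbreak}, and distances change by at most a constant factor (\Cref{fact:conversion}). The third step takes $A$ to be the $n$ principal vertices on one side of the crossing, with $n\gg f(m)$. Fix a set $S$ of $k$ parameters. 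All but $O(k)$ of the $\binom n2$ subdivision paths avoid $S$. After an $S$-flip, the vertices on these paths fall into boundedly many $S$-classes, and by a Ramsey/pigeonhole argument most paths have all their internal vertices in homogeneous classes. Such paths survive as short connections in $G'$, either as the original path or through a flipped complete connection. Hence any two large subsets $B_1,B_2$ of $A$ contain a pair at distance at most $O(r_0)$. This contradicts breakability at radius $r=O(r_0)$.

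\emph{Monadically dependent implies flip-breakable.} This direction is the main obstacle. I would argue via indiscernible sequences and insulators, as in \cite{flipbreak}. By Ramsey, a large $A$ contains a long sequence that is indiscernible for the formulas of quantifier rank needed to express distance $r$ in flips. Monadic dependence, in Shelah's characterisation via indiscernibles, implies that such a sequence is \emph{prepared}: it admits a bounded-size set of parameters and a definable flip after which the sequence is covered by an \emph{insulator}. An insulator is a layered structure in which the elements of the sequence are separated by boundedly many ``cut'' parameters, so that distances in the flip are witnessed locally along the order. Cutting the sequence at one such point gives the two sides $B_1,B_2$, each of size at least $m$ once the sequence is longer than $f(m)$, at distance more than $r$. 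To go from the infinite, model-theoretic statement to the finite one with $k$ depending only on $r$, I would use compactness: if no bound $k$ worked, one would obtain an ultraproduct with an indiscernible sequence that is not breakable by finitely many parameters, contradicting dependence there.

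The hard step is the construction of the insulator with a number of parameters bounded in terms of $r$ alone. It goes by induction on $r$: each extra step of radius requires finding a further flip that isolates new ``layers'' around the sequence, and dependence is used to show that this induction terminates. I would expect to rely on the machinery of \cite{flipbreak-full} for this step rather than reprove it.
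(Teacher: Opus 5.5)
The paper does not prove this statement. It imports it from \cite{flipbreak}, and all it supplies is the translation from the partition flips used there to the definable flips of \Cref{def:flipbreak}. A definable flip is a partition flip, which gives the implication from flip-breakability for free. Conversely, monadically dependent classes have bounded VC-dimension, so \Cref{fact:conversion}(1) turns the partition flip that \cite{flipbreak} produces at radius $5r$ into a definable one at radius $r$. You invoke the conversion in the first direction, where it is not needed, and omit it from the second, where it is. Leaving the insulator construction to \cite{flipbreak-full} is in line with the paper, but two of the steps you do spell out are wrong.

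First, the compactness step in the second direction yields no contradiction, because an indiscernible sequence that no finite parameter set breaks is compatible with monadic dependence. Consider an ultraproduct of half-graphs, a class of bounded cliquewidth and hence monadically dependent. The vertices $a_{N+jM}$, $j\in\omega$, with $N,M$ nonstandard, form an indiscernible sequence. Every vertex is adjacent to an initial segment of this sequence, which is either finite or the whole sequence. So a finite $S$ leaves a cofinite tail in a single $S$-class, and the $b_{N+jM}$ beyond all cuts in a single class, with a half-graph between them. Whether or not the flip complements this pair of classes, any two tail elements have a common neighbour among those $b$'s. This is the phenomenon recorded in the introduction: the ``infinite'' forms of breakability characterise monadic stability, not dependence. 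So the bound $k=k(r)$ cannot be obtained by the argument you describe. A repaired version would at least need densely ordered internal sequences and an overspill argument to recover parts of size $m$, and its infinitary input would essentially be the insulator theorem itself.

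Second, in the first direction you misstate the non-structure theorem of \cite{flipbreak}.
\begin{itemize}
\item It is stated for hereditary classes. You must pass to the hereditary closure, where the parameters of a definable flip may lie outside the induced subgraph, so you need breakability in partition form.
\item Besides the three kinds of layer-wise flipped $r$-crossings, it lists the comparability grids, which your case analysis omits.
\item The homogenisation must be a product Ramsey argument on $B_1\times B_2$, not a statement about ``most paths''.
\end{itemize}
More to the point, this direction needs no such deep theorem.
\begin{itemize}
\item Partition flip-breakability is preserved under transductions, by Gaifman locality. In a flip $G'$ expanded by the colours and the parts, the transducing formula at two vertices far apart in $G'$ depends only on their local types, of which there are boundedly many. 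Flipping the transduced graph along these types therefore leaves edges only between vertices close in $G'$.
\item The $1$-subdivided cliques, which every monadically independent class transduces, are not breakable at radius $2$. Given large sets $B_1,B_2$ of principal vertices and a flip with $p$ parts, pick $a_1,a_2\in B_1$ in one part and $b_1,b_2\in B_2$ in one part, with the four subdivision vertices $s_{a_xb_y}$ also in one part. Whichever pairs of parts are flipped, some $s_{a_xb_y}$ is a common neighbour of $a_1$ and $b_1$.
\end{itemize}
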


The second replaces counting by weighing. Call a vertex \emph{$\varepsilon$-light} if it is not
$\varepsilon$-heavy.

\begin{definition}\label{def:flipsep}
A class $\C$ is \emph{flip-separable} if for every $r\in\N$ and $\varepsilon>0$ there is $k\in\N$
such that for every $G\in\C$ and every weighting $\w$ of $G$ there are $S\subseteq V(G)$ with
$|S|\le k$ and $G'\in\flip(G/S)$ such that $B^r_{G'}(u)$ is not $\varepsilon$-heavy for every
$\varepsilon$-light vertex $u$ of $G$.
\end{definition}

The restriction to $\varepsilon$-light vertices cannot be dropped, as an $\varepsilon$-heavy vertex
lies in its own ball, which no flip can lighten; it is vacuous for $\varepsilon$-balanced
weightings, under which every vertex is $\varepsilon$-light. Placing the balance on the objects
served rather than on the weighting is the convention of \cite{separability}, and we follow it
throughout. 

\begin{theorem}[\cite{separability}]\label{thm:flipsep}
A class of graphs is flip-separable if and only if it is monadically dependent.
\end{theorem}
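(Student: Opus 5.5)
The forward direction goes through flip-breakability and \Cref{thm:flipbreak}; the converse is where the real work lies, and for it I would follow the compression strategy that this paper attributes to \cite{separability}.

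\emph{Flip-separable implies monadically dependent.} By \Cref{thm:flipbreak} it suffices to show that flip-separability implies flip-breakability. Fix $r$ and apply flip-separability at radius $2r$ with $\varepsilon=\tfrac12$, obtaining $k$. Put $f(m):=2m^2$. Given $A$ with $|A|\ge f(m)$, take the indicator weighting $\w_A$. It is $\tfrac12$-balanced, so every vertex is $\tfrac12$-light, and we get $S$ with $|S|\le k$ and $G'\in\flip(G/S)$ such that every ball $B^{2r}_{G'}(u)$ contains fewer than $|A|/2$ elements of $A$. There are two cases.
\begin{enumerate}
\item Some $u\in A$ has $|A\cap B^r_{G'}(u)|\ge m$. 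Take $B_1:=A\cap B^r_{G'}(u)$ and $B_2:=A\setminus B^{2r}_{G'}(u)$. Then $|B_2|>|A|/2\ge m$, and by the triangle inequality every $b_1\in B_1$, $b_2\in B_2$ satisfy $\dist_{G'}(b_1,b_2)>r$.
\item Every $r$-ball meets $A$ in fewer than $m$ points. Greedily pick a point of $A$ and discard its $r$-ball. This yields an $r$-independent subset of $A$ of size at least $|A|/m\ge 2m$, which splits into two far-apart parts of size $m$.
\end{enumerate}

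\emph{Monadically dependent implies flip-separable.} Here I would build the flip by iterated compression. Fix $r$ and $\varepsilon$, and let $\w$ be normalised.
\begin{enumerate}
\item Monadically dependent classes have bounded VC-dimension. Moreover, the family of balls $B^r_{G'}(u)$, over $G'\in\flip(G/S)$ with $|S|\le k$, is first-order definable with parameters. Its VC-dimension is therefore bounded, so an $\varepsilon/4$-approximation $N\subseteq V(G)$ of bounded size exists for $\w$ and these ranges. This replaces the weighting by a bounded-size indicator-like weighting up to error $\varepsilon/4$.
\item Run a localisation process. Maintain a parameter set $S$. As long as some $\varepsilon$-light vertex $u$ has an $\varepsilon$-heavy ball in the current $S$-flip-metric, use flip-breakability (\Cref{thm:flipbreak}) inside that heavy ball, applied to the sample points of $N$ it contains, at a larger radius. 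This produces a bounded set $S_u$ whose flip splits the heavy region into two parts, each carrying a fixed fraction of the weight. Add $S_u$ to the record.
\item Show that this process terminates after a number of rounds bounded in terms of $r$ and $\varepsilon$. Each split halves some heavy region up to a constant, so along any branch of the recursion there are $\Ocal(\log\frac1\varepsilon)$ levels. At each level only boundedly many pairwise far-apart heavy regions fit, at most $1/\varepsilon$, so the whole recursion tree is bounded.
\item Merge the boundedly many recorded parameter sets into one set $T$. Pass from the flip-metric $\dist_T$ to a single definable flip using \Cref{fact:conversion}(2), rescaling the radius by the factor $6$, which was built in by working at radius $6r$ throughout.
\end{enumerate}

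The main obstacle is step 3. Flips chosen for different heavy regions interact: refining the parameter set can merge regions that were far apart before, because $\dist_S$ is monotone only in the direction $\dist_F\le\dist_S$ for $F\subseteq S$. Monotonicity makes sure that adding parameters never shrinks distances, so separations already achieved persist. The hard part is instead the potential argument, namely showing that heavy balls cannot keep reappearing indefinitely. My first attempt would be by contradiction. If the recursion had unbounded depth or width, then the sample points of $N$ in the nested heavy regions would form, for every $m$, a large set that no bounded flip breaks into two large far-apart parts. That contradicts flip-breakability, where the radius must be taken larger than the radii accumulated through the nesting. If this does not close, I would fall back on the bounded-VC-dimension compression lemmas of \cite{separability} (Lemmas~4, 5 and~8 there) to keep the number of parameters per level bounded.
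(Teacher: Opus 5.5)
Your forward direction is correct. It is the standard argument of \cite[Lemma~24]{separability}, which the paper reuses in \Cref{lem:app-m2-break}. One small fix: take $f(m)\ge3$, so that the indicator weighting is $\frac12$-balanced also for $m=1$.

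The converse has a genuine gap, and it comes before the potential argument you flag: step~2 does not do what it claims, because flip-breakability never produces a split \emph{in weight}. It is a Ramsey-type statement about cardinalities: from a set of size $f(m)$ it extracts two subsets of size $m$, with $f$ in general far from linear. Applied to the points of $N$ in a heavy ball, it therefore returns a vanishing fraction of them. Those two sets are also arbitrary subsets of $N$ rather than ranges, so the $\frac\varepsilon4$-approximation tells you nothing about the weight near them. Finally, pulling $B_1$ away from $B_2$ need not lighten any ball: after adding $S_u$, the ball around $u$ may still carry almost all of the weight of the region. What step~2 needs is a balanced separator for the weighting. That is a statement of the same kind as flip-separability itself at constant tolerance, and breakability does not provide it. The halving of \Cref{lem:app-halving} works at radius $\infty$ only because \Cref{fact:app-sep} extracts such separators from a tree or rank decomposition, and nothing comparable is available for monadically dependent classes at finite radius. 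So there is no $\Ocal(\log\frac1\varepsilon)$ depth bound. Your fallback contradiction cannot close either. Breakability never fails inside your process; its success simply need not reduce any weight. And $N$ has size bounded in terms of $\varepsilon$, so the nested sample sets never become large. Lemmas~4, 5 and~8 of \cite{separability} are the metric-conversion lemmas behind \Cref{fact:conversion} and do not address this.

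The missing idea is the compression argument of \cite[Lemmas~19 and~20]{separability}. This is also the route the paper records for this direction (\Cref{fact:sparsifying-family}, then \Cref{lem:dependent-packing} and \Cref{lem:packing-sep}). There, lightness is never created by splitting heavy regions; it is kept as an invariant while a family of parameter sets shrinks.
\begin{enumerate}
\item Restrict $\w$ to the light vertices so that it becomes balanced. Then the family $\{S_v:v\in V(G)\}$ of $k$-sets with $v\in S_v$ is trivially sparsifying, as $B^r_{\Fcal}(v)\subseteq B^r_{S_v}(v)=\{v\}$.
\item Exchange any sparsifying family above a threshold $k_0$ for a strictly smaller one, again of $k$-sets. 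For this, the sunflower lemma and flip-breakability are applied to the parameter tuples of the family, not to the weighted vertex set. This yields a large subfamily $\Fcal'$ and a bounded family $\Ycal$ such that, for every $S\in\Fcal'$, the $3r'$-neighbourhood over $\Ycal$ of $S\setminus\bigcup\Ycal$ is light.
\item Vertices near $\bigcup\Fcal'$ are handled by that lightness. For vertices far from it, Gaifman locality \cite[Lemma~14]{separability} makes whether a member of $\Fcal'$ separates two of them depend only on a bounded colour of its tuple. So all but boundedly many members of $\Fcal'$ may be discarded.
\end{enumerate}
Iterating leaves a bounded family of bounded sets. Your step~4, taking the union and applying \Cref{fact:conversion}(2), then finishes exactly as in \Cref{lem:packing-sep}.
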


The conclusion of \Cref{def:flipsep} may be restated in terms of scattered elements: if no
$r$-ball of $G'$ around an $\varepsilon$-light vertex is $\varepsilon$-heavy, then an
$\varepsilon$-heavy set $X$ of $\varepsilon$-light vertices cannot lie inside $B^r_{G'}(x)$ for any
$x\in X$, and so contains two elements at distance more than $r$. Thus flip-separability says that
a bounded flip pulls two elements apart in every $\varepsilon$-heavy set of $\varepsilon$-light
vertices, whereas flip-flatness pulls $m$ elements apart in every sufficiently large set;
flip-packability asks for both at once.

\section{Flip-packability}\label{sec:packing}

We now define the property formally, in the definable-flip formulation of \Cref{sec:prelims}. What we name is the family of parameter sets rather than the flips they
determine: it is what the proofs construct and what the algorithm of \Cref{sec:algorithmic}
computes, and each of its members stands for the boundedly many flips over it.

\begin{definition}[flip-nets]\label{def:flipnet}
Let $\w$ be a weighting of a graph $G$, let $r,m\in\N$ and $\varepsilon,\delta>0$. A family $\Fcal$
of subsets of $V(G)$ is an \emph{$(r,\varepsilon,\delta,m)$-flip-net for $\w$} if every
$\varepsilon$-heavy set $X\subseteq V(G)$ of $\delta$-light vertices satisfies
$\ind^r_{G'}(X)\ge m$ for some $F\in\Fcal$ and $G'\in\flip(G/F)$. It has \emph{order $(k,t)$} if
$\Fcal\subseteq[V(G)]^{\le k}$ and $|\Fcal|\le t$.
\end{definition}

\begin{definition}[flip-packability]\label{def:packing}
A class $\C$ of graphs is \emph{flip-packable} if for every $r\in\N$ there is $k\in\N$ such that for
every $\varepsilon>0$ and $m\in\N$ there are $\delta>0$ and $t\in\N$ such that every weighting of
every $G\in\C$ has an $(r,\varepsilon,\delta,m)$-flip-net of order $(k,t)$. 
\end{definition}
\Cref{thm:main-intro} of the introduction is the following statement with flip operations in place
of definable flips; the two are equivalent by the discussion following \Cref{fact:conversion}.

\begin{theorem}\label{thm:main}
A class of graphs is flip-packable if and only if it is monadically stable.
\end{theorem}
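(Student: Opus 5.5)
The two directions are of very different difficulty. For the forward direction, flip-packable implies monadically stable, I would reduce to flip-flatness (\Cref{thm:flipflat}). Fix $r$, and let $k$ be given by flip-packability. Given $m$, apply the definition with $\varepsilon=1$ and this $m$, obtaining $\delta$ and $t$, and put $f(m):=\lceil 1/\delta\rceil+1$. For $A\subseteq V(G)$ with $|A|\ge f(m)$, take the indicator weighting $\w_A$. It is $\delta$-balanced, so $X:=A$ is a $1$-heavy set of $\delta$-light vertices. The flip-net therefore supplies some $F\in\Fcal$ with $|F|\le k$ and some $G'\in\flip(G/F)$ with $\ind^r_{G'}(A)\ge m$. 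This is exactly flip-flatness, with $k$ depending on $r$ alone.

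For the converse, monadically stable implies flip-packable, I would use \Cref{thm:finfliprank} and induct on the flip-separation rank, working throughout with the flip-metric $\dist_S$. Fix $r$. Apply the Flipper characterisation at a larger radius $R=c\cdot r$, with $c$ large enough to absorb the triangle inequality, and let $\ell$ bound $\rk_R$ on $\C$.

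The claim to prove by induction on $j$ is the following. For an arena $U$ and parameters $S$ with $\rk_R(U/S)\le j$, and any weighting, there is a family $\Fcal_j$ of supersets of $S$. Each member has size at most $|S|+j$, and the family has size at most $t_j(\varepsilon,m)$. It has the property that every $\varepsilon$-heavy set $X\subseteq U$ of $\delta_j$-light vertices satisfies $\ind^r_{F}(X)\ge m$ for some $F\in\Fcal_j$, measured in the flip-metric.

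For the step, let $s$ be Flipper's optimal move and set $S'=S\cup\{s\}$. Greedily choose a maximal family of pairwise far-apart $\dist_{S'}$-balls, of radius about $R/2$, that are $\eta$-heavy with $\eta$ small relative to $\varepsilon/m$. Only $1/\eta$ of them fit. By maximality and the triangle inequality, any point whose neighbourhood is heavy lies near one of these balls.

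Now take a heavy set $X$. Either $X$ contains $m$ points pairwise $r$-far in $\dist_{S'}$, in which case $S'$ serves it, or it does not. In the latter case, a Ramsey/greedy argument shows that $X$ is covered by fewer than $m$ balls of radius $r$. One of these carries an $\varepsilon/m$-fraction, and that ball lies inside an arena $U\cap B^R_{S'}(u)$ of rank at most $j-1$ near a packed centre. I then recurse with the induced weighting and $\varepsilon/m$, over each of the boundedly many packed centres. The union of the recursively obtained families, together with $\{S'\}$, is $\Fcal_j$.

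The leaf case, $U\subseteq S$, is handled by lightness: the vertices of $S$ are $\delta$-light, so no heavy set fits into $|S|$ vertices once $\delta<\varepsilon/|S|$. This yields $k=\ell$ and $t\le\prod_j(1+1/\eta_j)$. Finally, to turn flip-metric independence into independence in a single flip $G'\in\flip(G/F)$, I would use Ramsey over the $\Xi(k)$ flips realising the metric, at the cost of enlarging $m$ to a Ramsey number. Alternatively, \Cref{fact:conversion}(2) converts to a single definable flip over $T\supseteq F$ with radius multiplied by $6$, still with size depending only on $r$.

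The main obstacle is the step "$X$ concentrates in a packed ball". A heavy part of $X$ lying in an $r$-ball need not be near a greedily chosen heavy ball unless the packing radius and the heaviness threshold are matched carefully, and localising to a ball must genuinely decrease the rank at the right radius. I would handle this by choosing the packed balls with heaviness threshold $\varepsilon/m$ exactly. Maximality then forces every $\varepsilon/m$-heavy $r$-ball to meet a packed ball, and hence to lie within the corresponding $3r$-neighbourhood, which sits inside Localiser's $R$-ball answer.
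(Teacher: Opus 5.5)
Your proposal is correct and is essentially the paper's proof: the direction from flip-packability to monadic stability is exactly Lemma~\ref{lem:packing-flat} (indicator weighting with $\varepsilon=1$, giving flip-flatness). The other direction is the rank induction of Lemma~\ref{lem:winning-packing} at radius $R=3r$, followed by Ramsey over the $\Xi(k)$ flips, or by Fact~\ref{fact:conversion}(2) as in Corollary~\ref{cor:polynomial}: pack pairwise disjoint $\frac{\varepsilon}{m}$-heavy $r$-balls of $\dist_{S'}$ centred in the arena, cover an unscattered $X$ by fewer than $m$ such balls, and recurse into $U\cap B^{3r}_{S'}(v_i)$ with the same unnormalised weighting and tolerance $\frac{\varepsilon}{m}$. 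Your one deviation is the leaf case, which you make vacuous by taking $\delta<\varepsilon/|S|$ rather than noting that more than $m$ light points of $S$ are pairwise isolated; this is the device the paper uses for deletions in Lemma~\ref{lem:app-rank-packing}, and it works just as well here.
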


The proof comes in two stages. The first is an induction on the flip-separation rank, which
produces separation in the flip-metric; all the bookkeeping lives there, in two sequences
$\delta_k$ and $t_k$ recording how the balance and the size of the family degrade as the rank
grows. The second stage converts the flip-metric into a single flip. For graphs this could be done
by the metric conversion of \Cref{fact:conversion}(2), which keeps the margins polynomial and is
what \Cref{cor:polynomial} uses; here we use Ramsey's theorem instead, which keeps the parameter
sets at size $k$ and, having no relational analogue of \Cref{fact:conversion}(2) to rely on,
lets the proof carry over verbatim to relational structures (\Cref{app:relational}). We use
Ramsey's theorem in the following form.

\begin{fact}[Ramsey's theorem]\label{lem:ramsey}
There is a computable function $\Rcal:\N\times\N\to\N$ such that for all $m,c\in\N$, every
edge-colouring of the clique of size $\Rcal(m,c)$ with $c$ colours contains a monochromatic clique
of size $m$.
\end{fact}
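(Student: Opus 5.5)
The plan is the classical Erd\H{o}s--Szekeres style pigeonhole argument. It gives the explicit, hence computable, bound $\Rcal(m,c):=c^{cm}+1$; any explicit bound will do. The cases $c\le1$ or $m\le1$ are trivial, so I would assume $c\ge2$ and $m\ge2$. Given an edge-colouring $\chi$ of the clique on a vertex set $V_0$ with $|V_0|\ge c^{cm}+1$, I would build greedily a sequence of vertices $v_1,v_2,\dots$, colours $c_1,c_2,\dots$ and shrinking sets $V_0\supseteq V_1\supseteq\cdots$.

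At step $i$, pick any $v_i\in V_{i-1}$. The edges from $v_i$ to $V_{i-1}\setminus\{v_i\}$ receive $c$ colours. By pigeonhole, some colour $c_i$ is used on at least $(|V_{i-1}|-1)/c$ of them, and I let $V_i$ be the set of those neighbours. A short induction shows $|V_i|\ge c^{cm-i}$ as long as this is at least $1$: the base case is $|V_0|-1\ge c^{cm}$, and each step divides by $c$ after discarding one vertex, which is absorbed by rounding $\lceil\cdot\rceil$ on integer sizes. So the process runs for at least $cm$ steps and produces vertices $v_1,\dots,v_{cm}$.

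By construction, whenever $i<j$ the vertex $v_j$ lies in $V_i$, so $\chi(v_iv_j)=c_i$; the colour of an edge depends only on its earlier endpoint. Among the $cm$ indices, pigeonhole again yields a colour $\gamma$ with $c_i=\gamma$ for at least $m$ indices $i_1<\dots<i_m$. Any edge $v_{i_a}v_{i_b}$ with $a<b$ then has colour $c_{i_a}=\gamma$, so $\{v_{i_1},\dots,v_{i_m}\}$ is a monochromatic clique of size $m$.

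The only step needing care is the size bookkeeping in the induction on $|V_i|$, namely making sure the loss of the pivot vertex at each step is covered. I would handle it by proving the cleaner invariant $|V_i|\ge c^{cm-i}$ directly from $|V_{i-1}|\ge c^{cm-i+1}+1$. If that becomes awkward, I would simply enlarge the bound to $\Rcal(m,c):=(c+1)^{cm}$, since the only property used later is computability, which any explicit expression provides.
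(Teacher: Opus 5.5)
The paper gives no proof of this statement: it cites the classical multicolour Ramsey theorem and only uses that some computable $\Rcal$ exists. It applies it with $c=\Xi(k)$ in Stage~2 of \Cref{lem:winning-packing}, where $m'=\Rcal(m,\Xi(k))$ is a constant independent of $n$. Your argument is the standard textbook proof and is correct. The greedy pivoting makes the colour of $v_iv_j$ depend only on the earlier endpoint, and pigeonhole over the $cm$ pivot colours yields $m$ indices with a common colour.

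There are two small repairs.

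\begin{itemize}
\item The single formula $c^{cm}+1$ is wrong at $c=1$. It gives $2$, whereas for $m\ge3$ you need a clique of size $m$. So the trivial cases you set aside need their own values; $\Rcal(m,c):=\max\{m,c^{cm}+1\}$ works uniformly.
\item Your closing remark about deriving the invariant ``from $|V_{i-1}|\ge c^{cm-i+1}+1$'' does not match the invariant you carry, but the extra $+1$ is unnecessary. For $c\ge2$ and $k\ge0$ one has $\lceil (c^{k+1}-1)/c\rceil=c^{k}$, so $|V_{i-1}|\ge c^{cm-i+1}$ already gives $|V_i|\ge c^{cm-i}$. In fact $\Rcal(m,c)=c^{cm}$ suffices for $c\ge2$. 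Your fallback $(c+1)^{cm}$ also works, since $((c+1)^{k+1}-1)/c\ge(c+1)^{k}$.
\end{itemize}

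Your explicit bound makes $m'=\Xi(k)^{\Xi(k)m}$ exponential in $m$. This illustrates the paper's remark before \Cref{cor:polynomial} that Ramsey's theorem is the only non-polynomial step of \Cref{lem:winning-packing}, which is why that corollary replaces it by metric conversion.
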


\begin{lemma}\label{lem:winning-packing}
Every class of graphs that is winning for Flipper is flip-packable.
\end{lemma}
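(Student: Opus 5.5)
Fix $r$ and let $\ell$ be the bound with $\rk_{r}(G)\le\ell$ for all $G\in\C$; the radius used in the rank may later be rescaled by a constant. The plan is an induction on the rank that produces separation in the flip-metric. Ramsey's theorem then converts flip-metric separation into separation in a single flip over the same parameter set.

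\textbf{Inductive claim.} For every $j\le\ell$ and all $\varepsilon,m$ there are $\delta_j>0$ and $t_j$ such that the following holds. For every $G\in\C$, every weighting $\w$, and every pair $(U,S)$ with $\rk^G_r(U/S)\le j$ and $|S|\le\ell-j$, there is a family $\Fcal$ with these properties:
\begin{itemize}
\item each $F\in\Fcal$ satisfies $S\subseteq F$ and $|F|\le\ell$;
\item $|\Fcal|\le t_j$;
\item every set $X\subseteq U$ of $\delta_j$-light vertices with $\w(X)\ge\varepsilon$ has $\ind^{r}_F(X)\ge M$ for some $F\in\Fcal$.
\end{itemize}
Here $M$ is the target size fed to Ramsey, defined below.

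\textbf{Base case.} If $j=0$ then $U\subseteq S$. Each vertex of $X$ lies in $S$, so $|X|\le\ell$. Since the vertices of $X$ are $\delta_0$-light, $\w(X)<\ell\delta_0$. Choosing $\delta_0<\varepsilon/\ell$ makes the claim vacuous.

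\textbf{Inductive step.}
\begin{enumerate}
\item Let $s$ be Flipper's optimal move, and put $S'=S\cup\{s\}$. Every ball $U\cap B^r_{S'}(u)$ has rank at most $j-1$.
\item Greedily choose centres $u_1,\dots,u_p$ such that the balls $B^{2r}_{S'}(u_i)$ are pairwise disjoint within $U$ and each satisfies $\w(U\cap B^{r}_{S'}(u_i))\ge\eta$, where $\eta:=\varepsilon/(2M)$. Disjointness forces $p\le1/\eta$.
\item Set $\Fcal:=\{S'\}\cup\bigcup_i\Fcal_i$. Each $\Fcal_i$ comes from the induction hypothesis applied to $(U\cap B^r_{S'}(u_i),S')$, with threshold $\varepsilon'=\varepsilon\eta/2$ measured against total weight $1$. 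The recurrence $t_j=1+t_{j-1}/\eta$ then keeps $t_j$ bounded. I would take $\delta_j$ small enough to be below $\delta_{j-1}$ and below the concentration threshold.
\end{enumerate}

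\textbf{Dichotomy for a heavy $X\subseteq U$.} Either $\ind^r_{S'}(X)\ge M$, in which case $S'$ serves $X$, or a maximal $r$-independent subset $Y\subseteq X$ has fewer than $M$ elements. In the second case $X\subseteq\bigcup_{y\in Y}B^r_{S'}(y)$, so some ball $B^r_{S'}(y)$ carries at least $\varepsilon/M$ of $X$'s weight. By maximality of the packing, this ball is within radius $2r$ of some $u_i$.

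\textbf{Main obstacle.} The weight of $X$ then lies in $B^{3r}_{S'}(u_i)$ rather than in the smaller ball $B^r_{S'}(u_i)$ where the rank drops. Radii blow up, so I would run the Flipper rank at a larger radius, say $r_j=3^{\ell-j}r$ across levels, which is possible since winning holds at every radius. Concretely, localise at radius $3\rho$ while scattering at radius $\rho$. Then the ball $B^{3\rho}$ around $u_i$ has rank at most $j-1$, and its portion of $X$ carries weight at least $\varepsilon/M$, which meets the induction threshold. Packing disjointness is arranged at radius $\rho$, with centres chosen by weight of the $\rho$-ball. The induction then yields, inside that region, an $F\supseteq S'$ with $M$ elements of $X$ far apart in $\dist_F$, at radius $r_0=r$ at the bottom. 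The case split and the radius bookkeeping are the delicate parts.

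\textbf{Conversion to single flips.} Flipper-winning implies monadic stability by Theorem~\ref{thm:finfliprank}, hence bounded VC-dimension, although only finite $\Xi(\ell)$ is actually needed. For each $F$, the $\Xi(\ell)$ flips in $\flip(G/F)$ realise $\dist_F$. Take $M=\Rcal(m,\Xi(\ell))$. For a scattered set of size $M$, colour each pair by a flip that places it at distance greater than $r$. A monochromatic $m$-clique is then $r$-independent in a single $G'\in\flip(G/F)$. This gives an $(r,\varepsilon,\delta_\ell,m)$-flip-net of order $(\ell,t_\ell)$, where $k=\ell$ depends on $r$ alone.
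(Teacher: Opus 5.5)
Your proposal follows the paper's proof. It runs an induction on the flip-separation rank and packs a maximal family of pairwise disjoint balls of weight bounded below, of which additivity allows only boundedly many. It then splits into two cases: either $S'$ already scatters $M$ elements of $X$, or $X$ concentrates in a single ball, which maximality and the triangle inequality place inside the larger ball around a packed centre. Finally, Ramsey's theorem with $\Xi(\ell)$ colours passes from the flip-metric to a single flip.

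The only thing to fix is the radius bookkeeping, and it is simpler than you think. Nothing blows up across levels. The inductive claim needs only a rank bound on its arena $U$, and it always scatters at the same radius $r$. So the paper fixes one radius throughout:
\begin{itemize}
\item the rank bound refers to $\rk_{3r}$;
\item Flipper's move is chosen for $\rk_{3r}$ at every level;
\item disjoint $r$-balls of weight at least $\varepsilon/m$ are packed;
\item the induction hypothesis is applied to $U\cap B^{3r}_{S'}(u_i)$.
\end{itemize}
This is your ``Concretely'' sentence with $\rho=r$ held fixed.

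You should drop the level-dependent radii $r_j=3^{\ell-j}r$. As written they contradict your ``$r_0=r$ at the bottom''. In any consistent reading, some level must choose Flipper's moves for the rank at a radius of order $3^{\ell}r$. The bound on that rank over $\C$ may exceed the $\ell$ you fixed at radius $r$, which would force a still larger radius, so the scheme is circular. For the same reason, $\ell$ must be taken as the bound on $\rk_{3r}$ from the start, not on $\rk_r$.

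Your base case differs from the paper's, and both work. You use $|S|\le\ell$ and $\delta_0<\varepsilon/\ell$, which makes it vacuous. The paper instead takes $\delta_0=\varepsilon/m$, so that $X\subseteq S$ has more than $m$ elements, and these are pairwise at infinite flip-distance because each vertex of $S$ is isolated in some $S$-flip.
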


\begin{proof}
Fix $r\in\N$ throughout, and define two sequences by
\[ \delta_0(\varepsilon,m):=\tfrac{\varepsilon}{m},\qquad
   \delta_{k+1}(\varepsilon,m):=\delta_k\big(\tfrac{\varepsilon}{m},m\big), \]
\[ t_0(\varepsilon,m):=1,\qquad
   t_{k+1}(\varepsilon,m):=\big\lfloor\tfrac{m}{\varepsilon}\big\rfloor\cdot
   t_k\big(\tfrac{\varepsilon}{m},m\big)+1 \]
for $\varepsilon>0$ and $m\in\N$. Unrolling the recursions, the tolerance is divided by $m$ at each
level, so
\[ \delta_k(\varepsilon,m)=\frac{\varepsilon}{m^{k+1}}
   \qquad\text{and}\qquad
   t_k(\varepsilon,m)\;\le\;\prod_{i=1}^{k}\Big(\frac{m^i}{\varepsilon}+1\Big)
   \;\le\;\Big(\frac{2m^k}{\varepsilon}\Big)^{k}, \]
the last inequality for $\varepsilon\le1$, which we may assume, as for $\varepsilon>1$ there is no
heavy set to serve. As agreed in \Cref{sec:prelims}, weightings are normalised to total weight $1$
throughout the proof, so that $\varepsilon$-heavy means $\w(X)\ge\varepsilon$ and $\delta$-light
means $\w(v)<\delta$.

\medskip
\noindent\textbf{Stage 1: packing in the flip-metric.} We claim that for every $k\in\N$ the
following holds.
\begin{quote}
Let $\varepsilon>0$ and $m\in\N$, let $\w$ be a weighting of a graph $G$, and let
$A,S\subseteq V(G)$ satisfy $\rk_{3r}(A/S)\le k$. Then there is a family
$\Fcal\subseteq[V(G)]^{\le k}$ with $|\Fcal|\le t_k(\varepsilon,m)$ such that every
$\varepsilon$-heavy $X\subseteq A$ of $\delta_k(\varepsilon,m)$-light vertices satisfies
$\ind^r_{F\cup S}(X)\ge m$ for some $F\in\Fcal$.
\end{quote}
We argue by induction on $k$.

\emph{Base case.} If $k=0$ then $A\subseteq S$. Let $X\subseteq A$ be $\varepsilon$-heavy with
$\frac\varepsilon m$-light vertices. Then
\[ \varepsilon\;\le\;\w(X)\;<\;|X|\cdot\frac{\varepsilon}{m}, \]
so $X$ has more than $m$ elements, all of them in $S$. In the flip over $S$ that isolates every
vertex of $S$, these elements lie in distinct singleton components, hence pairwise at infinite
distance, and so $\ind^r_S(X)\ge m$. Thus $\Fcal:=\{\emptyset\}$ suffices.

\emph{Inductive step.} Assume the claim for $k$, and let $\varepsilon$, $m$, $\w$, $A$ and $S$ be as
in the statement for $k+1$. If $A\subseteq S$ we are done as in the base case, so assume not. Then
$\rk_{3r}(A/S)\le k+1$ unfolds, by \Cref{def:rank}, to
\[ \min_{s\in V(G)}\ \max_{u\in A}\ \rk_{3r}\big(A\cap B^{3r}_{S\cup\{s\}}(u)\,/\,S\cup\{s\}\big)
   \;\le\;k, \]
that is, Flipper has a move that drops the rank: there is $s\in V(G)$ such that, writing
$S':=S\cup\{s\}$,
\[ \rk_{3r}\big(A\cap B^{3r}_{S'}(u)\,/\,S'\big)\le k\qquad\text{for all }u\in A. \tag{$*$} \]
Fix such an $s$, and pack the graph with balls of large weight centred in $A$: let
\[ \Bcal:=\{B^r_{S'}(v_1),\dots,B^r_{S'}(v_\ell)\},\qquad v_1,\dots,v_\ell\in A, \]
be an inclusion-maximal family of pairwise disjoint balls of the $S'$-metric, centred in $A$, each
of weight at least $\frac{\varepsilon}{m}$. Such a family exists, and is finite, because
disjointness bounds the number of its members:
\[ \ell\cdot\frac{\varepsilon}{m}\;\le\;\sum_{i\in[\ell]}\w\big(B^r_{S'}(v_i)\big)\;\le\;\w(V(G))=1,
   \qquad\text{so}\qquad \ell\le\Big\lfloor\frac{m}{\varepsilon}\Big\rfloor. \]
This is what keeps the recursion from branching unboundedly; see \Cref{fig:packingstep}.

Each packed ball is a place where the induction hypothesis applies. For $i\in[\ell]$ put
\[ A_i:=A\cap B^{3r}_{S'}(v_i). \]
As $v_i\in A$, $(*)$ gives $\rk_{3r}(A_i/S')\le k$. The induction hypothesis, applied to $A_i$ and
$S'$ with tolerance $\frac\varepsilon m$ and the same $m$, therefore yields a family $\Fcal_i$ of at
most $t_k(\frac\varepsilon m,m)$ sets of size at most $k$ such that every $\frac\varepsilon m$-heavy
$Y\subseteq A_i$ of $\delta_k(\frac\varepsilon m,m)$-light vertices has $\ind^r_{F\cup S'}(Y)\ge m$
for some $F\in\Fcal_i$; note that
\[ \delta_k\big(\tfrac\varepsilon m,m\big)=\delta_{k+1}(\varepsilon,m) \]
by the definition of the sequence, so these are exactly the sets of vertices that the claim for
$k+1$ concerns. Collecting the families and adjoining $s$ to each member, we set
\[ \Fcal:=\{\{s\}\}\ \cup\ \bigcup_{i\in[\ell]}\big\{F\cup\{s\}:F\in\Fcal_i\big\}. \]
Every member of $\Fcal$ has size at most $k+1$, and
\[ |\Fcal|\;\le\;\ell\cdot t_k\big(\tfrac\varepsilon m,m\big)+1
   \;\le\;\Big\lfloor\frac m\varepsilon\Big\rfloor\cdot t_k\big(\tfrac\varepsilon m,m\big)+1
   \;=\;t_{k+1}(\varepsilon,m), \]
as required.

It remains to check that $\Fcal$ does what is asked. Fix an $\varepsilon$-heavy $X\subseteq A$ of
$\delta_{k+1}(\varepsilon,m)$-light vertices. There are two cases, according to whether the
parameters played so far already pull $X$ apart.

If $\ind^r_{S'}(X)\ge m$, then $F:=\{s\}\in\Fcal$ will do, since $F\cup S=S'$ and hence
$\ind^r_{F\cup S}(X)=\ind^r_{S'}(X)\ge m$.

Otherwise, let $Z$ be a maximal $r$-independent subset of $X$ in the $S'$-metric, so that $|Z|<m$.
By maximality, every $x\in X$ is at $S'$-distance at most $r$ from some $z\in Z$, that is, the
balls $B^r_{S'}(z)$ for $z\in Z$ cover $X$. Hence
\[ \varepsilon\;\le\;\w(X)\;\le\;\sum_{z\in Z}\w\big(X\cap B^r_{S'}(z)\big), \]
and as the sum has fewer than $m$ terms, some $x\in Z\subseteq X$ satisfies
\[ \w\big(X\cap B^r_{S'}(x)\big)\;\ge\;\frac{\varepsilon}{m}; \]
that is, $X$ concentrates in a single small ball. This ball is centred in $A$ and has weight at least
$\frac\varepsilon m$, so by the maximality of $\Bcal$ it is not disjoint from all the packed balls:
there is $i\in[\ell]$ with $B^r_{S'}(x)\cap B^r_{S'}(v_i)\ne\emptyset$. The triangle inequality then
places the small ball inside the corresponding larger one: for $y'$ in the intersection and any
$y\in B^r_{S'}(x)$,
\[ \dist_{S'}(v_i,y)\;\le\;\dist_{S'}(v_i,y')+\dist_{S'}(y',x)+\dist_{S'}(x,y)\;\le\;3r, \]
so $B^r_{S'}(x)\subseteq B^{3r}_{S'}(v_i)$. Consequently $X\cap B^r_{S'}(x)$ is a subset of
$A\cap B^{3r}_{S'}(v_i)=A_i$; it is $\frac\varepsilon m$-heavy, and its vertices, being vertices of
$X$, are $\delta_{k+1}(\varepsilon,m)$-light. The induction hypothesis for $i$ thus supplies
$F\in\Fcal_i$ with
\[ \ind^r_{F\cup S'}\big(X\cap B^r_{S'}(x)\big)\;\ge\;m. \]
Then $F^\star:=F\cup\{s\}$ lies in $\Fcal$, and since $F^\star\cup S=F\cup S'$ and
$X\supseteq X\cap B^r_{S'}(x)$, we get $\ind^r_{F^\star\cup S}(X)\ge m$. This completes the
induction.

\medskip
\noindent\textbf{Stage 2: from the flip-metric to a single flip.} Let $k$ be such that
$\rk_{3r}(G)\le k$ for all $G\in\C$, which exists as $\C$ is winning for Flipper, and given
$\varepsilon>0$ and $m\in\N$ put
\[ m':=\Rcal\big(m,\Xi(k)\big),\qquad \delta:=\delta_k(\varepsilon,m'),\qquad t:=t_k(\varepsilon,m'). \]
Let $\w$ be a weighting of some $G\in\C$. Since $\rk_{3r}(V(G)/\emptyset)\le k$, Stage~1 applied to
$A:=V(G)$ and $S:=\emptyset$, with tolerance $\varepsilon$ and target $m'$, provides a family
$\Fcal\subseteq[V(G)]^{\le k}$ of size at most $t$ such that every $\varepsilon$-heavy $X$ of
$\delta$-light vertices admits $F\in\Fcal$ with
\[ \ind^r_F(X)\;\ge\;m'. \]
We show that $\Fcal$ is an $(r,\varepsilon,\delta,m)$-flip-net of order $(k,t)$ for $\w$.

Fix such an $X$ and $F$, and let $Y\subseteq X$ be a set of $m'$ elements pairwise at
flip-distance greater than $r$ over $F$. Since $\dist_F$ is the maximum of $\dist_{G'}$ over
$G'\in\flip(G/F)$, for each pair $u,v\in Y$ there is some $G'\in\flip(G/F)$ with
$\dist_{G'}(u,v)>r$ --- but possibly a different one for each pair. Colour the pair $uv$ by one such
$G'$. The number of colours is at most $|\flip(G/F)|\le\Xi(|F|)\le\Xi(k)$, so by the choice of $m'$
and \Cref{lem:ramsey} there is $Y'\subseteq Y$ of size $m$ all of whose pairs receive the same
colour $G'$. Then every two elements of $Y'$ are at distance greater than $r$ in this single
$G'\in\flip(G/F)$, so $Y'$ is $r$-independent in $G'$, and as $Y'\subseteq X$ we conclude
$\ind^r_{G'}(X)\ge m$, as required.
\end{proof}

\begin{figure}[t]
\centering
\includegraphics[width=0.5\textwidth]{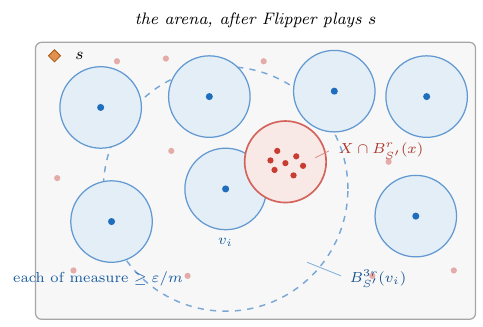}
\caption{\small The inductive step of \Cref{lem:winning-packing}. The shaded discs are a maximal
family of pairwise disjoint $r$-balls over $S'=S\cup\{s\}$ of weight at least $\varepsilon/m$; at
most $m/\varepsilon$ of them fit. A heavy $X$ that is not already pulled apart by $S'$ concentrates
in a single $r$-ball, drawn in red, which by maximality meets one of the packed balls, and the
triangle inequality then places it inside $B^{3r}_{S'}(v_i)$, where the induction hypothesis
applies.}
\label{fig:packingstep}
\end{figure}

The converse direction of \Cref{thm:main} is the case $\varepsilon=1$.

\begin{lemma}\label{lem:packing-flat}
Every flip-packable class of graphs is flip-flat.
\end{lemma}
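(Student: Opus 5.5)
Given $r$, I take the $k$ supplied by flip-packability (\Cref{def:packing}) and use the same $k$ as the flip-flatness parameter. For $m\in\N$, I apply flip-packability with $\varepsilon:=1$ and target $m$, which gives $\delta>0$ and $t$. The plan is to feed in the indicator weighting $\w_A$ of the given set $A$.

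The threshold will be $f(m):=\lfloor 1/\delta\rfloor+1$. Then $|A|\ge f(m)$ means $|A|>1/\delta$, so $\w_A$ is $\delta$-balanced and every vertex of $A$ is $\delta$-light. Since $\w_A(A)=|A|=\w_A(V(G))$, the set $X:=A$ is $1$-heavy for $\w_A$.

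The flip-net $\Fcal$ of order $(k,t)$ for $\w_A$ is therefore obliged to serve $X=A$. This produces $F\in\Fcal$ with $|F|\le k$ and $G'\in\flip(G/F)$ such that $\ind^r_{G'}(A)\ge m$. Taking $S:=F$ gives exactly the conclusion of \Cref{def:flipflat}, with $k$ depending on $r$ only and $f$ depending on $m$ (and $r$).

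The only point needing care is that $k$ must not depend on $m$. This holds because in \Cref{def:packing} $k$ is quantified before $\varepsilon$ and $m$. Only $\delta$ and $t$ depend on $m$, and $t$ is not used at all. No genuine obstacle is expected. Combined with \Cref{thm:flipflat}, this gives the converse direction of \Cref{thm:main}.
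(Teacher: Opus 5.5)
Your proposal is correct and follows essentially the paper's proof. Both apply flip-packability with $\varepsilon=1$ to the indicator weighting $\w_A$ of a sufficiently large $A$, so that $A$ is $1$-heavy and consists of $\delta$-light vertices, and must therefore be served by some $F\in\Fcal$ of size at most $k$. Your threshold $\lfloor1/\delta\rfloor+1$ works just as well as the paper's $\lceil1/\delta\rceil+1$, since both guarantee $|A|>1/\delta$.
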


\begin{proof}
Given $r$, let $k$ be as in \Cref{def:packing}, and given $m$ let $\delta:=\delta(r,1,m)$ be the
balance of \Cref{def:packing} for tolerance $\varepsilon=1$; put $f(m):=\lceil1/\delta\rceil+1$.
Let $G\in\C$ and $A\subseteq V(G)$ with $|A|\ge f(m)$, and consider the indicator weighting $\w_A$.
Its total weight is $|A|>1/\delta$, so every vertex is $\delta$-light, and $A$ itself is $1$-heavy.
An $(r,1,\delta,m)$-flip-net $\Fcal$ of order $(k,t)$ for $\w_A$ therefore serves $A$: some
$F\in\Fcal$ of size at most $k$ and some $G'\in\flip(G/F)$ have $\ind^r_{G'}(A)\ge m$, as
flip-flatness requires.
\end{proof}

\begin{proof}[Proof of \Cref{thm:main}]
If $\C$ is monadically stable then it is winning for Flipper by \Cref{thm:finfliprank}, hence
flip-packable by \Cref{lem:winning-packing}. The converse is \Cref{lem:packing-flat} together with
\Cref{thm:flipflat}.
\end{proof}

The proof of \Cref{lem:packing-flat} uses no weighting other than an indicator one, so
\Cref{thm:main} remains true if \Cref{def:packing} is restricted to indicator weightings, for which
the balance is automatic once the set carrying the weight has more than $1/\delta$ vertices; the
definition then takes the following purely combinatorial form.

\begin{corollary}[counting form]\label{cor:counting}
A class $\C$ of graphs is monadically stable if and only if for every $r\in\N$ there is $k\in\N$
such that for every $\varepsilon>0$ and $m\in\N$ there are $n_0,t\in\N$ with the following
property: for every $G\in\C$ and every $A\subseteq V(G)$ with $|A|\ge n_0$ there is a family
$\Fcal\subseteq[V(G)]^{\le k}$ with $|\Fcal|\le t$ such that every $X\subseteq A$ with
$|X|\ge\varepsilon|A|$ satisfies $\ind^r_{G'}(X)\ge m$ for some $F\in\Fcal$ and $G'\in\flip(G/F)$.
\end{corollary}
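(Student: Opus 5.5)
Both directions come from results already in place: \Cref{thm:main} for the forward implication, and the argument of \Cref{lem:packing-flat} for the converse, which only ever uses indicator weightings.

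\emph{Forward direction.} Let $\C$ be monadically stable and fix $r$. By \Cref{thm:main}, $\C$ is flip-packable; let $k$ be the bound of \Cref{def:packing}. Given $\varepsilon$ and $m$, take the $\delta$ and $t$ that \Cref{def:packing} supplies, and set $n_0:=\lfloor 1/\delta\rfloor+1$. For $G\in\C$ and $A\subseteq V(G)$ with $|A|\ge n_0$, use the indicator weighting $\w_A$. Its total weight is $|A|>1/\delta$, so every vertex has weight at most $1<\delta\,\w_A(V(G))$ and is $\delta$-light. Moreover, a set $X\subseteq A$ with $|X|\ge\varepsilon|A|$ is exactly an $\varepsilon$-heavy set for $\w_A$. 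An $(r,\varepsilon,\delta,m)$-flip-net of order $(k,t)$ for $\w_A$ therefore serves every such $X$, which is the required family $\Fcal$.

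\emph{Converse.} Suppose the counting property holds. Specialise to $\varepsilon=1$: the only $X\subseteq A$ with $|X|\ge|A|$ is $A$ itself. So for every $r$ there is $k$ such that, for every $m$, every $A$ with $|A|\ge n_0(m)$ has some $F$ of size at most $k$ and some $G'\in\flip(G/F)$ with $\ind^r_{G'}(A)\ge m$. This is flip-flatness with $f(m):=n_0(m)$, so $\C$ is monadically stable by \Cref{thm:flipflat}.

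No step is a real obstacle. The only points to check are that the indicator weighting is nonzero, which holds because $A$ is nonempty when $n_0\ge1$, and that the strict inequality defining $\delta$-lightness is met, which the choice $n_0>1/\delta$ ensures.
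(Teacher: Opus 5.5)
Your proof is correct and follows the paper's own argument. The forward direction restricts flip-packability (\Cref{thm:main}) to indicator weightings, taking $n_0>1/\delta$ so that balance is automatic. The converse specialises to $\varepsilon=1$ to obtain flip-flatness, exactly as in \Cref{lem:packing-flat}, and concludes by \Cref{thm:flipflat}.
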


\section{Flip-separability and $2$-flip-packability}\label{sec:backtosep}

We now examine the case $m=2$ of flip-packability.

\begin{definition}\label{def:2packing}
A class $\C$ of graphs is \emph{$2$-flip-packable} if for every $r\in\N$ there is $k\in\N$ such
that for every $\varepsilon>0$ there are $\delta>0$ and $t\in\N$ such that every weighting of every
$G\in\C$ has an $(r,\varepsilon,\delta,2)$-flip-net of order $(k,t)$.
\end{definition}

This is \Cref{def:packing} with $m$ fixed to $2$, so every flip-packable class is $2$-flip-packable.
The main result of this section is that it characterises monadic dependence.

\begin{theorem}\label{thm:2packing-equiv}
A class of graphs is $2$-flip-packable if and only if it is monadically dependent.
\end{theorem}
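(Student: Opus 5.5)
We sandwich $2$-flip-packability between monadic dependence and flip-separability, as announced in the introduction. For the backward direction we show that every $2$-flip-packable class is flip-breakable; \Cref{thm:flipbreak} then gives monadic dependence. Alternatively one could show that it is flip-separable and use \Cref{thm:flipsep}. The forward direction is the substantial one: we build, for a monadically dependent class, a flip-net of order $(k,t)$ with $k$ depending on $r$ alone, by re-reading the proof of \Cref{thm:flipsep} in \cite{separability}.

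\emph{Backward direction.} Fix $r$ and let $k$ be given by \Cref{def:2packing} for radius $r$. Given $m$, take $\varepsilon:=1/2$ together with the corresponding $\delta,t$. Let $A$ be large and use the indicator weighting $\w_A$, so every vertex is $\delta$-light once $|A|>1/\delta$. The aim is to extract from the $2$-flip-net $\Fcal$ two large subsets of $A$ that are far apart in a single flip.

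For each $F\in\Fcal$, fix in each of the boundedly many flips $G'\in\flip(G/F)$ the family of $r$-balls. Then proceed greedily. If some flip $G'$ and some ball-like region split $A$ so that both sides have at least $|A|/(4t\,\Xi(k))$ elements, we are done after adjusting radii. Otherwise, for every $F$ and $G'$ some single $G'$-ball around a point of $A$ carries most of $A$. Intersecting these dominant balls over all $\le t\,\Xi(k)$ pairs $(F,G')$ yields a set $X\subseteq A$ that is still $\frac12$-heavy and lies in one $r$-ball of each flip. This contradicts the net property with $m=2$.

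The bookkeeping here is delicate. A ball containing most of $A$ does not directly give two far-apart large parts, so I would work at radius $2r$ in the net and radius $r$ in breakability. The point is that, if no $2r$-ball of $G'$ centred in $A$ is $(1-\eta)$-heavy, a standard argument splits $A$ into a $2r$-ball and its complement. These sets are at distance $>r$ only after shrinking the ball, so I would use a layering argument: choose a radius $\rho\in[r,Cr]$ where the annulus is light, and let $f(m)$ absorb the losses.

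\emph{Forward direction, the main obstacle.} For monadically dependent $\C$, the proof of flip-separability in \cite{separability} proceeds by an iterative scheme. It repeatedly takes a heavy ball, finds a bounded parameter set splitting it via flip-breakability, and recurses $\Ocal(\log\frac1\varepsilon)$ levels deep. It then shows that the resulting large family of parameter sets, each of size bounded in terms of $r$, compresses to a bounded family, and finally unions them into a single flip. I would stop before the final union. The claim is that every $\varepsilon$-heavy set $X$ of light vertices is already split by one member of the compressed family, namely the member attached to the last heavy region containing $X$, exactly as in the centroid picture of the introduction. Then \Cref{fact:conversion} converts flip-metric separation into a definable flip with a radius change, and a Ramsey step is unnecessary since $m=2$.

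The hard part is verifying that the size of each member depends only on $r$ while the number of members depends on $\varepsilon$. This requires tracking the proof of \cite{separability} closely; I expect the compression lemma there to be the crux. If it does not directly separate per-member size from $\varepsilon$, I would redo the recursion myself. At each level I would list, for each of the $\le 1/\varepsilon'$ disjoint heavy balls in a maximal packing (as in Stage 1 of \Cref{lem:winning-packing}), a balanced-separator parameter set of size $k(r)$ from flip-breakability. A heavy $X$ either is split by one of these, or concentrates in a ball whose weight has dropped by a constant factor, so the depth is $\Ocal(\log\frac1\varepsilon)$ and only $t$ grows.
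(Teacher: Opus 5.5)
\paragraph{The forward direction has a genuine gap.}
The proof of \Cref{thm:flipsep} in \cite{separability} is not a depth-$\Ocal(\log\frac1\varepsilon)$ recursion on heavy balls. It starts from a family of linear size, one set $S_v\ni v$ per vertex, and compresses it using sunflowers, flip-breakability and Gaifman locality. So ``the member attached to the last heavy region containing $X$'' has no referent.

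What actually makes the pre-union family a $2$-flip-net is the invariant the compression preserves. Zero out the $\delta$-heavy vertices, with $\delta=\varepsilon^2/2$. The resulting weighting is $\frac\varepsilon2$-balanced, and every intersection $B^r_{\Fcal}(v)=\bigcap_{S\in\Fcal}B^r_S(v)$ carries less than $\varepsilon$. Now take a heavy $X$ and $u\in X$. Some $v\in X$ escapes $B^r_{\Fcal}(u)$, so one member $S$ has $\dist_S(u,v)>r$, and a single $S$-flip separates $u$ and $v$. No metric conversion is involved. The point you yourself flag as the crux, that member size depends on $r$ alone, is exactly what has to be checked inside the compression, and you leave it open.

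Your fallback fails at a specific step. Flip-breakability does not supply a balanced separator for a weighted region. It only gives two $m$-element sets far apart, which may carry a negligible fraction of the weight, so nothing forces the weight of the region holding $X$ to drop by a constant factor.

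The fallback becomes correct if you take the separator from \Cref{thm:flipsep} itself, at tolerance $\frac12$ and radius $3r$. Apply it to the weighting restricted to the current region $D$; with heavy vertices zeroed, $\delta\le\varepsilon/2$ and $\w(D)\ge\varepsilon$, every vertex is $\frac12$-light. An $X$ not split by that flip lies in one $r$-ball. That ball meets a packed ball, so it lies in a $3r$-ball carrying less than half of $\w(D)$. After $\log_2\frac1\varepsilon$ levels this yields members of size $k(3r,\frac12)$. It derives $2$-flip-packability from \Cref{thm:flipsep} as a black box, a valid alternative to the paper's compression route.

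\paragraph{The backward direction is a different route, but its dichotomy needs fixing.}
The paper unions the net with the $\delta$-heavy vertices into one set over which every flip-metric ball around a light vertex is light. It then reaches flip-separability via \Cref{fact:conversion}(2). Your idea is different: intersect one dominant ball from each of the at most $t\,\Xi(k)$ flips and contradict the net. That idea is sound, and it would even give breakability over a single member of the net with no Ramsey step.

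However, the dichotomy as you state it is false. The absence of a split with both sides of size $|A|/(4t\,\Xi(k))$ does not force a dominant ball. $A$ may be spread thinly over many $r$-balls, and then greedy yields only about $4t\,\Xi(k)$ scattered points, not $2m$.

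Use absolute thresholds instead. In each flip $G'$, one of three cases holds:
\begin{enumerate}
\item Every $r$-ball meets $A$ in fewer than $m$ points. Then greedy gives $2m$ points pairwise at distance $>r$ once $|A|\ge2m^2$.
\item Some $B^r_{G'}(u)$ meets $A$ in at least $m$ points and $A\setminus B^{2r}_{G'}(u)$ has at least $m$ points. These two sets are already at distance $>r$, so no layering is needed.
\item $B^{2r}_{G'}(u)$ misses fewer than $m$ points of $A$.
\end{enumerate}
Suppose the last case occurs in every flip. Then the intersection of the dominant balls misses fewer than $m\,t\,\Xi(k)$ points. Once $|A|\ge\max(2m\,t\,\Xi(k),\lfloor1/\delta\rfloor+1)$, it is $\frac12$-heavy with $\delta$-light vertices. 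It also has diameter at most $4r$ in every flip, which contradicts a net taken at radius $4r$ rather than $2r$.
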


We prove this by sandwiching $2$-flip-packability between monadic dependence and
flip-separability. That monadic dependence implies $2$-flip-packability follows by analysing the proof of
\cite{separability}: the family of parameter sets built there already has the two-tier shape of
\Cref{def:2packing}, and is merged into a single flip at the very end. That
$2$-flip-packability implies flip-separability is the merging itself, done with the metric
conversion of \Cref{fact:conversion}(2); flip-separability implies monadic dependence by
\Cref{thm:flipsep}. We take the two steps in this order.

\subparagraph*{From monadic dependence to $2$-flip-packability.}
The argument of \cite{separability} constructs, for a balanced weighting, a bounded family of
bounded parameter sets over which every ball of the combined flip-metric is light.

\begin{fact}[follows from {\cite[Lemma~20]{separability}}]\label{fact:sparsifying-family}
Let $\C$ be a monadically dependent class of graphs. For every $r\in\N$ there is $k\in\N$, and for
every $\varepsilon>0$ there is $t\in\N$, such that for every $G\in\C$ and every
$\varepsilon$-balanced weighting $\w$ of $G$ there is a family $\Fcal$ of at most $t$ subsets of
$V(G)$, each of size at most $k$, with
\[ \w\big(B^r_{\Fcal}(v)\big)\;\le\;\varepsilon\cdot\w(V(G))\quad\text{for every }v\in V(G),
   \qquad\text{where }B^r_{\Fcal}(v):=\bigcap_{S\in\Fcal}B^r_S(v). \]
\end{fact}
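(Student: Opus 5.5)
The plan is to read the statement off the proof of \cite[Lemma~20]{separability} rather than reprove it, and to check that the constants have the two-tier shape required: the size $k$ of each parameter set depends on $r$ alone, and only the number $t$ of sets depends on $\varepsilon$. In \cite{separability}, flip-separability is obtained by constructing a bounded family of bounded parameter sets whose combined ball system is light, and merging this family into one flip at the very end. We stop before the merging step and record the family itself; $B^r_{\Fcal}(v)=\bigcap_{S\in\Fcal}B^r_S(v)$ is exactly the ball of the metric $\max_{S\in\Fcal}\dist_S$. So the first step is to translate their partition-flip language into $S$-flips, using \Cref{fact:conversion}(1), which on bounded VC-dimension costs only a factor $5$ in the radius. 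We therefore apply their lemma at radius $5r$.

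For the reader it is worth recalling why such a family exists, since this is where the dependence of $k$ on $r$ alone becomes visible. Normalise $\w(V(G))=1$. The construction is iterative. Start with $\Fcal=\emptyset$, so that $B^r_{\Fcal}(v)=V(G)$. While some $v$ has $\w(B^r_{\Fcal}(v))>\varepsilon$, do the following.
\begin{enumerate}
\item Take a weighted sample of this heavy ball. Monadic dependence gives bounded VC-dimension of the relevant definable set systems, so $\varepsilon$-approximations of size depending only on $\varepsilon$ exist.
\item Apply flip-breakability (\Cref{thm:flipbreak}) at a suitably larger radius to a sample large enough that the two far-apart parts it returns carry weight $\Omega(\varepsilon^{c})$ each. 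This yields a set $S$ of size at most $k(r)$.
\item Add $S$ to $\Fcal$.
\end{enumerate}
Progress is measured by the potential
\[\Phi(\Fcal):=\sum_{u,v}\w(u)\w(v)\,[\dist_S(u,v)\le r\text{ for all }S\in\Fcal],\]
which lies in $[0,1]$ and never increases as $\Fcal$ grows. Each round separates two heavy parts inside the heavy ball, which lowers $\Phi$ by an amount $\eta(\varepsilon)>0$. Hence there are at most $t=1/\eta(\varepsilon)$ rounds. Since each round adds one set of size $k(r)$, $k$ depends on $r$ only, and $\varepsilon$ enters only through $t$.

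The main obstacle is step~2 above: ensuring that the two parts produced by flip-breakability are heavy in weight, not merely large in a sample. Balance of $\w$ is what makes this possible, since no single vertex can carry a heavy part. The weighted sample must also remain representative for balls of the new flip-metric; this is where \cite{separability} uses bounded VC-dimension of the ball systems $\{B^r_S(v)\}$ over parameter sets of bounded size. If their Lemma~20 is phrased with a combined partition instead of a list, I would instead unfold it. That partition is the common refinement of the partitions $\Pcal_{S}$ for the sets $S$ chosen in the rounds, and listing these sets $S$ gives $\Fcal$ with the stated bounds.
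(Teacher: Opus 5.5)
There is a genuine gap. Your framing matches the paper's: read the fact off \cite[Lemma~20]{separability} before the final merge, and check that the parameter sets have size depending on $r$ alone. But the construction you recall in order to do that check is not the one in \cite{separability}, and it breaks at step~2.

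Flip-breakability (\Cref{thm:flipbreak}) is a counting statement. From a set of size $f(m)$ it extracts two far-apart parts of size only $m$, and $f$ may grow much faster than linearly. Take a heavy ball of $N$ vertices of roughly equal weight. The two parts then carry weight about $f^{-1}(N)/N$, which tends to $0$ with $N$. So the drop in $\Phi$ depends on $N$, not on $\varepsilon$ alone, and no bound $t(\varepsilon)$ follows.

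Sampling does not repair this. On a sample $Z$ of size $N$, breakability returns two subsets of $Z$ of size $m$. These are not ranges of any set system of bounded VC-dimension, so an $\varepsilon$-approximation says nothing about the weight of the vertices they stand for. Likewise, the separation of the sampled points over $S$ says nothing about unsampled vertices. Even uniform convergence for the bounded-VC relation $\dist_S(u,v)>2r$, over all $S$ of size $k$, transfers pair densities from a random sample only up to an error of order $N^{-1/2}$. The pairs guaranteed to be separated make up only about $m^2/N^2$ of the sample pairs, which is below that error unless $f$ is close to linear. Balance only guarantees many vertices and does not help.

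What your iteration actually needs is a weighted breakability: one set $S$, of size depending on $r$ alone, that separates pairs of the heavy ball of total weight at least $\eta(\varepsilon)>0$ in the product weighting $\w\otimes\w$. Getting that from the counting form is exactly the difficulty that \cite{separability} avoids. Given such a statement, your potential argument would go through after a radius correction: two points of $B^r_{\Fcal}(v)$ are only guaranteed to be $2r$-close, so $\Phi$ must count pairs at flip-distance at most $2r$, and breakability must be applied at radius $2r$.

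The proof of \cite[Lemma~20]{separability} runs the other way and never needs heavy parts.
\begin{itemize}
\item It starts from the trivially sparsifying family $\{S_v:v\in V(G)\}$ with $v\in S_v$ and $|S_v|=k$. This works because $v$ is isolated in a flip over $S_v$ and the weighting is balanced.
\item It then repeatedly compresses this family. \cite[Lemma~19]{separability} uses the sunflower lemma, and flip-breakability applied to sets of coordinates of a large sunflower of parameter sets, to find a large subfamily together with boundedly many replacement sets of the same size.
\item Gaifman locality \cite[Lemma~14]{separability} then shows that all but boundedly many members of that subfamily can be discarded.
\end{itemize}
Every set that ever occurs has size $t(\C,3r')$: the breakability parameter at radius $3r'$, with $r'=r'(r)$ the locality radius for quantifier rank $r$. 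The compression stops only once the family falls below a threshold depending on $\varepsilon$. This observation about the actual proof is the whole content of the fact, and your proposal does not verify it.
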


In \cite[Lemma~20]{separability} both the size of the sets and their number are stated with a
dependence on $\varepsilon$, and the letters $k$ and $t$ play the opposite roles. That the size may
be taken to depend on $r$ alone is visible in the proof: the sets have size $t(\C,3r')$, the
flip-breakability parameter of \cite[Corollary~17]{separability} at radius $3r'$, where $r'=r'(r)$
is the locality radius of \cite[Lemma~14]{separability} for formulas of quantifier rank $r$; the
initial family consists of sets of that size, and the compression step
\cite[Lemma~19]{separability} exchanges such sets for sets of the same size. Only the number of sets
depends on $\varepsilon$.

\begin{lemma}\label{lem:dependent-packing}
Every monadically dependent class of graphs is $2$-flip-packable. More precisely, for every
$r\in\N$ there is $k\in\N$ such that for every $\varepsilon>0$ there is $t\in\N$ such that every
weighting of every $G\in\C$ has an $(r,\varepsilon,\varepsilon^2/2,2)$-flip-net of order $(k,t)$.
\end{lemma}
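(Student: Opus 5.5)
The plan is to derive the net directly from \Cref{fact:sparsifying-family}, applied at tolerance $\varepsilon/2$ to a truncated weighting. The only difficulty is that \Cref{fact:sparsifying-family} requires a balanced weighting, whereas \Cref{def:flipnet} places no balance on $\w$, only on the vertices of the sets served. I will resolve this by discarding the heavy vertices before invoking the fact. This is legitimate because the discarded vertices never belong to a set we must serve, and the truncation depends only on $\w$, so the family is still fixed before any $X$ is presented.

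Fix $r$, and let $k$ be the constant of \Cref{fact:sparsifying-family} for $r$, which depends on $r$ alone. Given $\varepsilon\le 1$ (for $\varepsilon>1$ there is nothing to serve), let $t$ be the bound of \Cref{fact:sparsifying-family} for tolerance $\varepsilon/2$. Let $\w$ be a weighting of $G\in\C$, normalised to $\w(V(G))=1$, and put $\delta:=\varepsilon^2/2$. Define $\w'(v):=\w(v)$ if $\w(v)<\delta$ and $\w'(v):=0$ otherwise. If $\w'\equiv 0$, then no set of $\delta$-light vertices has positive weight, so no set needs serving and $\Fcal:=\{\emptyset\}$ works.

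Otherwise, I first check that $\w'$ is $\varepsilon/2$-balanced whenever there is a set to serve. Suppose some $\varepsilon$-heavy set $X$ of $\delta$-light vertices exists. Then $\w'(V(G))\ge\w'(X)=\w(X)\ge\varepsilon$, so every vertex satisfies $\w'(v)<\delta=\tfrac{\varepsilon}{2}\cdot\varepsilon\le\tfrac{\varepsilon}{2}\,\w'(V(G))$. If no such $X$ exists, any family works. So we may apply \Cref{fact:sparsifying-family} to $\w'$ with tolerance $\varepsilon/2$. This yields $\Fcal$ with $|\Fcal|\le t$ and members of size at most $k$, such that $\w'(B^r_\Fcal(v))\le\tfrac{\varepsilon}{2}\w'(V(G))$ for every $v$.

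Now fix an $\varepsilon$-heavy $X$ of $\delta$-light vertices and any $x\in X$. Since $\w'(V(G))\le 1$, we have
\[
\w'(X)=\w(X)\ge\varepsilon\ge\varepsilon\,\w'(V(G))>\tfrac{\varepsilon}{2}\,\w'(V(G))\ge\w'\big(B^r_\Fcal(x)\big),
\]
where the strict inequality uses $\w'(V(G))>0$. Hence $X\not\subseteq B^r_\Fcal(x)$, so some $y\in X$ lies outside $B^r_S(x)$ for some $S\in\Fcal$, that is, $\dist_S(x,y)>r$. By the definition of the flip-metric as a maximum over $\flip(G/S)$, some $G'\in\flip(G/S)$ has $\dist_{G'}(x,y)>r$. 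Then $\ind^r_{G'}(X)\ge 2$, as required.

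I expect no serious obstacle, since the substance lies entirely in \Cref{fact:sparsifying-family}. The one delicate point is the truncation: it must be checked that passing from $\w$ to $\w'$ only makes $X$ relatively heavier, because the total weight can only decrease. It must also be checked that the balance is strict, which is why the factor $1/2$ appears in $\delta=\varepsilon^2/2$ and in the tolerance $\varepsilon/2$.
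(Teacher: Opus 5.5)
Your proposal is correct and follows the paper's proof essentially step for step. You truncate $\w$ to the $\delta$-light vertices, check that the truncation is $\frac\varepsilon2$-balanced because $\delta=\frac\varepsilon2\cdot\varepsilon$, and apply \Cref{fact:sparsifying-family} at tolerance $\frac\varepsilon2$; since $\w'(B^r_{\Fcal}(x))<\varepsilon\le\w'(X)$, some element of $X$ then lies outside $B^r_{\Fcal}(x)$. The only difference is cosmetic: you split cases on whether a set to serve exists, where the paper splits on whether $\w(V(G)\setminus H)\ge\varepsilon$.
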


\begin{proof}
Fix $r$, let $k$ be as in \Cref{fact:sparsifying-family} for $r$, and given $\varepsilon>0$ let
$t$ be as in that fact for tolerance $\varepsilon/2$. Put $\delta:=\varepsilon^2/2$. Let $\w$ be a
weighting of some $G\in\C$, normalised to total weight $1$, and let $H$ be the set of
$\delta$-heavy vertices. If $\w(V(G)\setminus H)<\varepsilon$ there is no $\varepsilon$-heavy set of
$\delta$-light vertices, and any family is a flip-net; so assume $\w(V(G)\setminus H)\ge\varepsilon$.

Let $\w'$ be the restriction of $\w$ to $V(G)\setminus H$, that is, $\w'(v):=\w(v)$ for $v\notin H$
and $\w'(v):=0$ for $v\in H$. Every vertex has $\w'(v)<\delta=\tfrac\varepsilon2\cdot\varepsilon
\le\tfrac\varepsilon2\cdot\w'(V(G))$, so $\w'$ is $\frac\varepsilon2$-balanced, and
\Cref{fact:sparsifying-family} applied to $\w'$ with tolerance $\frac\varepsilon2$ yields a family
$\Fcal$ of at most $t$ sets of size at most $k$ with
\[ \w'\big(B^r_{\Fcal}(v)\big)\;\le\;\tfrac\varepsilon2\cdot\w'(V(G))\;<\;\varepsilon
   \qquad\text{for every }v\in V(G). \]

We claim that $\Fcal$ is an $(r,\varepsilon,\delta,2)$-flip-net for $\w$. Let $X$ be
$\varepsilon$-heavy for $\w$ with $\delta$-light vertices. Then $X\subseteq V(G)\setminus H$, so
$\w'(X)=\w(X)\ge\varepsilon$. Pick $u\in X$. As $\w'(B^r_{\Fcal}(u))<\varepsilon\le\w'(X)$, some
$v\in X$ lies outside $B^r_{\Fcal}(u)$, that is, $\dist_S(u,v)>r$ for some $S\in\Fcal$. Since
$\dist_S$ is the maximum of $\dist_{G'}$ over $G'\in\flip(G/S)$, some $G'\in\flip(G/S)$ has
$\dist_{G'}(u,v)>r$, and so $\ind^r_{G'}(X)\ge2$.
\end{proof}

\subparagraph*{From $2$-flip-packability to flip-separability.}
The first step extracts from $2$-flip-packability a bounded parameter set over which every ball of
the flip-metric around a light vertex is light. The heavy vertices are few, at most $1/\delta$ of
them, and are simply played as parameters.

\begin{lemma}[light balls]\label{lem:light-balls}
Let $\C$ be $2$-flip-packable. For every $r\in\N$ and $\varepsilon>0$ there are $k\in\N$ and
$\delta>0$ such that for every $G\in\C$ and every weighting $\w$ of $G$ there is $S\subseteq V(G)$
with $|S|\le k$ such that $B^r_S(u)$ is not $\varepsilon$-heavy for every $\delta$-light vertex $u$.
\end{lemma}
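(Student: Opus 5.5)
The plan is to take the union of a flip-net together with the heavy vertices. Given $r$ and $\varepsilon$, let $k_0$ be the parameter of \Cref{def:2packing} for radius $2r$. Let $\delta_0,t_0$ be the constants it gives for tolerance $\varepsilon$, and put $\delta:=\delta_0$.

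Fix a weighting $\w$ of $G\in\C$, normalised to total weight $1$. Let $H$ be the set of $\delta$-heavy vertices, so $|H|\le 1/\delta$. Let $\Fcal$ be a $(2r,\varepsilon,\delta,2)$-flip-net of order $(k_0,t_0)$ for $\w$, and set
\[ S:=H\cup\bigcup\Fcal, \qquad\text{so}\qquad |S|\le k:=\lfloor 1/\delta\rfloor+t_0k_0 . \]
This bound depends only on $r$ and $\varepsilon$.

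Now suppose, for contradiction, that some $\delta$-light vertex $u$ has $B^r_S(u)$ $\varepsilon$-heavy. Let $X:=B^r_S(u)\setminus H$.

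First I check that $X$ consists of $\delta$-light vertices and is still $\varepsilon$-heavy. The vertices of $X$ are light by construction. Every $h\in H\subseteq S$ satisfies $B^r_S(h)=\{h\}$, by the isolating flip. So if $h\ne u$ then $\dist_S(u,h)=\infty$, and $h\notin B^r_S(u)$. Also $u\notin H$, since $u$ is light. Hence $B^r_S(u)\cap H=\emptyset$, and $X=B^r_S(u)$ is $\varepsilon$-heavy.

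Next I apply the flip-net. It gives $F\in\Fcal$ and $G'\in\flip(G/F)$ with two elements $x,y\in X$ satisfying $\dist_{G'}(x,y)>2r$. Since $F\subseteq S$, we have $\flip(G/F)\subseteq\flip(G/S)$, and therefore $\dist_S(x,y)\ge\dist_{G'}(x,y)>2r$. On the other hand, the triangle inequality through $u$ gives $\dist_S(x,y)\le 2r$, because both $x$ and $y$ lie in $B^r_S(u)$. This is a contradiction. This doubling of the radius is the reason for using radius $2r$ in the net.

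The only delicate point is the handling of heavy vertices. They must be adjoined to $S$ so that they drop out of balls around light vertices; otherwise $X$ might not consist of $\delta$-light vertices. Their number is controlled by $1/\delta$, and $\delta$ depends only on $\varepsilon$ and $r$.

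Monotonicity under enlarging the parameter set is what makes merging the family into one set legitimate. It holds because $\dist_F\le\dist_S$ whenever $F\subseteq S$.
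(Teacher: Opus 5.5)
Your proposal is correct and follows essentially the same route as the paper's proof: take a $(2r,\varepsilon,\delta,2)$-flip-net, let $S$ be its union together with the at most $1/\delta$ heavy vertices, and use the isolating flips to keep heavy vertices out of $B^r_S(u)$. A heavy ball would then contain two points at distance more than $2r$ in some $F$-flip with $F\subseteq S$, contradicting the triangle inequality in $\dist_S$.
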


\begin{proof}
Apply \Cref{def:2packing} with radius $2r$ and tolerance $\varepsilon$, obtaining $k(2r)$, $\delta$
and $t$, and put $k:=k(2r)\cdot t+\lfloor1/\delta\rfloor$. Given $G$ and $\w$, let $\Fcal$ be a
$(2r,\varepsilon,\delta,2)$-flip-net of order $(k(2r),t)$ for $\w$, let $H$ be the set of
$\delta$-heavy vertices, of which there are at most $1/\delta$, and put $S:=\bigcup\Fcal\cup H$,
so that $|S|\le k$.

Let $u$ be $\delta$-light. Every vertex of $H\subseteq S$ is isolated in some $S$-flip, hence at
infinite flip-distance from $u$, so $B^r_S(u)$ consists of $\delta$-light vertices. Suppose it were
$\varepsilon$-heavy. Then the flip-net serves it: there are $F\in\Fcal$ and $G'\in\flip(G/F)$ with
$\ind^{2r}_{G'}(B^r_S(u))\ge2$, that is, $a,b\in B^r_S(u)$ with $\dist_{G'}(a,b)>2r$. Hence
$\dist_F(a,b)>2r$, and $\dist_S(a,b)>2r$ as $F\subseteq S$. But $a,b\in B^r_S(u)$ gives
$\dist_S(a,b)\le2r$ by the triangle inequality, a contradiction.
\end{proof}

From here one could reach flip-breakability, by splitting a large set along the light balls and
merging the flips by a bipartite Ramsey argument; this is done uniformly for all the variants of
\Cref{sec:variants} in \Cref{app:variants}. Here we pass to flip-separability directly, through the
metric conversion of \Cref{fact:conversion}(2).

\begin{lemma}\label{lem:packing-sep}
Every $2$-flip-packable class of graphs is flip-separable.
\end{lemma}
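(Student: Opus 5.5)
The plan is to combine \Cref{lem:light-balls} with the metric conversion of \Cref{fact:conversion}(2). The lemma provides a bounded parameter set $S$ over which flip-metric balls around light vertices are light, and the conversion then replaces the flip-metric over $S$ by a single definable flip over a slightly larger bounded set, at the cost of a factor $6$ in the radius. So I will apply the lemma at radius $6r$ rather than $r$.

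Concretely, fix $r$ and $\varepsilon$. Apply \Cref{lem:light-balls} with radius $6r$ and tolerance $\varepsilon$, obtaining $k_0$ and $\delta$. Given $G\in\C$ and a weighting $\w$ (normalised to total weight $1$), the lemma yields $S_0$ with $|S_0|\le k_0$ such that $B^{6r}_{S_0}(u)$ is not $\varepsilon$-heavy for every $\delta$-light $u$.

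The one mismatch with \Cref{def:flipsep} is that flip-separability must serve every $\varepsilon$-light vertex, and $\delta$ may be much smaller than $\varepsilon$. To handle the vertices that are $\varepsilon$-light but $\delta$-heavy, let $H$ be the set of $\delta$-heavy vertices; there are at most $1/\delta$ of them. Put $S:=S_0\cup H$, so that $|S|\le k_0+\lfloor 1/\delta\rfloor$.

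Since $\dist_{S_0}\le\dist_S$, balls only shrink when parameters are added. Hence $B^{6r}_S(u)\subseteq B^{6r}_{S_0}(u)$ is still not $\varepsilon$-heavy for $\delta$-light $u$. For $h\in H$ we have $B^{6r}_S(h)=\{h\}$ by the isolating flip, and this ball is not $\varepsilon$-heavy whenever $h$ is $\varepsilon$-light. Thus every $\varepsilon$-light vertex $u$ has $\w(B^{6r}_S(u))<\varepsilon$.

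Finally, apply \Cref{fact:conversion}(2) to $S$. This gives $T\supseteq S$ with $|T|\le |S|+(2^{|S|}+|S|)^2$ and some $G'\in\flip(G/T)$ with $B^r_{G'}(v)\subseteq B^{6r}_S(v)$ for all $v$. Consequently $B^r_{G'}(u)$ is not $\varepsilon$-heavy for every $\varepsilon$-light $u$. The bound on $|T|$ depends only on $k_0$ and $\delta$, hence only on $r$ and $\varepsilon$, which is exactly what \Cref{def:flipsep} requires.

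The step needing care is the gap between $\delta$-light and $\varepsilon$-light vertices. It is resolved by playing the boundedly many $\delta$-heavy vertices as parameters and using that parameters have singleton flip-metric balls. Everything else is a direct chaining of the two earlier results, and the radius blow-up is absorbed by choosing radius $6r$ at the outset.
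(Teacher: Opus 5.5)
Your proof is correct and follows the paper's argument. You apply \Cref{lem:light-balls} at radius $6r$, handle the $\varepsilon$-light but $\delta$-heavy vertices via their singleton flip-metric balls, and finish with \Cref{fact:conversion}(2). The only difference is that you adjoin the $\delta$-heavy vertices to the parameter set explicitly, whereas the paper relies on their already belonging to the set built in the proof of \Cref{lem:light-balls}; your version has the small merit of using only that lemma's statement.
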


\begin{proof}
Let $\C$ be $2$-flip-packable. Fix $r\in\N$ and $\varepsilon>0$, and let $k_1$ and $\delta_1$ be
the constants of \Cref{lem:light-balls} for radius $6r$ and tolerance $\varepsilon$. Let $\w$ be a
weighting of some $G\in\C$, and let $S\subseteq V(G)$ with $|S|\le k_1$ be as in that lemma.

We claim that $B^{6r}_S(u)$ is not $\varepsilon$-heavy for every $\varepsilon$-light $u$. If $u$ is
$\delta_1$-light this is the lemma. Otherwise $u$ is $\delta_1$-heavy, hence belongs to $S$ and is
isolated in some $S$-flip, so $B^{6r}_S(u)=\{u\}$, which is not $\varepsilon$-heavy as $u$ is
$\varepsilon$-light.

Now \Cref{fact:conversion}(2) gives $T\subseteq V(G)$ with $|T|\le k_1+(2^{k_1}+k_1)^2$, a bound
depending on $r$ and $\varepsilon$ alone, and $G'\in\flip(G/T)$ with
$B^r_{G'}(u)\subseteq B^{6r}_S(u)$ for all $u$. So $B^r_{G'}(u)$ is not $\varepsilon$-heavy for
every $\varepsilon$-light $u$, which is what \Cref{def:flipsep} asks.
\end{proof}

\begin{proof}[Proof of \Cref{thm:2packing-equiv}]
A monadically dependent class is $2$-flip-packable by \Cref{lem:dependent-packing}. Conversely, a
$2$-flip-packable class is flip-separable by \Cref{lem:packing-sep}, hence monadically dependent by
\Cref{thm:flipsep}.
\end{proof}

By \Cref{thm:flipsep,thm:flipbreak}, $2$-flip-packability is therefore equivalent to
flip-separability and to flip-breakability alike. The proof passes through the former because
$2$-flip-packability is its two-tier form; the direct passage to the latter is \Cref{lem:app-m2-break}.

\section{Everything is packing}\label{sec:variants}

We now make the eight statements of \Cref{thm:everything} precise, and explain why each of them
characterises its entry in \Cref{tab:variants}; the formal definitions
(\Cref{def:app-packing,def:app-flat-break}) and the proofs are in \Cref{app:variants} of the appendix. Throughout,
$\rho$ ranges over $\N\cup\{\infty\}$, and at radius $\infty$ balls are connected components.

For $F,X\subseteq V(G)$ let $\ind^\rho_{G-F}(X)$ be the largest size of a subset of $X\setminus F$
whose elements are pairwise at distance greater than $\rho$ in $G-F$, which for $\rho=\infty$ means
lying in pairwise distinct components of $G-F$; likewise $\ind^\infty_{G'}(X)$ counts elements of
$X$ in distinct components of $G'$. \emph{Deletion-packability at radius $\rho$} is the statement
obtained from \Cref{def:packing} by replacing the conclusion $\ind^r_{G'}(X)\ge m$ of the flip-net
with $\ind^\rho_{G-F}(X)\ge m$, and \emph{flip-packability at radius $\infty$} is obtained by
taking $r=\infty$ throughout; flip-packability itself is flip-packability at every finite radius.
In each case \emph{$2$-packability} is the property with $m$ fixed to $2$, as in
\Cref{def:2packing}; the size $k$ of each parameter set is still required to depend on the radius
alone. Two of the entries are already proved, namely that flip-packability characterises monadic
stability (\Cref{thm:main}), and that $2$-flip-packability characterises monadic dependence
(\Cref{thm:2packing-equiv}).

\begin{table}[h]
\caption{\Cref{tab:variants-intro}, repeated here for convenience. The column $m=2$ records $2$-packability; by
\Cref{rem:bounded-m} the same entries arise for any fixed $m$, and only unbounded $m$ gives the
left-hand column. No closure assumptions on $\C$ are made.}
    \label{tab:variants}
    \centering
    \begin{tabular}{llll}
        \toprule
        \textbf{operation} & \textbf{radius} & \textbf{all $m$} & \textbf{$m=2$} \\
        \midrule
        deletions & $r\in\N$ & nowhere dense & nowhere dense \\
        flips & $r\in\N$ & monadically stable & monadically dependent \\
        deletions & $\infty$ & bounded treedepth & bounded treewidth \\
        flips & $\infty$ & bounded shrubdepth & bounded cliquewidth \\
        \bottomrule
    \end{tabular}
\end{table}

Each row is proved as a cycle (\Cref{fig:variants-map}). The packing property yields the
corresponding variant of flatness in the left-hand column, respectively of breakability in the
right-hand one; these are the properties classified in \cite[Section~18]{flipbreak-full}, which
supplies the characterisations of the eight entries; and a return arrow, different in the two
columns, reconstructs the packing property from the entry. The forward arrows are the arguments of
\Cref{sec:packing,sec:backtosep} with the operation and the radius changed; the return arrows are
where the work lies.

\subparagraph*{The all-$m$ column.}
The forward arrow is \Cref{lem:packing-flat}: with $\varepsilon=1$ and the indicator weighting of a
large set $A$, the flip-net serves $A$ itself, which is the corresponding variant of flatness ---
uniform quasi-wideness in the first row, flip-flatness in the second, and in the last two the
statement that every large set has $m$ elements in distinct components after boundedly many
deletions, respectively after a bounded flip.

The return arrow is the induction of \Cref{lem:winning-packing}, run on a depth parameter. For
deletions at finite radius this is the rank of the Splitter game of \cite{deciding}, bounded exactly
on nowhere dense classes; for flips at finite radius it is the flip-separation rank of
\Cref{def:rank}. At radius $\infty$, writing $\rk^\del_\infty$ for the deletion rank and $C_{S'}(u)$
for the component of $u$ in $G-S'$, the recursion
\[ \rk^\del_\infty(U/S)=1+\min_s\max_u\rk^\del_\infty\big(U\cap C_{S\cup\{s\}}(u)\,/\,S\cup\{s\}\big) \]
is the elimination-forest recursion $\td(G)=1+\min_v\max_C\td(C)$ on connected arenas, so
$\rk^\del_\infty$ is treedepth up to an additive one; and $\rk_\infty$, the flip-separation rank at
radius $\infty$, is bounded exactly on classes of bounded shrubdepth (see the paragraph after \Cref{fact:app-depth}). The induction uses only three
things: that the rank drops when a vertex is played, that disjoint balls of weight at least
$\frac\varepsilon m$ are at most $\frac m\varepsilon$ in number, and that a ball meeting a packed
ball lies inside a slightly larger one. All three survive verbatim for deletions and at radius
$\infty$, where the last is the observation that two intersecting components coincide. For
deletions the Ramsey stage disappears, as there is no family of flips to merge, but a deleted vertex
is not counted whereas an isolated one is, which costs two small adjustments recorded in
\Cref{lem:app-rank-packing}.

\subparagraph*{The $m=2$ column.}
The forward arrow starts with \Cref{lem:light-balls}: from $2$-packability one obtains, for every
weighting, a bounded set $S$ over which every ball of the associated metric around a light vertex
is light. Applied to the indicator weighting of a large set $W$, this splits $W$ into two large
parts lying far apart, by the argument of \cite[Lemma~24]{separability}; at radius $\infty$ the
split is simpler still, as every component meets $W$ in at most half of it. For deletions the parts
are far apart in $G-S$ and there is nothing more to do; for flips they are far apart in the
flip-metric, and a bipartite Ramsey argument over the at most $\Xi(|S|)$ flips over $S$ places them
in a single flip. What results is the corresponding variant of breakability
(\Cref{lem:app-m2-break}).

The return arrows are where the four rows part ways. The first follows from its own left-hand
entry, nowhere density implying the packing property outright. The second is
\Cref{lem:dependent-packing}: monadic dependence yields $2$-packability by compression of a
sparsifying family, rather than by any depth parameter. The third and fourth are the list of
centroids of the introduction, with bags of a tree decomposition, respectively bounded flips, in
place of centroids. Both rows come with a decomposition supplying, for \emph{any} given weighting,
one small object after which every component carries at most a constant fraction of the weight, and
one lists the object balanced for every region heavy enough to hold $X$, at every level of a
recursion of depth $\Ocal(\log(1/\varepsilon))$ and branching $\Ocal(1/\varepsilon)$. The family is
large but bounded in terms of $\varepsilon$, while each of its members is a single bag or flip, of
size independent of $\varepsilon$; that asymmetry is what \Cref{def:2packing} asks for, and what
following $X$ cannot give, since the separator that follows $X$ grows with $\log(1/\varepsilon)$
(\Cref{lem:app-halving}).

Placing the two return arrows side by side is what explains the table. Pulling two elements apart
can always be arranged by recursing into the heavy part and paying for the recursion in the
number of separators, which is available whenever balanced separators are; pulling $m$ apart with
separators whose size may grow with neither $m$ nor $\varepsilon$ requires the decomposition itself
to have bounded depth, which is what the left-hand entries have and the right-hand entries lack.

\begin{figure}[ht]
\centering
\begin{tikzpicture}[
  bx/.style={rounded corners=2.5pt,draw,line width=0.6pt,align=center,
             font=\footnotesize,inner sep=3.4pt,minimum height=7.2mm},
  pack/.style={bx,draw=pkc!70,fill=pkc!10,minimum width=2.95cm},
  flat/.style={bx,draw=ftc!70,fill=ftc!10},
  brk/.style={bx,draw=brc!70,fill=brc!10},
  not/.style={bx,draw=ntc!60,fill=ntc!8},
  ar/.style={-{Stealth[length=4pt,width=3.4pt]},line width=0.65pt,black!65},
  lb/.style={font=\scriptsize,inner sep=1.6pt,fill=white,align=center}]

\node[pack] (P)  at (0,1.15) {packability};
\node[flat] (F)  at (4.9,1.15) {flatness};
\node[not]  (N1) at (9.9,1.15) {nowhere dense / mon.\ stable\\ treedepth / shrubdepth};
\draw[ar] (P) -- node[lb,above]{$\varepsilon=1$,\\ indicator weighting} (F);
\draw[ar] (F) -- node[lb,above]{known} (N1);
\draw[ar] (N1.south) .. controls +(0,-0.85) and +(0,-0.85) ..
   node[lb,below]{induction on the depth parameter} ($(P.south east)+(-0.45,0)$);

\node[pack] (Q)  at (0,-2.05) {$2$-packability};
\node[brk]  (B)  at (4.9,-2.05) {breakability};
\node[not]  (N2) at (9.9,-2.05) {nowhere dense / mon.\ dependent\\ treewidth / cliquewidth};
\draw[ar] (Q) -- node[lb,above]{light balls,\\ then split} (B);
\draw[ar] (B) -- node[lb,above]{known} (N2);
\draw[ar] (N2.south) .. controls +(0,-0.85) and +(0,-0.85) ..
   node[lb,below]{compression, or iterated halving} (Q.south);

\draw[ar,black!45] (P.south) -- node[lb,right,pos=0.5]{fix $m=2$} (Q.north);

\end{tikzpicture}
\caption{\small How the entries of \Cref{tab:variants} are obtained. Each row is a cycle: the
packing property yields the corresponding variant of flatness, respectively breakability, and these
characterise the notions on the right by \cite[Section~18]{flipbreak-full}. The return arrows are
the converse constructions.}
\label{fig:variants-map}
\end{figure}

\begin{remark}\label{rem:bounded-m}
The left-hand column of \Cref{tab:variants} is really the case of unbounded $m$: fixing any other
value of $m$ changes nothing in any of the four rows. This is not obtained by merging the family of
the $m=2$ case into one set over which every ball is light and choosing $m$ elements of a heavy set
greedily, as that set has size $k\cdot t$, which depends on $\varepsilon$. Instead one iterates the
$m=2$ case $m-1$ times: two far-apart elements of a heavy set $X$ cut it into three parts, the two
balls around them and the remainder, one of which is heavy; one recurses into that part with the
tolerance divided by three, and the element on the other side of the cut is, by the triangle
inequality, far from everything in it. Unions of $m-1$ members of the families produced along the
way yield $m$ elements pairwise far apart in the metric over a set of size $(m-1)k$, with $k$ the
parameter of the $m=2$ case, which for flips \Cref{fact:conversion}(2) turns into a single flip at
a constant factor in the radius. Only the tolerance and the size of the family degrade with
$\varepsilon$, while the size of the parameter sets depends on $m$ and the radius alone, as
\Cref{def:packing} permits once $m$ is fixed. Nothing beyond the triangle inequality is used, so
the same holds for deletions and at radius $\infty$, where balls are components. Thus the two
columns are separated not by two against many but by bounded against unbounded $m$.
\end{remark}

\section{Computing flip-nets}\label{sec:algorithmic}

The proof of \Cref{lem:winning-packing} is effective once Flipper's moves are: the packing step is
a greedy scan, the flip-metric over a bounded parameter set is computed by enumerating the at most
$\Xi(k)$ flips over it, and the recursion tree has bounded size. What is needed is a winning
strategy for Flipper, in the confining form of \Cref{def:rank}, that can be computed efficiently.
We obtain one by simulating the algorithmic strategy of Gajarsk\'y et al.\
\cite[Theorem~11.2]{flippergame-full}, which is stated for the original form of the game in which
Flipper plays flips rather than vertices, using the metric conversion of \Cref{fact:conversion}(1)
to turn each flip into boundedly many parameters.

\begin{theorem}\label{thm:rank-algo}
Let $\C$ be a monadically stable class of graphs and $r\in\N$. There is $k\in\N$ such that
$\rk_r(G)\le k$ for every $G\in\C$, and Flipper's moves in the game of \Cref{def:rank} at radius $r$
can be computed in time $\Ocal_{\C,r}(n^2)$ on an $n$-vertex graph $G\in\C$.
\end{theorem}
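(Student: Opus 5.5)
The plan is to simulate the algorithmic Flipper strategy of \cite[Theorem~11.2]{flippergame-full} for the original game, in which Flipper plays a flip and Localiser zooms into a ball of the flipped graph, inside the confining game of \Cref{def:rank}. Each original flip move is translated into a bounded batch of parameter moves via \Cref{fact:conversion}(1). The existence of the bound $k$ already follows from \Cref{thm:finfliprank}. The content is that the simulation yields an explicit rank bound and that every move is computable in quadratic time.

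\emph{Choosing the original strategy.} Since $\C$ is monadically stable, it is monadically dependent and so has VC-dimension at most some $d$. Fix a radius $r'$, a constant multiple of $r$ chosen below. \cite[Theorem~11.2]{flippergame-full} gives a Flipper strategy in the original game at radius $r'$ that wins within $\ell=\ell(\C,r')$ rounds, with each move computable in time $\Ocal_{\C,r'}(n^2)$. I would make sure to use the version whose flips are partition flips with a bounded number $p$ of parts, applied cumulatively to the current arena.

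\emph{Translating each flip into parameters.} When the strategy proposes a $\Pcal$-flip $H$, \Cref{fact:conversion}(1) supplies a set $S_H$ with $|S_H|\le\Ocal(dp^2)$ and some $H'\in\flip(G/S_H)$ such that $B^\rho_{H'}(v)\subseteq B^{5\rho}_H(v)$. Flipper plays the vertices of $S_H$ one by one; this is legal, as a move may be any vertex of $V(G)$. Once all of $S_H$ has been played, the current parameter set $S$ contains $S_H$. Hence $H'\in\flip(G/S)$, and so $B^{r}_S(u)\subseteq B^r_{H'}(u)\subseteq B^{5r}_H(u)$.

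To keep this valid over later rounds, I would apply the conversion to the composite flip of all rounds so far, which is again a partition flip with boundedly many parts. The zoom of the confining game then lands inside $B^{5r}_H(u)$. That is a legitimate Localiser move in the original game at radius $r':=5r$, or at whatever multiple is needed if the original game demands balls centred in the arena.

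The intermediate Localiser answers given while only part of $S_H$ has been played shrink the arena to balls over a smaller parameter set. These balls are only larger, but the final one is still a subset of the ball Flipper expects. So Flipper follows the strategy using the last answer of each batch, and must also check that a sequence of nested confinements corresponds to a valid play. Since the original strategy wins in $\ell$ rounds, meaning the arena becomes a singleton or trivial, and a singleton arena is finished by playing that vertex, we get $\rk_r(G)\le\ell\cdot\Ocal(dp^2)+1=:k$.

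\emph{Running time.} Computing $S_H$ in \Cref{fact:conversion}(1) is a greedy or VC-based step. I would check in \cite{separability} that it runs in $\Ocal(n^2)$ for fixed $d,p$. Together with the original strategy's quadratic cost, this gives $\Ocal_{\C,r}(n^2)$ per move.

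The main obstacle I expect is \emph{matching the conventions} between the two games. The original game flips the whole graph while the arena shrinks, its balls are taken in the flipped graph, and its termination condition may differ. One must verify that the confinement $B^r_S(u)\cap\text{arena}$ is always contained in a legal original-game response, and that \Cref{fact:conversion}(1) can be computed efficiently and applied to the cumulative flip.
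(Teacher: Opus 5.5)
Your overall route is the paper's: simulate the algorithmic strategy of \cite[Theorem~11.2]{flippergame-full}, apply \Cref{fact:conversion}(1) to the cumulative flip $G_i=G\oplus F_1\oplus\dots\oplus F_i$, play the resulting parameter set one vertex at a time, and use $B^r_S(c)\subseteq B^r_{T_i}(c)\subseteq B^r_{H_i}(c)\subseteq B^{5r}_{G_i}(c)$. There is, however, a genuine gap. You claim that the confined arena landing inside $B^{5r}_H(u)$ ``is a legitimate Localiser move in the original game at radius $5r$''. That is false, and it is exactly where a new ingredient is needed.

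In the game for which \cite[Theorem~11.2]{flippergame-full} is proved, the Induced-Subgraph-Flipper game, Localiser must restrict to an induced subgraph of a radius-$r'$ ball of the \emph{current arena}. That is, distances are measured in the subgraph of $G_i$ induced on the previous arena. \Cref{fact:conversion}(1), by contrast, is a statement about the whole graph. It only places your confined arena inside a ball of the whole flipped graph $G_i$. Shortest paths in $G_i$ may leave the arena, so distances in the induced subgraph can be arbitrarily larger than in $G_i$, even infinite. No constant enlargement of the radius and no re-centring turns your simulated responses into legal moves. The $\ell$-round guarantee of the original strategy therefore cannot be invoked as a black box. Your obstacle paragraph names this check, but the check does not come out true.

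What is needed is a robustness statement for this particular strategy. It must say that $\mathrm{flip}^\star$ remains $\ell$-winning against any Localiser whose response after $F_1,\dots,F_i$ is merely contained in some radius-$r'$ ball of $G\oplus F_1\oplus\dots\oplus F_i$. The paper obtains this as \Cref{lem:algo-simulate} by opening up the proof of \cite[Theorem~11.2]{flippergame-full}. The analysis through \cite[Claim~11.3]{flippergame-full} uses Localiser's move only to ensure that the arena lies in a radius-$r$ ball of the current flipped graph, so that two vertices of a distance-$2r$ independent set cannot both survive. Moreover, Flipper's moves are computed from $G$, the order and the vertices played, independently of how Localiser chooses within the ball.

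Without such an argument your simulation does not yield a winning strategy in the confining game, and so it does not give the algorithmic half of the theorem. As you note, the bare existence of $k$ does follow from \Cref{thm:finfliprank}.

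A smaller point: the cumulative flip after $i$ atomic flips is a partition flip with up to $4^i$ parts. Your batch sizes should therefore be $\Ocal(d\cdot16^i)$ rather than $\Ocal(dp^2)$ for a fixed $p$. This is still bounded, since $i\le\ell$.
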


\begin{theorem}\label{thm:algo}
Let $\C$ be a monadically stable class of graphs, let $r$, $\varepsilon$ and $m$ be given, and let
$k$, $t$ and $\delta$ be the constants that the proof of \Cref{thm:main} provides for them. There
is an algorithm that, given an $n$-vertex graph $G\in\C$ and a weighting
$\w:V(G)\to\mathbb Q_{\ge0}$, computes an $(r,\varepsilon,\delta,m)$-flip-net of order $(k,t)$ for
$\w$ in time $\Ocal_{\C,r,\varepsilon,m}(n^3)$, counting arithmetic operations on the weights at
unit cost.
\end{theorem}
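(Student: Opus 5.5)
The plan is to run the proof of \Cref{lem:winning-packing} as a recursive algorithm, with Flipper's moves supplied by \Cref{thm:rank-algo} at radius $3r$, and then perform the Ramsey stage explicitly. Normalising the weights costs $\Ocal(n)$ arithmetic operations. Let $k$ be the bound of \Cref{thm:rank-algo} for $3r$. Set $m':=\Rcal(m,\Xi(k))$, and take $\delta:=\delta_k(\varepsilon,m')$ and $t:=t_k(\varepsilon,m')$, as in Stage~2.

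The recursive procedure $\mathrm{Net}(A,S,\varepsilon',j)$ receives an arena $A$, a parameter set $S$ with $|S|\le k-j$, and a tolerance $\varepsilon'$. It returns a family of at most $t_j(\varepsilon',m')$ sets of size at most $j$.

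If $A\subseteq S$, or if $j=0$, it returns $\{\emptyset\}$. Otherwise it calls the Flipper algorithm of \Cref{thm:rank-algo} on the current game position (arena $A$, parameters $S$) to obtain $s$, in time $\Ocal(n^2)$, and sets $S':=S\cup\{s\}$.

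It then computes the flip-metric $\dist_{S'}$ restricted to what is needed. We enumerate the $S'$-classes and the at most $\Xi(k)$ flips over $S'$. For each centre we need the $r$-ball and the $3r$-ball in the $S'$-metric, and these are intersections of balls over the individual flips. A BFS in one flip costs $\Ocal(n^2)$, so obtaining one ball for every vertex directly would cost $n\cdot\Ocal(n^2)=\Ocal(n^3)$ per flip. That is acceptable only if it is done a bounded number of times. Since the recursion tree has at most $t$ nodes, a bounded number in terms of $\C,r,\varepsilon,m$, the total cost is $\Ocal_{\C,r,\varepsilon,m}(n^3)$.

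Given these balls, the maximal disjoint packing $\Bcal$ is built greedily. We scan the vertices $v\in A$ and add $B^r_{S'}(v)$ whenever its weight is at least $\varepsilon'/m'$ and it is disjoint from the balls chosen so far. A marker array makes the disjointness test cost $\Ocal(n)$ per candidate, so the scan takes $\Ocal(n^2)$ in total. For each packed centre $v_i$ we form $A_i:=A\cap B^{3r}_{S'}(v_i)$, recurse on $\mathrm{Net}(A_i,S',\varepsilon'/m',j-1)$, and assemble $\Fcal$ exactly as in the proof.

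Correctness is the proof of Stage~1 verbatim. The only property of $s$ used there is $(*)$, that is, that $s$ is a move which does not increase the remaining game length. The strategy of \Cref{thm:rank-algo} ensures that every position reached in a play consistent with it has remaining rank at most $k$ minus the number of rounds played. Every recursive call corresponds to such a play, since Localiser answers $v_i$ and the arena shrinks to $A_i$, so the invariant holds and the depth is at most $k$. No Ramsey computation is needed in the algorithm: Stage~2 only reinterprets the family for target $m'$ as an $(r,\varepsilon,\delta,m)$-flip-net. The output is the family $\Fcal$, together with the flips over its members if these are wanted, at $\Xi(k)$ flips per set.

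The main obstacle is the step that invokes \Cref{thm:rank-algo} inside the recursion. The strategy there must accept arbitrary positions reached in the confining game, and it must accept positions in which Localiser's ball is taken at radius $3r$ while our packed $r$-balls are only what we use for analysis. We therefore apply the theorem at radius $3r$ throughout, so that the arenas $A_i$ are precisely Localiser's legal answers. If the strategy is only guaranteed from the initial position, we also keep track of the sequence of Localiser answers along each branch and replay them to the strategy.

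The second concern is the running time. The bottleneck is computing all $3r$-balls in all $\Xi(k)$ flips, which is $\Ocal(n^3)$ by BFS from each vertex. Combined with the bounded recursion size, this gives $\Ocal_{\C,r,\varepsilon,m}(n^3)$ overall, and the Flipper calls contribute only $\Ocal(n^2)$ each.
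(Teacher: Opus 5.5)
Your proposal is correct and follows the paper's proof essentially step for step. It runs the Stage~1 recursion driven by the Flipper strategy of \Cref{thm:rank-algo} at radius $3r$ with target $m'=\Rcal(m,\Xi(k))$, computes the flip-metric balls by breadth-first search at cost $\Ocal(\Xi(k)n^3)$ at each of boundedly many nodes, which dominates, packs greedily, and lets Stage~2 enter only the correctness analysis and not the construction. Your remark that the possibly history-dependent strategy must be replayed the Localiser answers along each branch makes explicit a point the paper leaves implicit; your opening phrase about performing the Ramsey stage explicitly, however, contradicts your own later and correct statement that no Ramsey computation is needed.
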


The proofs are in \Cref{app:algorithmic}. Two remarks on the statement. 
First, the
weighting need not be given explicitly. The algorithm accesses $\w$ only through the weights of
balls of the flip-metric, so an oracle answering queries $\w(X)$ for $X\subseteq V(G)$ suffices, and
it is queried $\Ocal(n)$ times per node of the recursion, hence $\Ocal_{\C,r,\varepsilon,m}(n)$
times in total. Second, \Cref{thm:algo-intro} of the introduction follows: each member $F$ of the
net determines at most $\Xi(k)$ flips, each computable in time $\Ocal(n^2)$ and obtained from $G$
by at most $(2^k+k)^2$ flip operations, so the family of flipped graphs is computed from the net in
time $\Ocal(t\,\Xi(k)\,n^2)$, and a heavy query set is then served by scanning boundedly many
explicit graphs for one in which it is $r$-scattered.

\subparagraph*{Dependent classes.}
The Flipper game is not available on monadically dependent classes, and the family of
\Cref{lem:dependent-packing} comes instead from the compression argument of
\cite[Lemma~20]{separability}. That argument discards sets through a colouring supplied by Gaifman's
locality theorem, and computing that colouring amounts to computing local first-order types, which
is essentially the model-checking problem on $\C$. The colouring, however, is needed only in the
analysis: choosing the sets to keep greedily, and using locality to bound the number of greedy steps
rather than to make them, makes the compression effective. Its only external ingredient is then
flip-breakability in the algorithmic form of \cite[Theorem~13.2]{flipbreak-full}.

\begin{theorem}\label{thm:algo-dependent}
Let $\C$ be a monadically dependent class of graphs, let $r\in\N$ and $\varepsilon>0$, and let $k$
and $t$ be the constants of \Cref{lem:dependent-packing} for them. There is an algorithm that,
given an $n$-vertex graph $G\in\C$ and a weighting $\w:V(G)\to\mathbb Q_{\ge0}$, computes an
$(r,\varepsilon,\varepsilon^2/2,2)$-flip-net of order $(k,t)$ for $\w$ in time
$\Ocal_{\C,r,\varepsilon}(n^5)$.
\end{theorem}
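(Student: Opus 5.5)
The plan is to make the compression argument of \cite[Lemma~20]{separability} effective, and then wrap it in the reduction of \Cref{lem:dependent-packing}. The outer layer is routine. Normalise $\w$ and compute the set $H$ of $\delta$-heavy vertices for $\delta=\varepsilon^2/2$. If $\w(V(G)\setminus H)<\varepsilon$, output any family. Otherwise form the restricted weighting $\w'$, which is $\frac\varepsilon2$-balanced, and it remains to compute, in time $\Ocal_{\C,r,\varepsilon}(n^5)$, a family $\Fcal$ of at most $t$ sets of size at most $k$ as in \Cref{fact:sparsifying-family}. The proof of \Cref{lem:dependent-packing} then shows that this $\Fcal$ is the required flip-net, with no further work.

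To produce $\Fcal$, I would build the initial family greedily rather than through Gaifman colourings. Maintain a current family $\Fcal$, initially empty, and repeatedly test whether some vertex $v$ has $\w'(B^r_{\Fcal}(v))>\frac\varepsilon2$. This test costs $\Ocal(n)$ BFS runs over the $\le|\Fcal|\cdot\Xi(k)$ flips, each $\Ocal(n^2)$, i.e.\ $\Ocal(n^3)$ per round. If such a heavy ball $W$ exists, apply the algorithmic flip-breakability of \cite[Theorem~13.2]{flipbreak-full} at radius $3r'$ to the large set $W$ (or rather to a suitable large sample of it chosen by weight), obtaining a set $S$ of size $k$ and a flip separating two large parts of $W$. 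Add $S$ to $\Fcal$.

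Each step should be polynomial, with the algorithmic flip-breakability taking $\Ocal(n^2)$ per call. The total $n^5$ would come from $\Ocal(n)$ candidate balls times $\Ocal(n^2)$ metric computations, times an $\Ocal(n^2)$ breakability call or scan inside the compression.

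The main obstacle is bounding the number of greedy rounds by a function of $r$ and $\varepsilon$ alone. Here I would reuse the analysis of \cite[Lemmas~19--20]{separability} purely as a proof device. Each added set witnesses that some heavy ball was split, so its local $r'$-type (in the sense of \cite[Lemma~14]{separability}) differs from that of every earlier set in a way the colouring detects. The compression lemma then implies that a sequence of greedy additions longer than $t$ would produce, among boundedly many type-classes, two sets the colouring identifies, even though one of them split a ball the other did not. This contradicts the exchange property of \cite[Lemma~19]{separability}.

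Concretely, I would restate Lemma~19 as the following claim: \emph{if two parameter sets have the same colour, then any ball made light by one is made light by the other}. A greedy chain in which each new set lightens a ball not lightened before therefore has length bounded by the number of colours times a packing bound $\Ocal(1/\varepsilon)$. Checking that the greedy choice really satisfies the hypotheses of that claim, with weights changing as balls shrink, is the delicate point. If it fails directly, I would fall back to running the compression on an explicitly enumerated but bounded pool of candidate sets, namely all outputs of breakability on heavy balls. That pool is polynomial in size, and selecting greedily among it preserves the bound.
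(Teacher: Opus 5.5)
Your outer reduction (restrict to $\w'$, then compute a sparsifying family as in \Cref{fact:sparsifying-family}) is exactly the paper's. The construction of that family, however, has a genuine gap. Your bottom-up greedy process starts from the empty family and adds a breakability output $S$ whenever some $B^r_\Fcal(v)$ is heavy, and it comes with no bound on the number of rounds; none should be expected. Flip-breakability is a counting statement. It gives two parts of size $m$ of the heavy ball that lie far apart over $S$, so $B^r_\Fcal(v)$ loses at least $m$ vertices, but not a definite fraction of its weight. Nothing then prevents the same ball from staying heavy for a number of rounds that grows with $n$. (A heavy ball may also have only about $1/\varepsilon$ vertices, possibly fewer than breakability needs.)

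Your round bound rests on the claim that two parameter sets of the same colour lighten the same balls. That is not what \cite[Lemmas~14 and~19]{separability} provide. The colouring $\mathrm{col}_2$ is defined relative to the common refinement of the partitions $\Pcal_S$ for $S$ in a \emph{fixed} family $\Ycal$. It determines whether $\dist_{\bar s}(v,w)>r$ only for pairs $vw$ at $\Ycal$-distance more than $r'$ from $\bar s$. Your process has no fixed $\Ycal$, and pairs near the parameters are not controlled at all. The fallback does not repair this: you do not say why a pool of breakability outputs is sparsifying, and compressing it still needs either the colours or a substitute for them.

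The missing idea is to run the compression top-down and make only the choice of survivors greedy. Start from $\{S_v:v\in V(G)\}$, where $S_v\ni v$ is an arbitrary $k$-set. This family is sparsifying for free, since $v$ is isolated in some flip over $S_v$ and $\w'$ is $\frac\varepsilon2$-balanced. In each step, compute $\Fcal'$ and $\Ycal$ with property $(*)$, using the greedy sunflower lemma and algorithmic breakability. Then build $\overline{\Xcal}$ greedily: while some far vertex $v\notin B^{2r'}_\Ycal(\bigcup\Fcal')$ and some $w$ satisfy $\dist_\Fcal(v,w)>r\ge\dist_{\Fcal^*}(v,w)$, add a member of $\Fcal'$ that separates them. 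Near vertices are handled by $(*)$ together with $\Ycal\subseteq\Fcal^*$, which is Claim~21 of \cite{separability}. For a far $v$, the triangle inequality gives $\dist_\Ycal(vw,\bar s')>r'$ for every $S'\in\Fcal'$, so locality applies. Hence each added set carries a $\mathrm{col}_2$-colour absent from $\overline{\Xcal}$, which bounds the rounds by the number $\ell$ of colours without ever computing them. This yields $|\Fcal^*|\le|\Fcal|-|\Fcal'|+\ell+\lambda<|\Fcal|$, so there are at most $n$ polynomial-time compression steps, which is where the $\Ocal(n^5)$ bound comes from.
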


By verifying that \Cref{lem:light-balls} and \Cref{lem:packing-sep} can be carried out effectively, we thus obtain an algorithmic form of
\Cref{thm:flipsep}, which \cite{separability} only states existentially.

\begin{corollary}\label{cor:algo-sep}
Let $\C$ be a monadically dependent class of graphs, $r\in\N$ and $\varepsilon>0$. There is $k\in\N$
and an algorithm that, given an $n$-vertex graph $G\in\C$ and a weighting
$\w:V(G)\to\mathbb Q_{\ge0}$, computes a set $T\subseteq V(G)$ with $|T|\le k$ and a flip
$G'\in\flip(G/T)$ such that $B^r_{G'}(u)$ is not $\varepsilon$-heavy for every $\varepsilon$-light
vertex $u$. It runs in time $\Ocal_{\C,r,\varepsilon}(n^5)$, and in time
$\Ocal_{\C,r,\varepsilon}(n^3)$ if $\C$ is monadically stable.
\end{corollary}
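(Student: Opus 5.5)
The plan is to chain three effective steps: compute the $2$-flip-net, merge it into one parameter set as in \Cref{lem:light-balls}, and convert the flip-metric into a single flip as in \Cref{lem:packing-sep}.

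\textbf{Computing the net and the light-ball set.} Fix $r$ and $\varepsilon$, and take $k_1,\delta_1$ as in \Cref{lem:light-balls} for radius $6r$ and tolerance $\varepsilon$. These come from the constants of \Cref{def:2packing} at radius $12r$. On input $G$ and $\w$, we first run the algorithm of \Cref{thm:algo-dependent} at radius $12r$ with tolerance $\varepsilon$. This produces a $(12r,\varepsilon,\varepsilon^2/2,2)$-flip-net $\Fcal$ of order $(k,t)$ in time $\Ocal_{\C,r,\varepsilon}(n^5)$. If $\C$ is monadically stable, we instead use \Cref{thm:algo} with $m=2$, which takes time $\Ocal(n^3)$. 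In the stable case $\delta$ is whatever that proof provides, and \Cref{lem:light-balls} only needs some $\delta$, so either choice works. Next, in linear time we compute the set $H$ of $\delta$-heavy vertices; it has at most $1/\delta$ elements. We then set $S:=\bigcup\Fcal\cup H$. Its size is bounded by a constant depending on $r$ and $\varepsilon$ alone. By the proof of \Cref{lem:light-balls}, no ball $B^{6r}_S(u)$ around a $\delta$-light $u$ is $\varepsilon$-heavy. By the case analysis in \Cref{lem:packing-sep}, the same then holds for every $\varepsilon$-light $u$. None of this needs anything to be computed beyond $S$ itself.

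\textbf{Converting to a single flip.} We apply \Cref{fact:conversion}(2) to $S$. Since $|S|\le k_1$ is a constant, it returns $T\supseteq S$ with $|T|\le k_1+(2^{k_1}+k_1)^2=:k$ and a flip $G'\in\flip(G/T)$ with $B^r_{G'}(u)\subseteq B^{6r}_S(u)$. The fact states this is computable in time $\Ocal(4^{k_1}n^2)$. The correctness argument is then exactly that of \Cref{lem:packing-sep}: $B^r_{G'}(u)$ lies inside a ball that is not $\varepsilon$-heavy. We output $T$ and $G'$.

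\textbf{Running time and the main obstacle.} The total time is dominated by the net computation, giving $\Ocal_{\C,r,\varepsilon}(n^5)$ in general and $\Ocal_{\C,r,\varepsilon}(n^3)$ in the stable case. The remaining steps cost $\Ocal(n^2)$ beyond reading the weights. The main obstacle is making \Cref{fact:conversion}(2) genuinely constructive, since the paper cites it from \cite{separability} with a time bound. I would check by inspecting its proof that the refining set $A$ is found as follows. For each ordered pair of $S$-classes we pick a witness vertex, and each choice takes $\Ocal(n^2)$ time by scanning adjacency. The flip pattern is then determined by comparing neighbourhood types, which is also $\Ocal(n^2)$. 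No heavy-ball test or flip-metric computation is required at this stage, because correctness was already established existentially. A secondary point is the constant: $\Ocal_{\C,r,\varepsilon}$ absorbs $\Xi(|S|)$ and $4^{k_1}$, and both depend on $r$ and $\varepsilon$ only.
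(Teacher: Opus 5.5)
Your proposal is correct and follows the paper's proof essentially step for step: you compute a $(12r,\varepsilon,\delta,2)$-flip-net by \Cref{thm:algo} or \Cref{thm:algo-dependent}, adjoin the $\delta$-heavy vertices to its union to obtain $S$, apply the arguments of \Cref{lem:light-balls,lem:packing-sep} to the $6r$-balls over $S$, and finish with the conversion of \Cref{fact:conversion}(2) in time $\Ocal(n^2)$, so that computing the net dominates the running time. The ``main obstacle'' you flag is not one, since \Cref{fact:conversion}(2) is already stated with its running time, and the paper uses that bound directly.
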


The proofs are in \Cref{app:dependent}. The difference in the running times between monadically stable and dependent classes is precisely due the difference in the running times of \Cref{thm:algo} and \Cref{thm:algo-dependent}.

\subparagraph*{Polynomial margins.}
The \emph{margins} of uniform quasi-wideness --- how large a set must be to contain $m$ elements
that are $r$-independent after boundedly many deletions --- were originally obtained by iterated
Ramsey arguments and are non-elementary \cite{NOdM11}. Kreutzer, Rabinovich and Siebertz \cite{polynomialUQW}
showed, through the indiscernible sequences of Malliaris and Shelah \cite{MS14}, that for every fixed $r$ they
can be taken polynomial in $m$, and that the sets can be computed in polynomial time. 
The dense counterpart, flip-flatness with polynomial
margins and a cubic-time algorithm, is \cite[Theorem~1.4]{indiscernibles}, obtained along the
same route. The proof of \Cref{lem:winning-packing} gives a second route to it, through the Flipper
game, and extends it to the packing property: the sequences of \Cref{lem:winning-packing} are
$\delta_k(\varepsilon,m)=\varepsilon/m^{k+1}$ and $t_k(\varepsilon,m)\le(2m^k/\varepsilon)^k$, so
$1/\delta_k$ and $t_k$ are polynomial in $m$ and $1/\varepsilon$ with exponents depending on the
rank alone, and the only step of the proof that is not polynomial is Ramsey's theorem, which for
graphs \Cref{fact:conversion}(2) replaces at the cost of parameter sets of size exponential in the
rank.

\begin{corollary}[polynomial margins]\label{cor:polynomial}
Let $\C$ be a monadically stable class of graphs, let $r\in\N$, let $k$ bound $\rk_{18r}$ on
$\C$, and put $s:=k+(2^k+k)^2$. For every $\varepsilon>0$ and $m\in\N$, every weighting $\w$ of
every $G\in\C$ has an $(r,\varepsilon,\varepsilon/m^{k+1},m)$-flip-net of order
$\big(s,(2m^k/\varepsilon)^k\big)$, computable from $G$ and $\w$ in time
$\Ocal_{\C,r,\varepsilon,m}(n^3)$.
\end{corollary}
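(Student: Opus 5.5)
We run Stage~1 of \Cref{lem:winning-packing} at radius $6r$ in place of $r$, and replace its Ramsey stage by the metric conversion of \Cref{fact:conversion}(2).

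\textbf{Existence.} Since $k$ bounds $\rk_{18r}=\rk_{3\cdot 6r}$ on $\C$, Stage~1 with radius $6r$, applied to $A:=V(G)$ and $S:=\emptyset$ with tolerance $\varepsilon$ and target $m$ itself (no Ramsey blow-up), gives a family $\Fcal_0\subseteq[V(G)]^{\le k}$. Its size is at most $t_k(\varepsilon,m)\le(2m^k/\varepsilon)^k$, and every $\varepsilon$-heavy $X$ of $\delta_k(\varepsilon,m)=\varepsilon/m^{k+1}$-light vertices has $\ind^{6r}_F(X)\ge m$ for some $F\in\Fcal_0$.

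For each $F\in\Fcal_0$, apply \Cref{fact:conversion}(2) to obtain $T_F\supseteq F$ with $|T_F|\le k+(2^k+k)^2=s$ (monotonicity in $|F|\le k$), together with a flip $G'_F\in\flip(G/T_F)$ satisfying $B^{r}_{G'_F}(v)\subseteq B^{6r}_F(v)$ for all $v$. Put $\Fcal:=\{T_F:F\in\Fcal_0\}$.

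To see that this works, take $Y\subseteq X$ of size $m$ that is $6r$-independent in $\dist_F$. If two elements $u,v\in Y$ had $\dist_{G'_F}(u,v)\le r$, then $v\in B^r_{G'_F}(u)\subseteq B^{6r}_F(u)$, a contradiction. So $Y$ is $r$-independent in the single flip $G'_F$, and $\Fcal$ is a flip-net of the stated order. The scaled radius is exactly why the rank is taken at $18r$.

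\textbf{Algorithm.} Follow the recursion of Stage~1 and check each step is efficient.
\begin{itemize}
\item At each node, compute Flipper's move $s$ via \Cref{thm:rank-algo} at radius $18r$ in time $\Ocal(n^2)$.
\item Compute the flip-metric over the bounded set $S'$ by enumerating the at most $\Xi(k)$ flips over it and running BFS in each, in $\Ocal(n^2)$ time per flip. This yields all $6r$-balls and $18r$-balls over $S'$ in $\Ocal(n^3)$, or $\Ocal(n^2)$ per ball family if done carefully.
\item Build the maximal disjoint packing of heavy balls by a greedy scan over centres in $A$.
\item Recurse into the sets $A_i$.
\end{itemize}
The recursion tree has at most $t_k$ nodes, a constant depending on $\C,r,\varepsilon,m$, so the total time is $\Ocal(n^3)$. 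The final conversion of \Cref{fact:conversion}(2) costs $\Ocal(4^kn^2)$ per member.

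\textbf{Main obstacle.} The delicate point is \Cref{thm:rank-algo} itself. Flipper's moves must be computable in the confining, parameter-playing form of the game and must also respect the current arena and parameters at every recursion node, that is, they must be a strategy for the subgames $\rk(A_i/S')$ and not only for the game from scratch. I would handle this by invoking the theorem as a strategy that is maintained along the play, treating each $A_i$ as Localiser's answer $v_i$.
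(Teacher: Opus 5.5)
Your proposal is correct and follows the paper's proof essentially verbatim. You run Stage~1 of \Cref{lem:winning-packing} at radius $6r$ (hence rank at $18r$) with target $m$ instead of the Ramsey-inflated $m'$, then apply \Cref{fact:conversion}(2) to each member to turn $6r$-independence in the flip-metric over $F$ into $r$-independence in a single flip over $T_F$, and compute everything by the algorithm of \Cref{thm:algo} with $m$ in place of $m'$ followed by the constructive conversion; your handling of the subgames, following Flipper's computable strategy along each branch with $A_i$ as the arena after Localiser answers $v_i$, is exactly how the recursion in the proof of \Cref{thm:algo} is organised.
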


\begin{proof}
Stage~1 of the proof of \Cref{lem:winning-packing} at radius $6r$ yields
$\Fcal\subseteq[V(G)]^{\le k}$ with $|\Fcal|\le t_k(\varepsilon,m)\le(2m^k/\varepsilon)^k$ such
that every $\varepsilon$-heavy $X$ of $\delta_k(\varepsilon,m)$-light vertices has
$\ind^{6r}_F(X)\ge m$ for some $F\in\Fcal$, where $\delta_k(\varepsilon,m)=\varepsilon/m^{k+1}$.
For each $F\in\Fcal$, \Cref{fact:conversion}(2) gives $T_F\supseteq F$ with $|T_F|\le s$ and
$G'_F\in\flip(G/T_F)$ with $B^r_{G'_F}\subseteq B^{6r}_F$, so that $\ind^r_{G'_F}(X)\ge m$; the
family $\{T_F:F\in\Fcal\}$ is the required net. It is computed by the algorithm of
\Cref{thm:algo} with $m$ in place of $m'$, followed by the conversion, whose proof in
\cite{separability} is constructive.
\end{proof}

At $\varepsilon=1$ this is flip-flatness with margin $m^{k+1}$ and parameter sets of size $s$,
that is, \cite[Theorem~1.4]{indiscernibles} with the exponent made explicit in terms of the rank;
and since Stage~1 is constructive on a given set --- to serve a specific $X$ one need not pack,
but simply descends into the ball in which $X$ concentrates, with Flipper's move from
\Cref{thm:rank-algo} at each level --- the cubic-time algorithm of that theorem is recovered as
well. What is new is the $\varepsilon$ axis: the number $t$ of parameter sets and the reciprocal
$1/\delta$ of the tolerance are polynomial in $1/\varepsilon$ as well as in $m$, while their size,
as in flip-flatness, depends on $r$ alone.

\bibliography{bibliography}
\appendix

\section{Proofs for \Cref{sec:variants}}\label{app:variants}

This appendix formalises the statements summarised in \Cref{tab:variants}. We do not define the
structural notions involved --- nowhere density, treewidth, treedepth, cliquewidth, shrubdepth ---
and refer to \cite{nevsetvril2012sparsity,flipbreak-full} and the references there. The eight
equivalences are those of \cite[Section~18]{flipbreak-full}; what we supply is the reduction of each
packing property to the corresponding variant of flatness or of breakability, and the converse
constructions. Proofs are given in outline where they are modifications of an argument of
\Cref{sec:packing,sec:backtosep}. As there, weightings are normalised to total weight $1$ in proofs.

\subsection{The properties}

Throughout, $\rho$ ranges over $\N\cup\{\infty\}$ and $G$ is a graph. For $F,X\subseteq V(G)$ we use
$\ind^\rho_{G-F}(X)$ as defined in \Cref{sec:variants}, and
\[ \ind^\rho_{\flip,F}(X):=\max_{G'\in\flip(G/F)}\ind^\rho_{G'}(X). \]
Note that $\ind^\rho_{\flip,F}$ is the single-flip quantity of \Cref{def:flipnet}, not the
flip-metric quantity $\ind^\rho_F$; the two satisfy $\ind^\rho_{\flip,F}\le\ind^\rho_F$. We write
$\mathsf{op}$ for either $\del$ or $\flip$, with $\ind^\rho_{\del,F}:=\ind^\rho_{G-F}$, and
$B^\rho_{\mathsf{op},S}(u)$ for the ball of radius $\rho$ around $u$ in $G-S$, respectively in the
flip-metric over $S$; for $u\in S$ and $\mathsf{op}=\del$ we set $B^\rho_{\del,S}(u):=\{u\}$.

\begin{definition}\label{def:app-packing}
Let $\mathsf{op}\in\{\del,\flip\}$ and $\rho\in\N\cup\{\infty\}$. A class $\C$ is
\emph{$\mathsf{op}$-packable at radius $\rho$} if there is $k\in\N$ such that for every
$\varepsilon>0$ and $m\in\N$ there are $\delta>0$ and $t\in\N$ such that for every $G\in\C$ and
every weighting $\w$ of $G$ there is $\Fcal\subseteq[V(G)]^{\le k}$ with $|\Fcal|\le t$ such that
every $\varepsilon$-heavy $X\subseteq V(G)$ of $\delta$-light vertices satisfies
$\ind^\rho_{\mathsf{op},F}(X)\ge m$ for some $F\in\Fcal$. It is
\emph{$2$-$\mathsf{op}$-packable at radius $\rho$} if this holds with $m$ fixed to $2$, so that
$\delta$ and $t$ depend on $\varepsilon$ alone; $k$ still depends on $\rho$ alone.
\end{definition}

Thus \Cref{def:packing} is $\flip$-packability at every finite radius, and \Cref{def:2packing} is
$2$-$\flip$-packability at every finite radius. The two families of properties to which we reduce
these are the following; they are the properties of \cite[Section~18]{flipbreak-full}, transcribed
into our notation.

\begin{definition}\label{def:app-flat-break}
Let $\mathsf{op}$ and $\rho$ be as above.
\begin{enumerate}
\item $\C$ is \emph{$\mathsf{op}$-flat at radius $\rho$} if there are $k\in\N$ and $f:\N\to\N$ such
that for all $G\in\C$, $m\in\N$ and $A\subseteq V(G)$ with $|A|\ge f(m)$ there is $F\subseteq V(G)$
with $|F|\le k$ and $\ind^\rho_{\mathsf{op},F}(A)\ge m$.
\item $\C$ is \emph{$\mathsf{op}$-breakable at radius $\rho$} if there are $k\in\N$ and $f:\N\to\N$
such that for all $G\in\C$, $m\in\N$ and $W\subseteq V(G)$ with $|W|\ge f(m)$ there are
$F\subseteq V(G)$ with $|F|\le k$ and disjoint $A_1,A_2\subseteq W$ of size at least $m$ such that
$A_1$ and $A_2$ lie at distance greater than $\rho$ in $G-F$, respectively in some $F$-flip of $G$.
\end{enumerate}
\end{definition}

At finite radii, $\flip$-flatness and $\flip$-breakability are \Cref{def:flipflat,def:flipbreak}, and
$\del$-flatness is uniform quasi-wideness. One point of bookkeeping recurs. In \cite{flipbreak-full}
a flip is given by a partition of the vertex set into boundedly many parts; for the translation to
definable flips we once again use \Cref{fact:conversion}.

\subsection{Imported results}

\begin{fact}[Flatness]\label{fact:app-flat}
For a class $\C$ of graphs:
\begin{enumerate}
\item $\C$ is $\del$-flat at every finite radius iff $\C$ is nowhere dense \cite{NOdM11};
\item $\C$ is $\flip$-flat at every finite radius iff $\C$ is monadically stable
\cite{indiscernibles};
\item $\C$ is $\del$-flat at radius $\infty$ iff $\C$ has bounded treedepth
\cite[Theorem~18.28]{flipbreak-full};
\item $\C$ is $\flip$-flat at radius $\infty$ iff $\C$ has bounded shrubdepth
\cite[Theorem~18.17]{flipbreak-full}.
\end{enumerate}
\end{fact}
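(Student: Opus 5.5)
The four items are imported, so the plan is not to reprove them. Instead I would check that each cited theorem, stated in its source's conventions, matches \Cref{def:app-flat-break}(1) as transcribed here. The only real work is translating between the partition flips of \cite{flipbreak-full} and the definable flips $\flip(G/F)$ used here, and this is done with \Cref{fact:conversion}.

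\textbf{Items (1)--(3).} For item (1), $\del$-flatness at every finite radius is literally uniform quasi-wideness: after deleting at most $k$ vertices, a large set $A$ has $m$ elements outside $F$ that are pairwise far apart in $G-F$. The equivalence with nowhere density from \cite{NOdM11} needs no closure assumption on $\C$. For item (2), $\flip$-flatness at every finite radius is exactly \Cref{def:flipflat}, because $\ind^r_{\flip,F}$ is the single-flip quantity. So item (2) is \Cref{thm:flipflat}. For item (3) no flips occur, so \cite[Theorem~18.28]{flipbreak-full} applies verbatim. The one thing to confirm is that elements of $F$ are not counted, as in $\ind^\infty_{G-F}$. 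If they were counted, this would change $m$ by at most $k$, which is absorbed by $f$.

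\textbf{Item (4).} Here I would argue in both directions. Suppose first that $\C$ is $\flip$-flat at radius $\infty$ in our sense. An $F$-flip with $|F|\le k$ is a $\Pcal_F$-flip, and $|\Pcal_F|\le 2^k+k$. So $\C$ is flat in the partition sense of \cite{flipbreak-full}, and \cite[Theorem~18.17]{flipbreak-full} gives bounded shrubdepth. Conversely, suppose $\C$ has bounded shrubdepth. That theorem gives a partition $\Pcal$ of bounded size $p$ and a $\Pcal$-flip $H$ in which $A$ has $m$ elements lying in distinct components.
\begin{itemize}
\item Classes of bounded shrubdepth are monadically stable, hence of bounded VC-dimension $d$.
\item So \Cref{fact:conversion}(1) yields $S$ with $|S|\le\Ocal(dp^2)$ and $H'\in\flip(G/S)$ in which every component is contained in a component of $H$.
\item Elements lying in distinct components of $H$ therefore also lie in distinct components of $H'$. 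Thus $\ind^\infty_{\flip,S}(A)\ge m$, with $k:=\Ocal(dp^2)$.
\end{itemize}

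The main obstacle is only bookkeeping: matching the source's conventions, namely the number of parts versus the number of flips, and whether parameters are counted. The direction needing \Cref{fact:conversion}(1) depends on bounded VC-dimension. That is available exactly on the side where we already know the class is tame, so no circularity arises.
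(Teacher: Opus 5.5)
Your proposal is correct and matches the paper's treatment. The paper likewise imports all four items from the cited sources, and the only work it indicates is translating the partition flips of \cite{flipbreak-full} into definable flips via \Cref{fact:conversion}. Your item (4) spells out exactly that bookkeeping: in one direction you pass from $F$-flips to $\Pcal_F$-flips with at most $2^k+k$ parts, and in the other you use the bounded VC-dimension of bounded-shrubdepth classes together with \Cref{fact:conversion}(1).
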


\begin{fact}[Breakability]\label{fact:app-break}
For a class $\C$ of graphs:
\begin{enumerate}
\item $\C$ is $\del$-breakable at every finite radius iff $\C$ is nowhere dense
\cite[Theorem~18.2]{flipbreak-full};
\item $\C$ is $\flip$-breakable at every finite radius iff $\C$ is monadically dependent
\cite[Theorem~13.1]{flipbreak-full};
\item $\C$ is $\del$-breakable at radius $\infty$ iff $\C$ has bounded treewidth
\cite[Theorem~18.12]{flipbreak-full};
\item $\C$ is $\flip$-breakable at radius $\infty$ iff $\C$ has bounded cliquewidth
\cite[Theorem~18.4]{flipbreak-full}.
\end{enumerate}
\end{fact}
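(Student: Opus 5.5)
Since this Fact is imported, its proof consists of checking that each of the four cited theorems of \cite{flipbreak-full} states the property of \Cref{def:app-flat-break}(2) in our notation. For the deletion items (1) and (3) no translation is needed. Breakability for deletions in \cite[Section~18]{flipbreak-full} uses the same data as ours: a bounded set $F$ to delete, and two disjoint subsets of $W$ of size $m$ lying at distance greater than $\rho$ in $G-F$, with radius $\infty$ read as lying in distinct components. The quantifier order matches as well, with $k$ depending on the radius alone and the threshold $f(m)$ absorbing $m$. So items (1) and (3) are \cite[Theorems~18.2 and~18.12]{flipbreak-full} verbatim.

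For the flip items (2) and (4), the one real point is the bookkeeping mentioned just before the Fact. In \cite{flipbreak-full} a flip is a $\Pcal$-flip for a partition $\Pcal$ of boundedly many parts, whereas \Cref{def:app-flat-break} asks for an $F$-flip with $|F|\le k$. I will prove the two forms equivalent on the classes in question, one direction at a time.

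\emph{Definable to partition.} This direction is trivial: an $F$-flip is a $\Pcal_F$-flip and $|\Pcal_F|\le2^k+k$. Hence definable breakability implies partition breakability, which by the cited theorems gives monadic dependence, respectively bounded cliquewidth.

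\emph{Partition to definable.} Here I use that both monadically dependent classes and classes of bounded cliquewidth have bounded VC-dimension. For the former this is recalled in \Cref{sec:prelims}. For the latter it holds because bounded cliquewidth classes are monadically dependent. Given $\rho$, I apply partition breakability at radius $5\rho$ when $\rho$ is finite, and at radius $\infty$ otherwise. This yields a $\Pcal$-flip $H$ and sets $A_1,A_2\subseteq W$ at distance greater than $5\rho$ in $H$. Then \Cref{fact:conversion}(1) supplies $S$ with $|S|\le\Ocal(d|\Pcal|^2)$, a bound independent of $m$, together with $H'\in\flip(G/S)$ satisfying $B^\rho_{H'}(v)\subseteq B^{5\rho}_H(v)$. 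Consequently $A_1$ and $A_2$ remain at distance greater than $\rho$ in $H'$. At radius $\infty$ I use instead the clause that every component of $H'$ lies inside a component of $H$. For item (2) one can alternatively cite \Cref{thm:flipbreak} directly, which is already stated for definable flips.

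I expect the only delicate step to be this conversion. Two things need care there. First, the parameter bound $k$ must stay dependent on $\rho$ alone: it does, since $|\Pcal|$ depends only on the radius $5\rho$ and $d$ only on $\C$. Second, the radius blow-up must be absorbed by invoking breakability at $5\rho$. Besides that, I would verify that the cited theorems use the same convention that $A_1,A_2$ are disjoint subsets of $W$ of size at least $m$ once $|W|\ge f(m)$. If the source instead fixes sizes proportional to $|W|$ or uses a different threshold, I would reconcile this by shrinking the parts and adjusting $f$ accordingly.
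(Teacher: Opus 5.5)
Your proposal is correct and takes the paper's approach: the four items are the cited theorems of \cite[Section~18]{flipbreak-full}, read off verbatim for deletions, and for flips translated from partition flips to definable ones, trivially in one direction since $|\Pcal_F|\le 2^k+k$, and via \Cref{fact:conversion}(1) with bounded VC-dimension and breakability invoked at radius $5\rho$ (or the component clause at radius $\infty$) in the other. The paper gives this bookkeeping only as the one-sentence remark before the Fact, and your write-up fills in those details correctly.
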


\begin{fact}[Depth parameters]\label{fact:app-depth}
Let $\rk^\del_\rho$ be defined as in \Cref{def:rank}, with $B^\rho_{S\cup\{s\}}(u)$ replaced by
$B^\rho_{\del,S\cup\{s\}}(u)$.
\begin{enumerate}
\item $\rk^\del_r$ is bounded on $\C$ for every finite $r$ iff $\C$ is nowhere dense
\cite{deciding};\footnote{The Splitter game of \cite{deciding} is stated in its shrinking form, in
which the arena becomes the ball around Localiser's vertex in the current arena rather than in
$G-S$. That the confining form used here is likewise bounded exactly on nowhere dense classes follows
as for flips, cf.\ \cite[Section~3]{flippergame-full} and \Cref{app:algorithmic}.

}
\item $\rk_r$ is bounded on $\C$ for every finite $r$ iff $\C$ is monadically stable
\cite{flippergame,flippergame-full};
\item $\td(G)\le\rk^\del_\infty(G)\le\td(G)+1$, where $\td(G)$ is the treedepth of $G$
\cite{nevsetvril2012sparsity};
\item every class of graphs of bounded shrubdepth has bounded $\rk_\infty$.\footnote{The converse
holds too, e.g.\ by the fact that bounded $\rk_\infty$ gives $\flip$-packability at radius $\infty$
by \Cref{lem:app-rank-packing}, hence $\flip$-flatness at radius $\infty$ by
\Cref{lem:app-packing-flat}, hence bounded shrubdepth by \Cref{fact:app-flat}(4).}
\end{enumerate}
\end{fact}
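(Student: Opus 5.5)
Items (1)--(3) are imported, so the plan is to justify them only briefly and spend the effort on (4). For (3) I would check that the recursion for $\rk^\del_\infty$ written in \Cref{sec:variants} coincides with the elimination-forest recursion $\td(G)=1+\min_v\max_C\td(C)$ on connected arenas. Two discrepancies remain. First, on a disconnected arena Localiser chooses the component, which gives the lower bound $\td\le\rk^\del_\infty$. Second, the final round isolating the last vertex costs at most one extra, which gives the upper bound $\rk^\del_\infty\le\td+1$. For (1) I would follow the footnote: pass between the shrinking and confining forms of the Splitter game at a constant loss in radius, as \cite[Section~3]{flippergame-full} does for flips. Item (2) is \Cref{thm:finfliprank}.

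For (4) I would not work with tree-models directly. Instead I would use the equivalent notion of bounded \emph{SC-depth} (Ganian et al.), which is bounded exactly on classes of bounded shrubdepth. A graph has SC-depth $0$ if it is a single vertex. It has SC-depth at most $d$ if it arises from a disjoint union $G_1\uplus\dots\uplus G_p$ of graphs of SC-depth at most $d-1$ by complementing the adjacency inside one set $X$. The strategy for Flipper, by induction on $d$, has two phases. In the first, Flipper confines the arena to a single $G_i$. The graph $H$ obtained from $G$ by re-complementing $X$ is a $\Pcal$-flip for the two-part partition $\{X,V\setminus X\}$, and the connected components of $H$ are the vertex sets of the $G_i$. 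Bounded shrubdepth implies monadic stability, hence bounded VC-dimension $d_0$. So \Cref{fact:conversion}(1) yields a set $S_0$ with $|S_0|=\Ocal(d_0)$ and a flip $H'\in\flip(G/S_0)$ each of whose components lies inside one $V(G_i)$. Flipper plays the vertices of $S_0$ one per round. Afterwards every ball $B^\infty_{S}(u)$, for any $S\supseteq S_0$, lies inside the component of $u$ in $H'$, and hence inside a single $V(G_i)$. This holds because $\flip(G/S_0)\subseteq\flip(G/S)$ and the ball is an intersection over all $S$-flips.

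In the second phase Flipper plays inside $V(G_i)$ the strategy given by induction for $G_i$, which has SC-depth at most $d-1$. This gives the bound $\rk_\infty\le\sum_{j\le d}\Ocal(d_0)+1$, where the final $+1$ round disposes of a single vertex. The key point is a transfer claim. Let $S_i$ be the parameters of the inner strategy, played in $G$ rather than in $G_i$. I claim that for $u\in V(G_i)$ the ball $B^\infty_{S_0\cup S_i}(u)$ in $G$ is contained in the corresponding ball over $S_i$ in $G_i$. To prove it I would glue two flips into a single $(S_0\cup S_i)$-flip of $G$. Across different $G_j$ it agrees with $H'$. Inside $V(G_i)$ it realises the inner flip of $G_i$, corrected by the complementation on $X\cap V(G_i)$.

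I expect this gluing to be the main obstacle. The correction on $X\cap V(G_i)$ is a union of $(S_0\cup S_i)$-classes only if $X$ is definable from the parameters. The $S_0$-classes need not refine $\{X,V\setminus X\}$, since \Cref{fact:conversion}(1) controls only distances and says nothing about the partition. My first attempt would be to enlarge $S_0$ so that it does refine $\{X,\ldots\}$. The idea is to include, for each of the boundedly many $S_0$-classes meeting both $X$ and its complement inside $V(G_i)$, witnesses distinguishing the two sides, using bounded VC-dimension. If that fails, I would avoid nesting altogether. I would flatten the $d$ nested complementations along a branch into one partition flip with at most $2^d$ parts, given by membership in the sets $X$ met along the branch. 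I would then apply \Cref{fact:conversion}(1) once to this partition flip at the current level. Since each level only needs its own set $X$ to be confined, this would sidestep the compatibility problem at the cost of playing $\Ocal(d_0 4^d)$ parameters per level.
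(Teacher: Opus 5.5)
Your proposal is correct once you commit to the fallback, and that fallback is essentially the paper's argument. The paper likewise decomposes by a depth parameter (flipdepth, i.e.\ $2$-flips of disjoint unions, where you use SC-depth) and at each level applies \Cref{fact:conversion}(1) to the partition flip given by membership in the complemented sets along the current branch (``the argument repeats, with the partition refined at each level''). This needs no transfer claim, because the arena of the confining game only shrinks, so you can drop both the nested gluing and the attempt to make $X$ definable from the parameters. One minor correction on item~(3): the additive $+1$ arises because Flipper must move before Localiser picks a component of an initially disconnected graph (two isolated vertices have rank $2$ but treedepth $1$), not from the final round, which is already counted by $\td(K_1)=1$.
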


The last item is folklore. One way to see it is to recall that bounded shrubdepth is equivalent to
bounded \emph{flipdepth} \cite[Fact~18.18]{flipbreak-full}, where $K_1$ has flipdepth $0$ and $G$ has
flipdepth at most $h+1$ if it is a $2$-flip of a disjoint union of graphs of flipdepth at most $h$.

Given such a $G$, apply \Cref{fact:conversion}(1) to the $2$-flip: it yields a set $S_0$, of size
bounded in terms of the VC-dimension, whose definable flip has components refining those of the
disjoint union. Flipper plays the elements of $S_0$ in turn, after which the arena lies inside a
single component, which is a graph of flipdepth at most $h$ up to a flip over the parameters played;
the argument repeats, with the partition refined at each level. Hence $\rk_\infty(G)$ is bounded in
terms of $h$ and the VC-dimension, both bounded on the class.

\begin{fact}[Balanced separators]\label{fact:app-sep}
\begin{enumerate}
\item If $G$ has a tree decomposition of width $w$ and $\w$ is a weighting of $G$, then some bag $B$
satisfies $|B|\le w+1$ and $\w(C)\le\frac12\w(V(G))$ for every connected component $C$ of $G-B$;
this follows from \cite{reed_survey}. Orienting each edge of the decomposition tree towards the
heavier side and taking a sink bag proves it.
\item If $G$ has rankwidth at most $w$ and $\w$ is a $\frac14$-balanced weighting of $G$, then there
are a partition $\Pcal$ of $V(G)$ with $|\Pcal|\le2^w+2^{2^w}$ and a $\Pcal$-flip $H$ of $G$ with
$\w(C)\le\frac34\w(V(G))$ for every connected component $C$ of $H$. This is
\cite[Lemmas~18.6 and 18.7]{flipbreak-full} with the counting measure replaced by $\w$: the descent
of \cite[Lemma~18.6]{flipbreak-full} produces an edge of the decomposition tree whose two sides
$X,Y$ each carry at least a quarter of the weight, and \cite[Lemma~18.7]{flipbreak-full} produces a
$\Pcal$-flip with no edges between $X$ and $Y$, so every component lies inside $X$ or inside $Y$.
\end{enumerate}
\end{fact}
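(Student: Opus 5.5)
For item (1), let $(T,(B_t)_{t\in V(T)})$ be a tree decomposition of width $w$ and normalise $\w(V(G))=1$. For an edge $tt'$ of $T$, let $V_{t\to t'}$ be the set of vertices that occur in some bag of the component of $T-tt'$ containing $t'$ but not in $B_t$. The sets $V_{t\to t'}$ and $V_{t'\to t}$ are disjoint. Indeed, a vertex in both would occur on both sides of the edge, so by connectivity of its occurrence subtree it would lie in $B_t\cap B_{t'}$, and in particular in $B_t$. Hence at most one of them has weight more than $\frac12$. We orient $tt'$ towards $t'$ if $\w(V_{t\to t'})>\frac12$, towards $t$ if $\w(V_{t'\to t})>\frac12$, and arbitrarily otherwise. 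A finite tree has fewer edges than nodes, so some node $t$ has no outgoing edge. Let $B:=B_t$, so $|B|\le w+1$.

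It remains to check the components of $G-B$. Every vertex outside $B$ lies in $V_{t\to t'}$ for exactly one neighbour $t'$ of $t$. No edge of $G$ joins $V_{t\to t'}$ to $V_{t\to t''}$ for $t'\neq t''$, because every edge of $G$ lies inside some bag. Hence each component $C$ of $G-B$ is contained in a single $V_{t\to t'}$. Since the edge $tt'$ is not oriented away from $t$, we get $\w(C)\le\w(V_{t\to t'})\le\frac12$.

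For item (2), we take a rank decomposition of width $w$: a subcubic tree whose leaves are the vertices of $G$, such that every edge of the tree induces a bipartition $(X,Y)$ with cut-rank (the $\mathbb F_2$-rank of the $X\times Y$ adjacency matrix) at most $w$.

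We first find an edge whose two sides both carry at least a quarter of the weight, by a descent. Start at the edge at any leaf. The leaf side has weight $<\frac14$ by balance, so the other side has weight $>\frac34$. Now suppose we are at an edge whose far side has weight $>\frac34$, and step to the far endpoint, an internal node with two further edges. The two subtrees beyond these edges carry total weight $>\frac34$, so one of them carries at least $\frac14$.
\begin{itemize}
\item If that subtree carries at most $\frac34$, then the corresponding edge has both sides of weight at least $\frac14$, and we stop.
\item Otherwise that subtree carries more than $\frac34$, and we continue along its edge.
\end{itemize}
The descent cannot reach a leaf, since a leaf weighs less than $\frac14$, so it terminates at an edge $(X,Y)$ with $\frac14\le\w(X),\w(Y)\le\frac34$.

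Next we build the flip. Since the $X\times Y$ adjacency matrix has rank at most $w$, its rows take at most $2^w$ distinct values, and so do its columns. Partition $X$ by rows and $Y$ by columns, giving $\Pcal$ with $|\Pcal|\le 2\cdot2^w\le2^w+2^{2^w}$. Between a row class and a column class the adjacency is constant, either complete or empty. Flipping every complete pair yields a $\Pcal$-flip $H$ with no $X$--$Y$ edges. Hence every component of $H$ lies in $X$ or in $Y$ and has weight at most $\frac34$.

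The only step needing care is the descent in (2), specifically that it cannot stall; the counting above settles it. Everything else is routine.
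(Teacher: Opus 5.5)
Your proof is correct and takes the same route as the paper. In (1) you orient each tree edge towards a side of weight more than $\frac12$ and take a sink bag; in (2) you descend to an edge whose sides $X,Y$ both have weight in $[\frac14,\frac34]$ and then use the bounded cut-rank to flip away all $X$--$Y$ edges, and partitioning $Y$ by its column vectors (rather than by adjacency to the row classes of $X$) even gives $|\Pcal|\le2^{w+1}$, slightly better than the stated $2^w+2^{2^w}$.
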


\begin{fact}[Bipartite Ramsey]\label{fact:bipramsey}
There is a function $\Rcal_2:\N\times\N\to\N$ such that for all $m,c\in\N$, every colouring with $c$
colours of the edges of the complete bipartite graph with both sides of size $\Rcal_2(m,c)$ admits
sides $A'$, $B'$ of size $m$ all of whose edges receive the same colour.
\end{fact}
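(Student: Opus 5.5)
We prove the Fact by a direct double pigeonhole argument, with an explicit bound. No Ramsey-type iteration is needed, since only the sides of the monochromatic complete bipartite subgraph must be large.

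\emph{Choice of $\Rcal_2$.} Put $L:=cm$ and
\[ \Rcal_2(m,c):=\max\Big(L,\ (m-1)\cdot c\cdot\tbinom{L}{m}+1\Big). \]

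\emph{Fixing $L$ vertices on the left.} Consider a $c$-colouring of the complete bipartite graph with sides $P$ and $Q$, both of size $N:=\Rcal_2(m,c)$. Since $N\ge L$, we may fix $L$ vertices $p_1,\dots,p_L$ of $P$.

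\emph{Assigning a type to each right vertex.} Fix $q\in Q$. The $L=cm$ edges $qp_1,\dots,qp_L$ carry only $c$ colours. By the pigeonhole principle, some colour $i_q$ appears on at least $m$ of these edges. Choose an $m$-subset $A_q\subseteq\{p_1,\dots,p_L\}$ of the endpoints of such edges, so that every edge between $q$ and $A_q$ has colour $i_q$. The pair $(i_q,A_q)$ is the \emph{type} of $q$. There are at most $c\binom{L}{m}$ possible types.

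\emph{Finding the monochromatic subgraph.} Since $|Q|=N\ge(m-1)c\binom{L}{m}+1$, the pigeonhole principle gives $m$ vertices of $Q$ sharing a common type $(i,A)$. Let $B'$ be the set of these $m$ vertices and put $A':=A$. Then every edge between $A'$ and $B'$ has colour $i$, and both sides have size exactly $m$.

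\emph{Computability.} The function $\Rcal_2$ is given by an explicit formula, so it is computable, in line with \Cref{lem:ramsey}.

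The only point requiring care is that types must be defined relative to a fixed finite window $p_1,\dots,p_L$ of the left side, so that their number is bounded independently of $N$. This is why the left side only needs size $L=cm$; the right side alone absorbs the pigeonhole over types.
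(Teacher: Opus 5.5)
Your proof is correct and complete. Each right vertex has $cm$ edges into the fixed window, so some colour appears on at least $m$ of them. This gives at most $c\binom{cm}{m}$ types, and $(m-1)c\binom{cm}{m}+1$ right vertices force $m$ of them to share a type, which yields the monochromatic $K_{m,m}$.

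The paper gives no proof to compare against: it imports this fact as a standard result, to be used in \Cref{lem:app-m2-break}. Your argument is the standard Kővári--Sós--Turán-style double pigeonhole, and it also yields an explicit, computable bound for $\Rcal_2$.
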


\subsection{From packability to flatness and breakability}

\begin{lemma}\label{lem:app-packing-flat}
If $\C$ is $\mathsf{op}$-packable at radius $\rho$, then $\C$ is $\mathsf{op}$-flat at radius
$\rho$, with the same $k$.
\end{lemma}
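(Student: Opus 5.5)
The plan is to repeat the argument of \Cref{lem:packing-flat} almost verbatim, now for an arbitrary operation $\mathsf{op}$ and radius $\rho$. Let $k$ be the constant of \Cref{def:app-packing}. Given $m\in\N$, take the balance $\delta=\delta(1,m)$ and the size $t=t(1,m)$ provided by \Cref{def:app-packing} for tolerance $\varepsilon=1$ and this $m$, and put $f(m):=\lfloor1/\delta\rfloor+1$. Since $k$ was fixed before $\varepsilon$ and $m$ in \Cref{def:app-packing}, it does not depend on $m$. This is exactly the uniformity that \Cref{def:app-flat-break}(1) demands, so the flatness constant is the same $k$.

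Next, let $G\in\C$ and $A\subseteq V(G)$ with $|A|\ge f(m)$, and consider the indicator weighting $\w_A$. Its total weight is $|A|>1/\delta$, and each vertex has weight at most $1<\delta|A|$. Hence every vertex is $\delta$-light for $\w_A$, and $A$ itself is $1$-heavy. The family $\Fcal\subseteq[V(G)]^{\le k}$ supplied by $\mathsf{op}$-packability therefore serves $X:=A$: some $F\in\Fcal$ satisfies $\ind^\rho_{\mathsf{op},F}(A)\ge m$ with $|F|\le k$, which is precisely $\mathsf{op}$-flatness at radius $\rho$.

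The only point needing a check is that the conclusion of \Cref{def:app-packing} is literally the quantity appearing in \Cref{def:app-flat-break}(1). This holds by definition in both cases: for $\mathsf{op}=\flip$ it is the single-flip quantity $\ind^\rho_{\flip,F}$, and for $\mathsf{op}=\del$ it is $\ind^\rho_{G-F}$, counted on $A\setminus F$. Both definitions use the same $\ind^\rho_{\mathsf{op},F}$, including the convention at $\rho=\infty$, so no conversion is needed. I expect no real obstacle here. The only subtlety is the strict inequality $|A|>1/\delta$ needed for lightness, which the choice $f(m)=\lfloor1/\delta\rfloor+1$ guarantees.
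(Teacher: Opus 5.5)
Your proposal is correct and is essentially the paper's own argument: take tolerance $\varepsilon=1$, use the indicator weighting $\w_A$ with a threshold $f(m)$ just above $1/\delta$, and let the net serve $A$ itself. Your choice $f(m)=\lfloor1/\delta\rfloor+1$ instead of the paper's $\lceil1/\delta\rceil+1$ is harmless, since both guarantee $|A|>1/\delta$.
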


\begin{proof}
As in \Cref{lem:packing-flat}: given $m$, put $\delta:=\delta(1,m)$ and
$f(m):=\lceil1/\delta\rceil+1$. For $A\subseteq V(G)$ with $|A|\ge f(m)$, every vertex is
$\delta$-light for the indicator weighting $\w_A$ and $A$ itself is $1$-heavy, so some $F\in\Fcal$
satisfies $\ind^\rho_{\mathsf{op},F}(A)\ge m$.
\end{proof}

\begin{lemma}\label{lem:app-m2-break}
If $\C$ is $2$-$\mathsf{op}$-packable at radius $4\rho$, read as $\infty$ when $\rho=\infty$, then
$\C$ is $\mathsf{op}$-breakable at radius $\rho$.
\end{lemma}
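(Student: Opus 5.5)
The plan follows the route sketched in \Cref{sec:variants}. First, extract light balls, as in \Cref{lem:light-balls}. Then split a large set along them, as in \cite[Lemma~24]{separability}. Finally, for flips, merge the flip-metric into a single flip by \Cref{fact:bipramsey}.

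\textbf{Step 1: light balls.} Fix $\rho$ and $\varepsilon:=\frac14$. Apply $2$-$\mathsf{op}$-packability at radius $4\rho$ to get $k_0$, $\delta$ and $t$. Given $W$ with $|W|$ large, use the indicator weighting $\w_W$. If $|W|>1/\delta$, every vertex is $\delta$-light, so the heavy set $H$ is empty. Put $S:=\bigcup\Fcal$, so $|S|\le k_0t=:k$, a constant once $\varepsilon$ is fixed. Run the argument of \Cref{lem:light-balls} at radius $2\rho$. If $B^{2\rho}_{\mathsf{op},S}(u)$ contained at least $|W|/4$ elements of $W$, the net would give $F\in\Fcal$ and $a,b$ in that ball at distance $>4\rho$ after $F$. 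Since $F\subseteq S$, they are also at distance $>4\rho$ over $S$, contradicting the triangle inequality through $u$.

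For deletions, note that $G-F$ distances are dominated by $G-S$ distances once vertices of $S\setminus F$ are removed. Vertices of $S$ are handled by the convention $B_{\del,S}(u)=\{u\}$ and by discarding $W\cap S$, which costs at most $k$ elements. At $\rho=\infty$, balls are components of $G-S$, respectively of the relevant flips, and the same contradiction arises because two vertices in one component are never separated.

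So every $2\rho$-ball over $S$ around a vertex of $W$ carries less than a quarter of $W$.

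\textbf{Step 2: split.} Greedily choose $w\in W$ and take $A_1:=W\cap B^{\rho}_{\mathsf{op},S}(w)$, enlarging it by adding further such balls. Work in the metric $\dist_S$, which is a metric, or in $G-S$. Grow $A_1$ as a union of $\rho$-balls centred in $W$ until it holds between $\frac14|W|$ and $\frac12|W|$ of $W$; this is possible because each single ball is light. Set $A_2:=W\setminus B^{\rho}(A_1)$.

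The part of $W$ within distance $\rho$ of $A_1$ but outside $A_1$ must be controlled. I expect this to be the main obstacle: it needs the $4\rho$ margin and the layered ball argument of \cite[Lemma~24]{separability}, taking $2\rho$-balls around the centres to absorb the neighbourhood, so that both parts keep a constant fraction of $W$, which is at least $m$ once $f(m)$ is large. At $\rho=\infty$ this is easy: components each carry at most a quarter, so group them into two unions, each of at least a quarter.

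\textbf{Step 3: single flip.} For $\mathsf{op}=\del$ the parts are already far apart in $G-S$, so we are done with $F:=S$. For flips, each pair $(a,b)\in A_1\times A_2$ has some $G'\in\flip(G/S)$ with $\dist_{G'}(a,b)>\rho$. Colour the pair by that flip, using at most $\Xi(k)$ colours. If $|A_i|\ge\Rcal_2(m,\Xi(k))$, which holds by choosing $f(m)$ large enough, \Cref{fact:bipramsey} gives $A_1'\subseteq A_1$ and $A_2'\subseteq A_2$ of size $m$ that are far apart in one common $S$-flip. This gives breakability with parameter $k$ independent of $m$.
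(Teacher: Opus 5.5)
Your Steps~1 and~3, and the case $\rho=\infty$ of Step~2, match the paper's argument. The paper uses tolerance $\frac12$ rather than $\frac14$, which changes nothing.

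There is a genuine gap in Step~2 at finite $\rho$. This is the step you flag yourself and then defer to an unspecified ``layered ball argument''. If you grow $A_1$ as a union of $\rho$-balls and put $A_2:=W\setminus B^\rho(A_1)$, you get no control on $|A_2|$. The set $B^\rho(A_1)$ lies in the union of the $2\rho$-balls around the chosen centres. Step~1 bounds each such ball individually, but says nothing about a union of many of them, and that union can exhaust $W$.

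Worse, the target you set is unattainable in general under the hypotheses of the lemma: you ask for two far-apart parts each holding a constant fraction of $W$. Take $d$-regular Ramanujan graphs whose girth tends to infinity while $d$ grows slowly.
\begin{itemize}
\item This class is nowhere dense, hence $2$-$\del$-packable at every finite radius.
\item Every $2\rho$-ball has $o(n)$ vertices.
\item By the expander mixing lemma, any two disjoint sets of $cn$ vertices each are joined by an edge once $d\gg 1/c^2$. Deleting vertices outside these sets does not remove that edge.
\end{itemize}
So for $W=V(G)$ and $\rho=1$ no split into two linear-size parts exists. Breakability does not ask for one.

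What is missing is the dichotomy the paper takes from \cite[Lemma~24]{separability}. It aims only at parts of size $m$, from $|W|$ polynomial in $m$.
\begin{itemize}
\item \emph{Some $u$ has $|B^\rho_{\mathsf{op},S}(u)\cap W|\ge 2m$.} Take $A_1:=B^\rho_{\mathsf{op},S}(u)\cap W$ and $A_2:=W\setminus B^{2\rho}_{\mathsf{op},S}(u)$. Only the \emph{single} $2\rho$-ball around $u$ is removed, so by your Step~1 $A_2$ keeps more than $\frac34|W|$ elements. The triangle inequality through $u$ puts every pair in $A_1\times A_2$ at distance greater than $\rho$.
\item \emph{Every $\rho$-ball meets $W$ in fewer than $2m$ elements.} Repeatedly pick a point of $W$ and discard its $\rho$-ball. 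Once $|W|\ge 4m^2$ this yields at least $|W|/2m\ge 2m$ points pairwise at distance greater than $\rho$, which you split into two halves.
\end{itemize}
For flips, run this with $\Rcal_2(m,\Xi(k))$ in place of $m$, and your Step~3 applies unchanged. For deletions, your discarding of $W\cap S$ takes care of the vertices of $S$.
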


\begin{proof}
The argument of \Cref{lem:light-balls} applies verbatim to both operations and at both radii: from
$2$-packability at radius $4\rho$ with tolerance $\frac12$ it yields $k'$ and $\delta$ such that for
every $G\in\C$ and every weighting $\w$ there is $S$ with $|S|\le k'$ such that
$B^{2\rho}_{\mathsf{op},S}(u)$ is not $\frac12$-heavy for every $\delta$-light $u$. For deletions
the contradiction reads: two vertices $a,b$ of the ball, hence not in $S$, lie at distance greater
than $4\rho$ in $G-F$ for some $F\subseteq S$, hence in $G-S$, contradicting the triangle
inequality; at $\rho=\infty$ it reads: two elements of one component over $S$ lie in distinct
components over some $F\subseteq S$, which is impossible as components over $S$ refine those over
$F$.

The rest is a variant of \cite[Lemma~24]{separability}. Let $W\subseteq V(G)$ with
$|W|\ge f(m):=\max(4m^2,\lceil1/\delta\rceil+1)$, and apply the above to the indicator weighting of
$W$, for which every vertex is $\delta$-light. For finite $\rho$: if some $u$ has
$|B^\rho_{\mathsf{op},S}(u)\cap W|\ge2m$, take $A_1:=B^\rho_{\mathsf{op},S}(u)\cap W$ and
$A_2:=W\setminus B^{2\rho}_{\mathsf{op},S}(u)$, which has more than $\frac12|W|\ge m$ elements; the
triangle inequality puts every pair from $A_1\times A_2$ at distance greater than $\rho$. Otherwise
every $\rho$-ball meets $W$ in fewer than $2m$ elements, so greedily choosing points of $W$ at
pairwise distance greater than $\rho$ yields at least $|W|/2m\ge2m$ of them, which we split in half.
For $\rho=\infty$, every component meets $W$ in fewer than half of its elements, so ordering the
components arbitrarily and letting $A_1$ be $W$ intersected with the shortest initial segment
meeting $W$ in at least $\frac14|W|$ elements leaves $A_2:=W\setminus A_1$ with more than
$\frac14|W|\ge m$ elements, in a disjoint union of other components.

For $\mathsf{op}=\del$ the two parts are far apart in $G-S$ and we are done. For
$\mathsf{op}=\flip$ they are far apart in the flip-metric over $S$, so each pair is separated by
some member of $\flip(G/S)$, of which there are at most $\Xi(k')$. Colouring the pairs of
$A_1\times A_2$ accordingly and applying \Cref{fact:bipramsey}, with $f(m)$ enlarged to
$\max(4\Rcal_2(m,\Xi(k'))^2,\lceil1/\delta\rceil+1)$, gives subsets of size $m$ all of whose pairs
are separated by one and the same flip.
\end{proof}

\subsection{From the structural notions to packability}

\begin{lemma}\label{lem:app-rank-packing}
Let $\rho\in\N\cup\{\infty\}$ and read $3\rho$ as $\infty$ when $\rho=\infty$. If $\rk^\del_{3\rho}$
is bounded on $\C$, then $\C$ is $\del$-packable at radius $\rho$; if $\rk_{3\rho}$ is bounded on
$\C$, then $\C$ is $\flip$-packable at radius $\rho$.
\end{lemma}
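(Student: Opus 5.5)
We will rerun the two-stage proof of \Cref{lem:winning-packing} with the operation and the radius as parameters. Write $\mathsf{op}\in\{\del,\flip\}$ and let $B^\rho_{\mathsf{op},S}$ denote the balls of the corresponding metric. For deletions this is the metric of $G-S$, with $B^\rho_{\del,S}(u)=\{u\}$ for $u\in S$. At $\rho=\infty$ the balls are components.

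\textbf{Stage~1.} We prove by induction on $k$ the analogue of the Stage~1 claim. For $A,S$ with rank of $A$ over $S$ at radius $3\rho$ at most $k$, there is $\Fcal\subseteq[V(G)]^{\le k}$ with $|\Fcal|\le t_k(\varepsilon,m)$. Every $\varepsilon$-heavy $X\subseteq A$ of $\delta_k(\varepsilon,m)$-light vertices should be $\rho$-scattered in $m$ elements over $F\cup S$ for some $F\in\Fcal$. The sequences are the same as before, possibly with $m$ replaced by a slightly larger target such as $m+|S|$.

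The inductive step uses only the three facts listed in \Cref{sec:variants}, each checked for the new setting:
\begin{itemize}
\item[(a)] Flipper's optimal move $s$ drops the rank on every $3\rho$-ball over $S'=S\cup\{s\}$ centred in $A$.
\item[(b)] A maximal family of disjoint $\rho$-balls over $S'$ of weight at least $\varepsilon/m$ has at most $m/\varepsilon$ members.
\item[(c)] A $\rho$-ball meeting a packed ball $B^\rho_{S'}(v_i)$ lies in $B^{3\rho}_{S'}(v_i)$.
\end{itemize}
Fact (c) is the triangle inequality for finite $\rho$. For $\rho=\infty$ it holds because intersecting components coincide. For deletions, balls around vertices of $S'$ are singletons, so a ball around $x\notin S'$ meeting one around $v_i$ forces $v_i\notin S'$ and the usual path argument in $G-S'$ applies. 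The case split is unchanged: either $X$ is already scattered over $S'$ (use $F=\{s\}$), or $X$ concentrates with weight at least $\varepsilon/m$ in a single ball, which lies in some $A_i$ where the induction hypothesis applies.

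\textbf{Deletions.} The main bookkeeping obstacle is the treatment of parameters. In the flip setting the base case $A\subseteq S$ gives more than $m$ elements of $S$, and these are scattered because they are isolated. For deletions, $\ind^\rho_{G-F}$ ignores elements of $F$, so the base case fails and parameter vertices in $X$ do not count. I would fix this with two adjustments.
\begin{itemize}
\item[(i)] Strengthen the light-vertex requirement by shrinking $\delta$ by an extra factor. The heavy vertices lying in the at most $k$ parameters then carry weight less than $k\delta\le\varepsilon/2$, so $X\setminus(F\cup S)$ remains $\varepsilon/2$-heavy. The recursion is run on $X\setminus S$, halving $\varepsilon$ once per level.
\item[(ii)] Replace the base case $A\subseteq S$ by the observation that then $X\setminus S=\emptyset$ cannot be heavy, so the case is vacuous.
\end{itemize}
With these adjustments, the maximal-independent-set covering argument is done in $G-S'$ on $X\setminus S'$. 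Balls then contain only non-parameters, and the same computation goes through with $\delta_k=\varepsilon/(2m)^{k+1}$, say.

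\textbf{Stage~2.} For deletions no merging is needed, because the metric over $F$ is already that of the single graph $G-F$; this is why the Ramsey stage disappears. For flips we repeat the Ramsey argument verbatim with target $m'=\Rcal(m,\Xi(k))$. At $\rho=\infty$, "distance greater than $\infty$" means distinct components, and the colouring by a separating flip still works since $\dist_F$ is a maximum over $\flip(G/F)$. Taking $k$ to be the bound on the rank at radius $3\rho$ then gives $\mathsf{op}$-packability at radius $\rho$ with sets of size $k$.
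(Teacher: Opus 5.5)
Your proposal is correct and is essentially the paper's proof. You rerun Stage~1 on the same three properties (the rank drop, at most $m/\varepsilon$ disjoint heavy balls, and a meeting ball lying inside the $3\rho$-ball), use Ramsey with $\Xi(k)$ colours for flips and no second stage for deletions, and for deletions make the same two fixes: shrink $\delta$ so the base case is vacuous, and use $\{s\}$ while covering $X\setminus S'$ in $G-S'$.

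The one slip is in the bookkeeping. The extra factor in $\delta$ has to involve the rank bound $K$. The paper takes $\delta_0=\varepsilon/(Km)$, so that $\varepsilon-K\delta\ge\varepsilon(1-\frac1m)$ and the fewer than $m$ covering balls still give $\varepsilon/m$ without halving. Your sample $\delta_k=\varepsilon/(2m)^{k+1}$ does not ensure $K\delta\le\varepsilon'/2$ at the deep levels when $K$ is large compared with $m$.
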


\begin{proof}
The proof of \Cref{lem:winning-packing} goes through with the balls of the flip-metric replaced by
the corresponding balls. Its first stage uses only three properties of these: that playing the
vertex realising the minimum in the rank recursion drops the rank of every ball of radius $3\rho$
centred in the arena; that at most $m/\varepsilon$ pairwise disjoint balls of weight at least
$\varepsilon/m$ fit; and that a ball of radius $\rho$ meeting another lies inside the ball of radius
$3\rho$ around its centre. The first is the definition of the rank, the second is insensitive to the
operation and the radius, and the third is the triangle inequality; at $\rho=\infty$ it is the
statement that two intersecting components coincide. For $\mathsf{op}=\flip$ the second stage
applies verbatim, with $\Xi(k)$ colours in the Ramsey argument.

For $\mathsf{op}=\del$ the second stage is not needed, but two adjustments are, because a deleted
vertex is not counted whereas an isolated one is. First, the base case of the induction, in which
the arena lies inside the parameters played, must be vacuous: with $K$ bounding the rank, at most $K$
parameters are played along any branch, so taking $\delta_0(\varepsilon,m):=\varepsilon/(Km)$ in
place of $\varepsilon/m$, and hence $\delta_k(\varepsilon,m)=\varepsilon/(Km^{k+1})$, makes any set
of $\delta_k$-light vertices contained in the parameters too light to be $\varepsilon$-heavy.
Second, in the case where the parameters $S'$ played so far already pull $m$ elements of
$X\setminus S'$ apart in $G-S'$, the member of $\Fcal$ to use is $\{s\}$ itself, so that no further
deletion can remove those elements; and in the other case the fewer than $m$ balls of the
$S'$-metric covering $X\setminus S'$ carry weight at least $\varepsilon-K\delta\ge\varepsilon(1-\frac1m)$,
as $X\cap S'$ consists of at most $K$ light vertices, so one of them still carries at least
$\frac\varepsilon m$. With these changes the induction delivers $\ind^\rho_{G-F}(X)\ge m$ directly.
\end{proof}

\begin{proposition}\label{prop:app-allm}
For a class $\C$ of graphs:
\begin{enumerate}
\item $\C$ is $\del$-packable at every finite radius iff $\C$ is nowhere dense;
\item $\C$ is $\flip$-packable at every finite radius iff $\C$ is monadically stable;
\item $\C$ is $\del$-packable at radius $\infty$ iff $\C$ has bounded treedepth;
\item\label{it:4} $\C$ is $\flip$-packable at radius $\infty$ iff $\C$ has bounded shrubdepth.
\end{enumerate}
\end{proposition}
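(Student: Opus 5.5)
Each item follows the same cycle: the forward implication is \Cref{lem:app-packing-flat} composed with \Cref{fact:app-flat}, and the converse is \Cref{lem:app-rank-packing} composed with \Cref{fact:app-depth}. For the forward direction, $\mathsf{op}$-packability at radius $\rho$ gives $\mathsf{op}$-flatness at radius $\rho$ with the same $k$, by taking $\varepsilon=1$ and the indicator weighting. For items 1 and 2, packability at every finite radius then gives flatness at every finite radius, and \Cref{fact:app-flat}(1),(2) yields nowhere density, respectively monadic stability. For items 3 and 4 we use \Cref{fact:app-flat}(3),(4) at radius $\infty$, which gives bounded treedepth, respectively bounded shrubdepth.

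For the converse in items 1 and 2, fix a finite $r$. If $\C$ is nowhere dense, \Cref{fact:app-depth}(1) bounds $\rk^\del_{3r}$ on $\C$. If $\C$ is monadically stable, \Cref{fact:app-depth}(2) (equivalently \Cref{thm:finfliprank}) bounds $\rk_{3r}$. In either case \Cref{lem:app-rank-packing} at radius $r$ gives $\del$-, respectively $\flip$-packability at radius $r$, and since $r$ was arbitrary this holds at every finite radius. For item 2 one could instead simply cite \Cref{thm:main}.

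For item 3, bounded treedepth bounds $\rk^\del_\infty$ by \Cref{fact:app-depth}(3), because $\rk^\del_\infty(G)\le\td(G)+1$. \Cref{lem:app-rank-packing} with $\rho=\infty$ (reading $3\rho=\infty$) then gives $\del$-packability at radius $\infty$. For item 4, bounded shrubdepth bounds $\rk_\infty$ by \Cref{fact:app-depth}(4), and \Cref{lem:app-rank-packing} again gives $\flip$-packability at radius $\infty$. One detail must be checked: the $k$ delivered by \Cref{lem:app-rank-packing} has to depend only on the radius, as \Cref{def:app-packing} requires. It is the rank bound, so this holds.

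Almost all of the difficulty sits in the imported facts. The only genuinely nontrivial input not given as a citation is \Cref{fact:app-depth}(4), the folklore statement that bounded shrubdepth bounds $\rk_\infty$, which the paper sketches via flipdepth and \Cref{fact:conversion}(1). I expect this to be the main point needing care. When Flipper plays the set $S_0$ realising the top-level $2$-flip, the arena must become confined to a single component of the disjoint union. After that, the remaining game inside the component must be played over the accumulated parameters, and each subsequent level's $2$-flip must be simulated relative to the partition already refined by earlier parameters. I would handle this by noting that the $S$-classes refine as $S$ grows, so $\flip(G/F)\subseteq\flip(G/S)$. Every flip available at a lower level therefore remains available, and the components only refine. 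The total number of parameters is then bounded by the depth times the bound $\Ocal(d\cdot 4)$ from \Cref{fact:conversion}(1), with $d$ the VC-dimension, which is bounded on classes of bounded shrubdepth.
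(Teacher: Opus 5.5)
Your proof is correct and follows the paper's own argument: the forward directions go through \Cref{lem:app-packing-flat} and \Cref{fact:app-flat}, and the backward directions through \Cref{lem:app-rank-packing} fed by the items of \Cref{fact:app-depth}, with item (2) also obtainable directly from \Cref{thm:main}, as the paper does. The only inaccuracy is in your optional aside on \Cref{fact:app-depth}(4), which the proposition merely cites: the count ``depth times $\Ocal(4d)$'' does not follow, because from the second level on the flip of the whole graph $G$ to which \Cref{fact:conversion}(1) must be applied is not a $2$-flip --- it has to undo the $2$-flips of all earlier levels along the chosen nested components, so that the next pieces are not reconnected elsewhere in $G$ --- and so its partition grows with the level (the paper's ``partition refined at each level''), although the total number of parameters is still bounded in terms of the flipdepth and $d$.
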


\begin{proof}
Item (2) is \Cref{thm:main}. In each of the others, the forward direction is
\Cref{lem:app-packing-flat} followed by the corresponding item of \Cref{fact:app-flat}. The backward
direction is \Cref{lem:app-rank-packing}, fed by \Cref{fact:app-depth}(1) in case (1), by
\Cref{fact:app-depth}(3) in case (3), and by \Cref{fact:app-depth}(4) in case (4).
\end{proof}

\subsection{The $m=2$ case}

\begin{lemma}\label{lem:app-halving}
\begin{enumerate}
\item If $\C$ has treewidth at most $w$ then $\C$ is $2$-$\del$-packable at radius $\infty$, with
$k=w+1$.
\item If $\C$ has rankwidth at most $w$ then $\C$ is $2$-$\flip$-packable at radius $\infty$, with
$k$ bounded in terms of $w$ and the VC-dimension.
\end{enumerate}
\end{lemma}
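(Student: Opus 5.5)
The plan is the ``list of centroids'' construction of the introduction: iterated halving with the separators listed in advance. Treat both items uniformly. Fix a weighting $\w$ of $G$, normalised, and fix $\varepsilon$. For every weighting $\w'$, \Cref{fact:app-sep} supplies a small object $\sigma(\w')$. In item (1) this is a bag $B$ with $|B|\le w+1$. In item (2) it is a $\Pcal$-flip with $|\Pcal|$ bounded, which \Cref{fact:conversion}(1) converts into a parameter set $S$ of size $\Ocal(d|\Pcal|^2)$, with $d$ the VC-dimension, together with an $S$-flip whose components refine those of the $\Pcal$-flip. After deleting $\sigma(\w')$, respectively applying its flip, every component carries at most a $\frac34$-fraction of $\w'$; for bags it is even $\frac12$. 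The size $k$ of $\sigma(\w')$ depends only on $w$ (and $d$), never on $\varepsilon$.

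I build a rooted tree of regions, each region $R\subseteq V(G)$ carrying the weighting $\w_R:=\w\cdot\mathbf 1_R$. The root is $V(G)$. At a region $R$ with $\w(R)\ge\varepsilon$, I record $F_R:=\sigma(\w_R)$ in the family, and the children of $R$ are the components of $G-F_R$ (respectively of the chosen $F_R$-flip, intersected with $R$) of weight at least $\varepsilon$. I do not recurse below regions of weight less than $\varepsilon$. Each region has at most $1/\varepsilon$ children, since children are disjoint and each weighs at least $\varepsilon$. The depth is at most $\log_{4/3}(1/\varepsilon)+1$, since weights shrink by a factor $\frac34$ per level. So $t$ is bounded in terms of $\varepsilon$ alone, and every member has size at most $k$.

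To serve an $\varepsilon$-heavy $X$ of $\delta$-light vertices, I follow $X$ down the tree, maintaining the invariant $X\subseteq R$. At $R$, either $X$ meets two distinct components after $F_R$, and then $F_R$ serves $X$, or $X$ lies inside a single component $C$. In the second case $\w(C)\ge\w(X)\ge\varepsilon$, so $C$ is a child and the descent continues. The descent cannot go on forever, so some level serves $X$.

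The main obstacle is that the separating object is computed for the region $R$ alone, while distances must be measured in the whole graph, that is in $G-F_R$ or in the $F_R$-flip of $G$ itself. For bags, a component of $(G-F_R)[R]$ need not be a component of $G-F_R$, since paths may leave $R$. I fix this by taking the regions to be components of $G-\bigcup$ of the ancestors' bags, and letting the member at $R$ be the union of the bags along the path. That would make $k$ grow with the depth, so I avoid it as follows. I choose $\sigma$ on the \emph{whole} graph $G$ for the weighting $\w_R$, and define the children to be the components $C$ of $G-F_R$ with $\w(C\cap R)\ge\varepsilon$, setting the new region to $C\cap R$. Balance is then measured in $G-F_R$ against $\w_R$, so weights still shrink by $\frac34$ and the invariant is preserved. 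Serving only needs that two elements of $X$ lie in distinct components of $G-F_R$ itself. For flips the same trick applies: apply \Cref{fact:app-sep}(2) to $G$ with $\w_R$. This requires $\w_R$ to be $\frac14$-balanced, which holds once $\delta\le\varepsilon/4$: I first discard the $\delta$-heavy vertices by zeroing them, as in \Cref{lem:dependent-packing}. A residual issue remains when few light vertices of $R$ carry the weight; this is handled by requiring $\delta\le\varepsilon^2/4$, so that every region of weight at least $\varepsilon$ is $\frac14$-balanced.
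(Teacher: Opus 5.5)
Your overall construction matches the paper's, including computing the separator for a region $R$ on all of $G$ against the restricted weighting. Item (2) goes through as you describe: a flip deletes no vertex, so $X$ either meets two components of the $F_R$-flip or lies inside one. Item (1), however, has a genuine gap. In $2$-$\del$-packability only the elements of $X\setminus F$ count, so your dichotomy is not exhaustive. It can happen that $X\cap F_R\neq\emptyset$ while $X\setminus F_R$ lies in a single component $C$ of $G-F_R$. Then the invariant $X\subseteq R$ is lost, and the surviving part $X\setminus F_R$ may weigh less than $\varepsilon$. So $C\cap R$, which is only guaranteed to contain $X\setminus F_R$, need not reach your threshold $\varepsilon$ for children, and the descent stops without serving $X$.

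This happens for every choice of $\delta$. Let $G$ be the tree formed by a vertex $b$ of weight $\eta<\delta$ carrying four long pendant paths with uniformly spread weight: three of weight $(1-\varepsilon)/3$, and one, $P$, of weight $\varepsilon-\eta$.
\begin{itemize}
\item In the tree decomposition by edges, the bag $\{b,r_1\}$, with $r_1$ the neighbour of $b$ on a heavy path, leaves components of weight at most $\frac12$. So it is a legitimate choice at the root.
\item The set $X:=P\cup\{b\}$ is $\varepsilon$-heavy with $\delta$-light vertices, and $X\setminus\{b,r_1\}=P$ is a single component of weight below $\varepsilon$, hence not a child.
\item Every other bag of your family is an edge of a heavy path or of the form $\{b,\cdot\}$ with its other end on a heavy path. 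Deleting it leaves $P$ connected.
\end{itemize}
So no member of your family serves $X$.

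The repair is the paper's bookkeeping. Work with the weighting $\w'$ in which the heavy vertices are zeroed, and descend with $X_{i+1}:=X_i\setminus F_{D_i}$, which loses at most $k\delta$ per level. Cap the depth at $L:=\lceil\log_{4/3}(4/\varepsilon)\rceil$, lower the threshold for children to $\varepsilon/4$, and take $\delta:=\varepsilon/(4kL)$. Then $\w'(X_i)\ge\varepsilon-ik\delta\ge\frac34\varepsilon$ for all $i\le L$, so every $D_i$ with $i<L$ is an internal node. On the other hand $\w'(D_L)\le(\frac34)^L\le\varepsilon/4$, a contradiction. In particular, $\delta$ must be small compared with $k$ times the depth of the tree, and nothing in your choice of parameters ensures this for bags.
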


\begin{proof}
Throughout, write $G/F$ for $G-F$ in case (1), and for the flip of $G$ over $F$ supplied below in
case (2); in both cases the components of $G/F$ are what the recursion descends into, and $F$ is a
set of at most $k$ vertices. Fix $\varepsilon>0$, put $L:=\lceil\log_{4/3}(4/\varepsilon)\rceil$ and
$\delta:=\varepsilon/(4kL)$, with $k$ as determined below; note $\delta\le\varepsilon/16$. Let
$\w$ be a weighting of $G$, and let $\w'$ be its restriction to the $\delta$-light vertices, that
is, $\w'$ agrees with $\w$ on them and is $0$ on the $\delta$-heavy ones. Every set to be served
consists of $\delta$-light vertices, so its weight is the same under $\w$ and $\w'$; and if
$\w'(V(G))<\varepsilon$ there is nothing to serve. So assume $\w'(V(G))\ge\varepsilon$, and work with
$\w'$, whose atoms are all less than $\delta$.

Call $F$ a \emph{halving step} for $D\subseteq V(G)$ with $\w'(D)>0$ if
\[ \w'(C\cap D)\le\tfrac34\,\w'(D)\qquad\text{for every component }C\text{ of }G/F. \]
Such a step exists in both cases, by applying \Cref{fact:app-sep} to the restriction of $\w'$ to
$D$. In case (1), fix once and for all a tree decomposition of $G$ of width $w$; some bag $F$, of
size at most $w+1$, leaves every component of $G-F$ with at most half of $\w'(D)$. Note that the
components in question are those of the whole graph minus $F$, not of the induced subgraph on $D$;
this is what allows a single fixed decomposition to serve at every level of the recursion. In
case (2), \Cref{fact:app-sep}(2) requires the restriction of $\w'$ to $D$ to be $\frac14$-balanced,
which holds for the sets $D$ below, as their weight is at least $\varepsilon/4$ while every atom is
less than $\delta\le\varepsilon/16$; it gives a partition $\Pcal$ with $|\Pcal|\le2^w+2^{2^w}$ and a
$\Pcal$-flip in which every component meets $D$ in at most $\frac34\w'(D)$, and
\Cref{fact:conversion}(1) gives a set $F$ of size $\Ocal(d|\Pcal|^2)$ and an $F$-flip whose
components refine those of the $\Pcal$-flip, so the same bound holds for it. In both cases
$|F|\le k$ for a $k$ depending on $w$ and, in case (2), on the VC-dimension $d$, but not on
$\varepsilon$.

Build a rooted tree of sets: the root is $V(G)$; a node $D$ at depth less than $L$ with
$\w'(D)\ge\varepsilon/4$ is assigned a halving step $F_D$ as above, and its children are the sets
$D\cap C$ of weight at least $\varepsilon/4$, for $C$ a component of $G/F_D$. Nodes of weight below
$\varepsilon/4$, and all nodes at depth $L$, are leaves. The children of a node are disjoint, so
there are at most $4/\varepsilon$ of them, and the tree has at most $(4/\varepsilon)^{L+1}$ nodes.
Let $\Fcal:=\{F_D:D\text{ an internal node}\}$, a family of at most
$t(\varepsilon):=(4/\varepsilon)^{L+1}$ sets, each of size at most $k$.

We claim $\Fcal$ witnesses $2$-packability at radius $\infty$. Let $X$ be $\varepsilon$-heavy with
$\delta$-light vertices, so that $\w'(X)=\w(X)\ge\varepsilon$, and suppose for contradiction that
$\ind^\infty_{\mathsf{op},F}(X)\le1$ for every $F\in\Fcal$; that is, $X\setminus F$ in case (1),
respectively $X$ in case (2), lies within a single component of $G/F$. Set $D_0:=V(G)$ and
$X_0:=X$, and suppose $X_i\subseteq D_i$ with $D_i$ an internal node. Put
$X_{i+1}:=X_i\setminus F_{D_i}$; since $|F_{D_i}|\le k$ and the vertices of $X$ are $\delta$-light,
$\w'(X_{i+1})\ge\w'(X_i)-k\delta$. By assumption $X_{i+1}$ lies in one component $C$ of
$G/F_{D_i}$, so $X_{i+1}\subseteq D_{i+1}:=D_i\cap C$, and $\w'(D_{i+1})\le\frac34\w'(D_i)$ by the
choice of $F_{D_i}$.

Along this path, $\w'(X_i)\ge\varepsilon-ik\delta\ge\frac34\varepsilon$ for every $i\le L$, by the
choice of $\delta$. Hence $\w'(D_i)\ge\w'(X_i)\ge\frac34\varepsilon>\varepsilon/4$, so each $D_i$
with $i<L$ is an internal node and the descent continues. But
$\w'(D_L)\le(\frac34)^L\le\varepsilon/4$ by the choice of $L$, contradicting
$\w'(D_L)\ge\frac34\varepsilon$. So some $F\in\Fcal$ satisfies $\ind^\infty_{\mathsf{op},F}(X)\ge2$.
\end{proof}

\begin{proposition}\label{prop:app-m2}
For a class $\C$ of graphs:
\begin{enumerate}
\item $\C$ is $2$-$\del$-packable at every finite radius iff $\C$ is nowhere dense;
\item $\C$ is $2$-$\flip$-packable at every finite radius iff $\C$ is monadically dependent;
\item $\C$ is $2$-$\del$-packable at radius $\infty$ iff $\C$ has bounded treewidth;
\item $\C$ is $2$-$\flip$-packable at radius $\infty$ iff $\C$ has bounded cliquewidth.
\end{enumerate}
\end{proposition}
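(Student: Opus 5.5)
Each of the four items is a cycle whose pieces are already in place, so the plan is to assemble them row by row. For the forward directions, in every item I first apply \Cref{lem:app-m2-break}: $2$-$\mathsf{op}$-packability at every finite radius (respectively at radius $\infty$) gives $\mathsf{op}$-breakability at every finite radius, using radius $4\rho$ for the packing to obtain breakability at $\rho$ (respectively at radius $\infty$). Then \Cref{fact:app-break} converts this into the structural notion: nowhere dense for item (1), monadically dependent for (2), bounded treewidth for (3), and bounded cliquewidth for (4). Item (2) is also exactly \Cref{thm:2packing-equiv}, so it can simply be cited.

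For the backward directions I treat the rows separately.

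\begin{itemize}
\item[(1)] A nowhere dense class has bounded $\rk^\del_{3r}$ for every $r$ by \Cref{fact:app-depth}(1). Hence it is $\del$-packable at every finite radius by \Cref{lem:app-rank-packing}, and fixing $m=2$ gives $2$-$\del$-packability. The only check is that $k$ still depends on the radius alone, which it does because it is the rank bound.
\item[(2)] This is \Cref{lem:dependent-packing}.
\item[(3)] Bounded treewidth gives $2$-$\del$-packability at radius $\infty$ directly by \Cref{lem:app-halving}(1).
\item[(4)] Bounded cliquewidth is equivalent to bounded rankwidth, a standard fact with $\mathrm{rw}\le\mathrm{cw}\le 2^{\mathrm{rw}+1}-1$. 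I then want to apply \Cref{lem:app-halving}(2).
\end{itemize}

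The one point needing care is in item (4). \Cref{lem:app-halving}(2) bounds $k$ in terms of the rankwidth and the VC-dimension, so I must show that a class of bounded cliquewidth has bounded VC-dimension. I would argue this as follows. Such a class is monadically dependent: it is flip-breakable at radius $\infty$ by \Cref{fact:app-break}(4), and in particular flip-breakable at every finite radius, since separation into distinct components implies distance greater than any $r$. By \Cref{fact:app-break}(2) (or \Cref{thm:flipbreak}) it is therefore monadically dependent. Monadically dependent classes have bounded VC-dimension, as recalled in \Cref{sec:prelims}.

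With this in hand, $k$ depends only on the class, and so only on the radius $\infty$, as \Cref{def:app-packing} requires. Each of the four equivalences then follows by combining the two directions.
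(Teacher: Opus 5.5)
Your proposal is correct and follows the paper's proof essentially verbatim. The forward directions go through \Cref{lem:app-m2-break} and \Cref{fact:app-break}, and the backward directions go through \Cref{lem:app-rank-packing} (which is how \Cref{prop:app-allm}(1) is proved), \Cref{lem:dependent-packing} (the backward half of \Cref{thm:2packing-equiv}), and the two items of \Cref{lem:app-halving}.

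Your only addition is a derivation of bounded VC-dimension for bounded-cliquewidth classes, which the paper simply cites as standard. It is valid relative to the stated facts. However, the definable-flip form of \Cref{fact:app-break}(4) is itself obtained through \Cref{fact:conversion}(1), which presupposes bounded VC-dimension. It is therefore cleaner to cite the standard fact directly, or to argue from the partition-flip form of breakability, which already gives monadic dependence.
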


\begin{proof}
Item (2) is \Cref{thm:2packing-equiv}. In each of the others the forward direction is
\Cref{lem:app-m2-break} followed by the corresponding item of \Cref{fact:app-break}. For the backward
directions: (1) follows from \Cref{prop:app-allm}(1), $2$-packability being an instance of the full
property; (3) and (4) are the two items of \Cref{lem:app-halving}, using for (4) that bounded
cliquewidth and bounded rankwidth coincide \cite[Fact~18.5]{flipbreak-full} and that classes of
bounded cliquewidth have bounded VC-dimension.
\end{proof}

\Cref{prop:app-allm,prop:app-m2} together prove \Cref{thm:everything}.

\section{Proofs for \Cref{sec:algorithmic}}\label{app:algorithmic}

\subsection{The algorithmic Flipper game}

Recall from \Cref{def:rank} that $\rk_r(G)$ is the value of a game in which Flipper plays single
vertices, accumulating a parameter set $S$, and Localiser answers with a vertex $c$ of the current
arena, which then shrinks to its intersection with $B^r_S(c)$; Flipper wins once nothing but
parameters remains. This is the \emph{confining Flipper game with quantifier-free definable
separation} of \cite[Section~3.1]{flippergame-full}: there the arena after round $k$ is
$A_k=A_{k-1}-\{w:w\forkindep^r_{S_{k-1}}c_k\}$, and their relation $w\forkindep^r_Sc$ holds
precisely when some $S$-flip puts $w$ and $c$ at distance greater than $r$, that is, precisely when
$\dist_S(w,c)>r$ in our notation. So their arena is $A_{k-1}\cap B^r_{S_{k-1}}(c_k)$ and $\rk_r(G)$
is the number of rounds Flipper needs in their game.

The algorithmic result of \cite{flippergame-full} concerns the original form of the game, the
\emph{shrinking Flipper game with atomic flips}: there Localiser restricts the current arena to a
ball of radius $r$ in it, and Flipper responds with an atomic flip, that is, with a pair of vertex
sets whose adjacency is complemented. We write $G\oplus F$ for the graph obtained from $G$ by
performing the atomic flip $F$, and $G\oplus F_1\oplus\dots\oplus F_i$ for the graph after a
sequence of them. In \cite[Remark~11.1]{flippergame-full} the game is relaxed to the
\emph{Induced-Subgraph-Flipper game}, in which Localiser may localise to an arbitrary induced
subgraph of such a ball rather than to the whole ball, and it is for that variant that the
algorithmic result is proved. The notation below is exactly that of \cite{flippergame-full}.

\begin{fact}[{\cite[Theorem~11.2]{flippergame-full}}]\label{fact:algo-flipper}
Let $\C$ be a monadically stable class of graphs and $r\in\N$. There are $\ell\in\N$ and a
radius-$r$ Flipper strategy $\mathrm{flip}^\star$ for the Induced-Subgraph-Flipper game that is
$\ell$-winning on $\C$ and whose moves on an $n$-vertex graph $G\in\C$ can be computed in time
$\Ocal_{\C,r}(n^2)$, where $n=|V(G)|$ refers to the original graph rather than to the current
arena.
\end{fact}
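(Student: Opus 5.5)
This is an imported result, and I would prove it along the lines of \cite{flippergame-full}. The plan is to give Flipper a \emph{definable} strategy first and only then worry about running time. Flipper keeps a bounded history of the centres $c_1,\dots,c_i$ chosen by Localiser so far. In round $i$ his atomic flip is read off from the neighbourhood types of arena vertices over a small set of previously chosen centres. Concretely, he complements the adjacency between the class of vertices adjacent to $c_i$ inside the current ball and the rest of the ball, whenever this lowers the density of $c_i$'s neighbourhood. The guiding intuition is the one behind flip-flatness (\Cref{thm:flipflat}): in a monadically stable class the neighbourhood of a centre is, up to a flip, either small or almost everything. One flip per round therefore cuts the centre off, and the radius-$r$ ball around it in the new graph is strictly ``more local''.

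Next I would bound the length of the game by some $\ell=\ell(\C,r)$. Suppose Localiser survives $\ell$ rounds against this strategy. The sequence of centres, together with the vertices that witnessed why each flip failed to separate, should then yield a long \emph{ladder}. In each round the current ball still contains a vertex at distance at most $r$ from the new centre in the flipped graph, and the flips applied so far are definable from earlier centres. A fixed first-order formula of quantifier rank depending on $r$ and $\ell$ therefore orders these witnesses like a half-graph, with colours marking the parameters. Ramsey's theorem (\Cref{lem:ramsey}) would extract a uniform pattern, and this gives a transduction from $\C$ onto arbitrarily large half-graphs, contradicting monadic stability. Because the relaxed Induced-Subgraph game only shrinks the arena further, the same argument covers it: induced subgraphs of balls only lose witnesses, and the flip is computed in the original graph with $n$ its vertex count, as the statement requires.

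Then the running time. Each Flipper move needs only the neighbourhood classes of the current centre and a bounded number of earlier ones relative to the arena, plus a breadth-first search to determine the new ball in $G\oplus F_1\oplus\dots\oplus F_i$. Each of these takes $\Ocal(n^2)$ on the adjacency matrix. The number of previous flips is at most $\ell$, so representing the flipped graph implicitly costs only a constant factor.

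The main obstacle is the second step, turning a long play into a long half-graph. The flips depend on Localiser's choices, so the witnesses are not simply given by one formula with a fixed number of parameters. I would first try to prove that after each round the arena is definable by a formula of bounded quantifier rank over the centres played so far, and then apply the ladder characterisation of stability to this formula. This is exactly the core of \cite[Sections~5--11]{flippergame-full}. Realistically I would import it rather than reprove it.
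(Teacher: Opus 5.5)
The paper gives no proof of this statement; it imports it from \cite[Theorem~11.2]{flippergame-full}, so your closing decision to import it is exactly what the paper does. The self-contained route you sketch before that has genuine gaps, and they are not confined to the step you defer.

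\textbf{The strategy itself is not winning}, even on a class of treedepth $3$. Let $T_n$ consist of a root $z$ with children $p_1,\dots,p_n$, each having $n$ leaves, and let $r\ge4$.
\begin{enumerate}
\item Localiser opens with a leaf below $p_1$. The ball is all of $T_n$, and the centre's neighbourhood is $\{p_1\}$. Flipping it against the rest of the ball isolates the leaves below $p_1$. It also turns $p_1$ into a hub adjacent to all other children and their leaves.
\item Localiser then picks a leaf below $p_2$. Its neighbourhood in the current graph is $\{p_1,p_2\}$, and its radius-$2$ ball is everything not yet isolated. The flip now isolates the leaves below $p_2$ and makes $p_2$ the hub, reproducing the same configuration.
\item Repeating this, Localiser survives $n$ rounds.
\end{enumerate}
Cutting off the current centre removes only its private part of the arena. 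The game ends only when flips separate the arena itself. Flip-flatness concerns large vertex sets, not neighbourhoods, and gives no small-or-almost-everything dichotomy of the kind you invoke.

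\textbf{The termination argument is circular as stated.} A contradiction with monadic stability needs one formula, fixed before $\ell$, that defines arbitrarily long half-graphs. Yours has quantifier rank depending on $\ell$. This is the obstacle you flag yourself, and it is the heart of the theorem.

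\textbf{The transfer to the Induced-Subgraph game does not apply to your strategy.} Reducing it to the ball game by monotonicity of arenas is valid only for strategies whose flips are computed from $G$ and the vertices played, not from the current arena. Your strategy reads its flip off the current ball.

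The algorithmic proof in \cite{flippergame-full} is of that arena-independent kind. It derives Flipper's moves from flip-flatness, computing them from $G$, a fixed order on $V(G)$ and the set of vertices played. The game ends because, after a predicted flip, some set $Y$ is distance-$2r$ independent, so a radius-$r$ arena keeps at most one of its elements. The proof of \Cref{lem:algo-simulate} relies on exactly this independence of the moves from the arena.
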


We shall simulate $\mathrm{flip}^\star$ in the confining game, using \Cref{fact:conversion}(1) to
turn each atomic flip it proposes into a bounded set of vertices, to be played by Flipper in
consecutive rounds. The conversion is a statement about the whole graph: the definable flip it
returns approximates the current flipped graph, so the arena it leaves us with is contained in a
ball of that graph, and not in a ball of the current arena, as Localiser's moves in
\Cref{fact:algo-flipper} are required to be. We therefore need the following lemma, which follows by
analysing the proof of \Cref{fact:algo-flipper}.

\begin{lemma}\label{lem:algo-simulate}
Let $\C$, $r$, $\mathrm{flip}^\star$ and $\ell$ be as in \Cref{fact:algo-flipper}. Then
$\mathrm{flip}^\star$ remains $\ell$-winning against any Localiser whose response, after Flipper's
flips $F_1,\dots,F_i$, is a set contained in some ball of radius $r$ of
$G\oplus F_1\oplus\dots\oplus F_i$.
\end{lemma}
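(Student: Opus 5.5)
The plan is to open the proof of \Cref{fact:algo-flipper} in \cite{flippergame-full} and isolate the single place where Localiser's move enters the analysis. In that proof, Flipper's strategy $\mathrm{flip}^\star$ is defined, and its winning property established, through a potential argument on the current graph $G_i:=G\oplus F_1\oplus\dots\oplus F_i$ together with the current arena $A_i$. The argument bounds the length of the play by showing that a certain structure cannot keep growing in a monadically stable class, and the structure in question is built from the sequence of arenas and flips. I expect the analysis to use only two properties of Localiser's answer $A_{i+1}$. First, $A_{i+1}\subseteq A_i$. Second, $A_{i+1}$ has small radius in $G_i$, that is, it lies in some $B^r_{G_i}(c)$. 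The first property is what Localiser in the Induced-Subgraph-Flipper game already supplies beyond shrinking. The second holds a fortiori for balls of the current arena, because distances in $G_i[A_i]$ dominate distances in $G_i$.

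So the first step is to check that every inequality in the winning proof that mentions distances does so through the weak radius bound in the whole graph $G_i$, and never through distances inside $G_i[A_i]$. I would verify this by following how the radius-$r$ assumption is used. In \cite{flippergame-full} it feeds into the construction of the obstructions, namely long ladders or half-graph-like configurations witnessing non-stability. These arise from vertices lying at distance at most $r$ in the flipped graph, and the short paths they use may freely leave the arena. If somewhere the argument instead relies on paths lying inside the arena, I would replace the arena by its $r$-neighbourhood hull in $G_i$ and rerun that step with the radius enlarged to a constant multiple of $r$. This change is absorbed by running $\mathrm{flip}^\star$ at a larger radius, at the cost of changing $\ell$.

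The second step is bookkeeping. Localiser in the statement may answer with any subset of a ball of $G_i$, not necessarily a subset of the current arena. Since Flipper's winning condition concerns only the arena, I would intersect the answer with $A_i$ without loss, so that the response becomes a legal shrinking, but not necessarily ball-shaped, arena. The bounded length $\ell$ is then inherited from the potential argument, and the running time of $\mathrm{flip}^\star$ is untouched, since its moves are computed from $G$, the flips, and the arena alone.

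The main obstacle I anticipate is the first step. Concretely, we have to confirm that the invariants maintained by $\mathrm{flip}^\star$, in particular any use of the locality or connectivity of $G_i[A_i]$ when Flipper selects the next flip, do not depend on the arena being connected within itself. If they do, the fallback is the enlarged-radius hull argument above. Failing that, I would strengthen the strategy by playing it on the hull $B^r_{G_i}(c)$ while recording the true arena separately.
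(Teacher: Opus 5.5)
Your proposal follows the paper's route: the winning analysis of $\mathrm{flip}^\star$ only needs Localiser's new arena to lie in a radius-$r$ ball of the whole flipped graph $G\oplus F_1\oplus\dots\oplus F_i$. The original Localiser supplies this because balls of the arena sit inside balls of the whole graph, and Flipper's moves do not depend on how Localiser chooses within the ball. The paper pinpoints the single place this is used, namely \cite[Claim~11.3]{flippergame-full}: after the move pair defined by $\mathrm{Predict}_{2r}(G,\preccurlyeq,Z)$ the set $Y$ is distance-$2r$ independent in the flipped graph, so $y_6$ and $y_7$ cannot both survive in an $r$-ball. That settles your first step directly and makes the hull and enlarged-radius fallbacks unnecessary.
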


\begin{proof}
The strategy $\mathrm{flip}^\star$ is analysed through \cite[Claim~11.3]{flippergame-full}, whose
argument uses only the following property of Localiser's move: in the era in which Flipper plays the
move pair defined by $F_Z:=\mathrm{Predict}_{2r}(G,\preccurlyeq,Z)$, the resulting arena is
contained in a ball of radius $r$ of the current flipped graph, whence $y_6$ and $y_7$ cannot both
survive, as $Y$ is distance-$2r$ independent there. There that containment is deduced from the move
pairs keeping the arena an induced subgraph of $G$; here it is supplied by \Cref{fact:conversion}(1).
The bound $\ell$ and the running time are unaffected, since Flipper's moves are computed from $G$,
the order $\preccurlyeq$ and the set $X$ of vertices played, none of which depends on how Localiser
chooses within the ball.
\end{proof}

\begin{proof}[Proof of \Cref{thm:rank-algo}]
Let $d$ bound the VC-dimension of the graphs in $\C$, which is finite as $\C$ is monadically
stable. Invoke \Cref{fact:algo-flipper} at radius $R:=5r$, obtaining $\ell$ and
$\mathrm{flip}^\star$, and simulate $\mathrm{flip}^\star$ in the confining game.

Suppose $\mathrm{flip}^\star$ has proposed atomic flips $F_1,\dots,F_i$, and let
$G_i:=G\oplus F_1\oplus\dots\oplus F_i$. Each atomic flip is determined by a pair of vertex sets, so
$G_i$ is a flip of $G$ over a partition $\Pcal_i$ of $V(G)$ into at most $4^i$ parts, and
\Cref{fact:conversion}(1) provides $T_i\subseteq V(G)$ with $|T_i|\le\Ocal(d\cdot16^i)$ and
$H_i\in\flip(G/T_i)$ with $B^\rho_{H_i}(v)\subseteq B^{5\rho}_{G_i}(v)$ for all $v$ and $\rho$.
Flipper plays the elements of $T_i$ one at a time. After these rounds the arena is contained in
$B^r_S(c)$ for Localiser's last move $c$ and the current parameter set $S\supseteq T_i$, and
\[ B^r_S(c)\;\subseteq\;B^r_{T_i}(c)\;\subseteq\;B^r_{H_i}(c)\;\subseteq\;B^{5r}_{G_i}(c), \]
the second inclusion because $H_i\in\flip(G/T_i)$ and the flip-metric is the maximum over
$\flip(G/T_i)$. So the arena is contained in a ball of radius $R$ of $G_i$, and
\Cref{lem:algo-simulate} applies: the simulated play is one Flipper wins within $\ell$ moves of
$\mathrm{flip}^\star$, hence within $k:=\sum_{i\le\ell}|T_i|=\Ocal(d\cdot16^\ell)$ rounds of the
confining game. Each move of $\mathrm{flip}^\star$ costs time $\Ocal_{\C,R}(n^2)$, followed by the
conversion of \Cref{fact:conversion}(1), which inspects each pair of vertices once.

\end{proof}

\subsection{Computing the flip-net}

\begin{proof}[Proof of \Cref{thm:algo}]
Let $\w$ be the given weighting, and follow the proof of \Cref{lem:winning-packing} with
$m':=\Rcal(m,\Xi(k))$ in place of $m$, as in its Stage~2; weights are compared against fractions of
$\w(V(G))$, which is computed once. The data at a node of the recursion is a pair $(A',S')$ with
$|S'|\le k$, where $k$ is the bound of \Cref{thm:rank-algo} at radius $3r$, together with the
current tolerance $\varepsilon'=\varepsilon/m'^{\,j}$ at depth $j$; we account for the cost at one
node.

\emph{Flipper's move.} The vertex realising the minimum in the rank recursion is supplied by
\Cref{thm:rank-algo} at radius $3r$ in time $\Ocal_{\C,r}(n^2)$.

\emph{The flip-metric.} Enumerate $\flip(G/S')$, of size at most $\Xi(k)$, each member being
determined by a set of pairs of $S'$-classes and computable in time $\Ocal(n^2)$. Compute the balls
of radius $3r$ around every vertex in each by breadth-first search, in time $\Ocal(n^3)$ per flip;
intersecting over the family gives $B^{3r}_{S'}(v)$ for every $v$, and $B^r_{S'}(v)$ likewise. This
costs $\Ocal(\Xi(k)\cdot n^3)$ and dominates.

\emph{The packing.} Compute $\w(B^r_{S'}(v))$ for every $v\in A'$, each a sum over at most $n$
vertices, in $\Ocal(n^2)$ arithmetic operations. Then scan the vertices of $A'$, maintaining an
inclusion-maximal family of pairwise disjoint balls $B^r_{S'}(v)$, $v\in A'$, of weight at least
$\frac{\varepsilon'}{m'}\w(V(G))$; each test is a comparison and a disjointness check against the
balls already chosen. At most $\lfloor m'/\varepsilon'\rfloor$ balls are chosen, so this costs
$\Ocal(\frac{m'}{\varepsilon'}n^2)$.

The children of the node are the pairs $(A'\cap B^{3r}_{S'}(v_i),S'\cup\{s\})$ for the centres
$v_i$ of the packed balls, so the branching at depth $j$ is at most $m'^{\,j+1}/\varepsilon$; the
rank drops at each step, so the depth is at most $k$. The recursion tree therefore has at most
$\prod_{j<k}m'^{\,j+1}/\varepsilon\le(m'^k/\varepsilon)^k$ nodes, and the total cost is
\[ \Ocal\big((m'^k/\varepsilon)^k\cdot\Xi(k)\cdot n^3\big)=\Ocal_{\C,r,\varepsilon,m}(n^3). \]
The parameter sets collected at the nodes form $\Fcal$, of size at most $t$, each of size at most
$k$; by Stage~1 of \Cref{lem:winning-packing}, every $\varepsilon$-heavy set of $\delta$-light
vertices has $m'$ elements pairwise far apart in the flip-metric over some $F\in\Fcal$. Finally,
Stage~2 applies Ramsey's theorem to these $m'$ elements, a quantity independent of $n$, with the
colouring read off from the flips already computed; this is performed when a heavy set is
presented and does not enter the construction of $\Fcal$.
\end{proof}

\subsection{Computing flip-nets on monadically dependent classes}\label{app:dependent}

Here we prove \Cref{thm:algo-dependent,cor:algo-sep}. The family of
\Cref{lem:dependent-packing} is obtained from the compression argument of
\cite[Lemma~20]{separability}, which starts from a family of size linear in the graph and shrinks
it. At first sight that argument is not effective: the sets it discards at each step are chosen
through a colouring of tuples supplied by Gaifman's locality theorem, and computing that colouring
amounts to computing local first-order types, which is essentially the model-checking problem on
$\C$. We show that the colouring is needed only in the analysis. Choosing the sets to keep greedily,
and using locality to bound the number of greedy steps rather than to make them, turns the
compression into an algorithm whose only external ingredient is flip-breakability in the algorithmic
form of Dreier, M\"ahlmann and Toru\'nczyk.

\begin{fact}[{\cite[Theorems~13.2 and~19.14]{flipbreak-full}} with {\cite[Lemma~4]{separability}}]
\label{fact:algo-break}

Let $\C$ be a monadically dependent class of graphs and $r\in\N$. There are $t\in\N$, an unbounded
function $M_r:\N\to\N$, and an algorithm that, given $G\in\C$ and sets $W_1,W_2\subseteq V(G)$ with
$|W_1|,|W_2|\ge M_r(m)$, computes in time $\Ocal_{\C,r}(n^2)$ sets $A_1\subseteq W_1$,
$A_2\subseteq W_2$ with $|A_1|,|A_2|\ge m$ and $S\subseteq V(G)$ with $|S|\le t$ such that
$\dist_S(A_1,A_2)>2r$.
\end{fact}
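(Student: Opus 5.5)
The plan is to assemble the statement from three ingredients: algorithmic flip-breakability for partition flips from \cite{flipbreak-full}, a reduction from one large set to two prescribed large sets, and the metric conversion of \Cref{fact:conversion}(1) to pass from partition flips to definable flips. Fix $r$ and work at the inflated radius $R:=10r$ throughout the breakability step. The factor $5$ lost in \Cref{fact:conversion}(1) is then absorbed exactly: a partition flip separating at distance more than $10r$ becomes a definable flip separating at distance more than $2r$.

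\emph{Step 1: the two-set form.} \cite[Theorem~13.2]{flipbreak-full} gives, for a single set $W$ with $|W|\ge N_R(m)$, a partition $\Pcal$ with $|\Pcal|\le p$ and a $\Pcal$-flip $H$ of $G$, computed in time $\Ocal_{\C,r}(n^2)$. Here $p$ depends on $\C$ and $r$ only. It also gives $B_1,B_2\subseteq W$ of size at least $m$ at distance more than $R$ in $H$. To treat two prescribed sets $W_1,W_2$, I would rely on \cite[Theorem~19.14]{flipbreak-full}, which I expect to provide exactly this bipartite form algorithmically. If it only covers the one-set form, I would fall back on the following reduction. Apply the one-set result to $W_1'\cup W_2'$, where $W_i'\subseteq W_i$ have equal size, with a much larger target $m^\ast$. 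Then each $B_j$ meets $W_1$ or $W_2$ in at least half its elements. If the halves can be assigned as $B_1\cap W_1$ and $B_2\cap W_2$, or the other way round, we are done. The bad case is that both $B_j$ lie mostly inside one side, say $W_1$. In that case I would iterate inside $W_2'$, using the split of $W_1'$ to record one set far from a large part of everything else, and do a bounded Ramsey-style bookkeeping over the at most $2$ outcomes per round. The function $M_r$ then stacks boundedly many applications of $N_R$ and stays unbounded.

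\emph{Step 2: conversion.} A monadically dependent class has VC-dimension bounded by some $d$. So \Cref{fact:conversion}(1) (that is, \cite[Lemma~4]{separability}) applied to $\Pcal$ and $H$ yields $S$ with $|S|\le\Ocal(dp^2)=:t$ and $H'\in\flip(G/S)$ with $B^{2r}_{H'}(v)\subseteq B^{10r}_H(v)$ for every $v$. Hence any $a\in A_1$ and $b\in A_2$ satisfy $\dist_{H'}(a,b)>2r$, since otherwise $b\in B^{2r}_{H'}(a)\subseteq B^{10r}_H(a)$. As $\dist_S$ is the maximum over $\flip(G/S)$, this gives $\dist_S(A_1,A_2)>2r$. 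Note that $t$ depends on $\C$ and $r$ alone.

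\emph{Running time and the main obstacle.} It remains to check that the conversion of \cite[Lemma~4]{separability} is constructive in time $\Ocal(n^2)$. Its proof picks, for each pair of parts, a bounded set of vertices witnessing neighbourhood types via the VC-dimension bound (a Sauer--Shelah / $\varepsilon$-net style choice). I would implement this as a single pass over all vertex pairs per pair of parts, of which there are boundedly many. I expect the main obstacle to be Step 1 if \cite[Theorem~19.14]{flipbreak-full} does not directly give the two-set version. The bad case, where both broken parts fall into the same $W_i$, is exactly where care is needed to keep both the bound $t$ and the quadratic running time independent of $m$ beyond $M_r$.
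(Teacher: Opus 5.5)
Your plan is essentially the paper's: it too obtains the statement by citing the algorithmic flip-breakability of \cite[Theorems~13.2 and~19.14]{flipbreak-full} and converting the resulting partition flip into a definable one with \Cref{fact:conversion}(1), absorbing the factor $5$ into the radius exactly as your choice $R=10r$ does. Your fallback for the two-set form is more elaborate than necessary: break $W_1$ alone at radius $2R$ into $B_1,B_2$; the triangle inequality then puts every $w\in W_2$ at distance greater than $R$ from all of $B_1$ or from all of $B_2$, so half of $W_2$ can be paired with one of them, and $|W_2|\ge 2m$ suffices.
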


The flip returned by \cite[Theorem~13.2]{flipbreak-full} is a partition flip; it is converted into a
definable one by \Cref{fact:conversion}(1), whose proof is constructive
\cite[Section~3]{separability}, at a constant factor in the radius which we absorb into $r$.

\begin{proof}[Proof of \Cref{thm:algo-dependent}]
By the proof of \Cref{lem:dependent-packing} it suffices to compute, for the weighting $\w'$
obtained from $\w$ by giving weight $0$ to the $\delta$-heavy vertices with
$\delta=\varepsilon^2/2$ --- one pass over $V(G)$ --- a family $\Fcal$ as in
\Cref{fact:sparsifying-family} at radius $r$ and tolerance $\varepsilon/2$: at most $t$ sets of size
at most $k$ with $\w'(B^r_{\Fcal}(v))\le\frac\varepsilon2\w'(V(G))$ for every $v$, where
$B^r_\Fcal(v)=\bigcap_{S\in\Fcal}B^r_S(v)$. We follow the proof of \cite[Lemma~20]{separability},
which produces such an $\Fcal$ by repeatedly shrinking a family with that property, and check that
each step can be carried out in polynomial time. We keep the notation of that proof, writing
$\Fcal$ for the current family, all of whose members have size $k$, and calling $\Fcal$
\emph{sparsifying} if it satisfies the displayed condition.

\emph{Initialisation.} Put $\Fcal:=\{S_v:v\in V(G)\}$ with $S_v\ni v$ an arbitrary $k$-set. This is
sparsifying, as $v$ is isolated in a flip over $S_v$ and $\w'$ is $\frac\varepsilon2$-balanced.

\emph{One compression step.} Let $\Fcal$ be sparsifying with $|\Fcal|\ge k_0$, where $k_0$ is the
threshold of \cite[Lemma~20]{separability}. The proof first applies \cite[Lemma~19]{separability} to
$\Fcal$, obtaining $\Fcal'\subseteq\Fcal$ of unbounded size and a family $\Ycal$ of at most
$\lambda$ sets of size $k$ with
\[ \w'\big(B^{3r'}_{\Ycal}(S\setminus\textstyle\bigcup\Ycal)\big)\le\tfrac\varepsilon2\,\w'(V(G))
   \qquad\text{for every }S\in\Fcal', \tag{$*$} \]
where $r'$ is the locality radius of \cite[Lemma~14]{separability} for formulas of quantifier rank
$r$. The proof of \cite[Lemma~19]{separability} consists of the sunflower lemma, whose proof by
Erd\H{o}s and Rado is a greedy algorithm, followed by $\binom p2k^2$ applications of
flip-breakability to pairs of sets of coordinates, each supplied by \Cref{fact:algo-break} in time
$\Ocal(n^2)$; the balls $B^{3r'}_{\Ycal}$ are computed by enumerating the at most $\Xi(k)$ flips
over each member of $\Ycal$ and running breadth-first search. So $\Fcal'$ and $\Ycal$ are computed
in polynomial time.

The proof then removes from $\Fcal'$ all but a bounded subfamily $\overline{\Xcal}$, chosen by
keeping one member of $\Fcal'$ per value of a colouring $\mathrm{col}_2$ of $k$-tuples supplied by
\cite[Lemma~14]{separability}, and sets
$\Fcal^*:=(\Fcal\setminus\Fcal')\cup\overline{\Xcal}\cup\Ycal$. We choose $\overline{\Xcal}$
differently. Call a vertex $v$ \emph{near} if $v\in B^{2r'}_{\Ycal}(\bigcup\Fcal')$ and \emph{far}
otherwise, which is decided from the balls already computed. Starting from
$\overline{\Xcal}:=\emptyset$, repeat:
\begin{quote}
if there are a far vertex $v$ and a vertex $w$ with $\dist_{\Fcal}(v,w)>r$ but
$\dist_{\Fcal^*}(v,w)\le r$, pick $S\in\Fcal'$ with $\dist_S(v,w)>r$ and add it to
$\overline{\Xcal}$;
\end{quote}
stopping when no such pair exists. Such an $S$ exists: $\dist_\Fcal(v,w)>r$ is witnessed by some
member of $\Fcal$, while every member of $\Fcal$ outside $\Fcal'$, and every member of
$\overline{\Xcal}$, already lies in $\Fcal^*$; so the witness lies in
$\Fcal'\setminus\overline{\Xcal}$. Each round costs polynomial time, the distances $\dist_S(v,w)$
for $S\in\Fcal\cup\Ycal$ and all pairs $v,w$ being computed once by breadth-first search in the at
most $\Xi(k)$ flips over each $S$.

On termination $\Fcal^*$ is sparsifying: for a near $v$ this is \cite[Claim~21]{separability}, which
uses only $(*)$ and $\Ycal\subseteq\Fcal^*$, and for a far $v$ the stopping condition gives
$B^r_{\Fcal^*}(v)\subseteq B^r_{\Fcal}(v)$, while $\Fcal$ is sparsifying.

It remains to bound the number of rounds, and this is where locality enters. Let $\Pcal$ be the
common refinement of the partitions $\Pcal_S$, $S\in\Ycal$, which has at most $(k+2^k)^\lambda$
parts, and let $\mathrm{col}_1,\mathrm{col}_2$ be the colourings of pairs and of $k$-tuples supplied
by \cite[Lemma~14]{separability} for $\Pcal$ and the formula $\varphi(vw,\bar s)$ expressing
$\dist_{\{\bar s\}}(v,w)>r$, with $\ell$ colours. Suppose $S$ is added in some round, for the pair
$v,w$. As $\Ycal\subseteq\Fcal^*$ we have $\dist_\Ycal(v,w)\le r$, and as $v$ is far, the triangle
inequality gives $\dist_\Ycal(vw,\bar s')>r'$ for every enumeration $\bar s'$ of every
$S'\in\Fcal'$. Since $\dist_\Pcal\ge\dist_\Ycal$, \cite[Lemma~14]{separability} then says that
whether $\dist_{S'}(v,w)>r$ holds for $S'\in\Fcal'$ depends only on $\mathrm{col}_2(\bar s')$. Every
member of $\overline{\Xcal}$ fails to separate $v$ and $w$, while $S$ separates them, so $S$ carries
a colour that no member of $\overline{\Xcal}$ carries. Hence there are at most $\ell$ rounds, and
\[ |\Fcal^*|\;\le\;|\Fcal|-|\Fcal'|+\ell+\lambda\;<\;|\Fcal|, \]
exactly as in \cite[Lemma~20]{separability}, whose threshold $k_0$ makes the last inequality hold.
Note that $\mathrm{col}_1$ and $\mathrm{col}_2$ are never computed.

\emph{The loop.} Each compression step removes at least one member from a family of size at most
$n$, so there are at most $n$ steps, after which $|\Fcal|<k_0=:t$. Every step runs in polynomial
time with constants depending on $\C$, $r$ and $\varepsilon$, and so does the whole computation. A
naive count gives $\Ocal_{\C,r,\varepsilon}(n^5)$: a linear number of steps, each dominated by the
all-pairs distance tables in the flips over the members of $\Fcal$.
\end{proof}

\begin{proof}[Proof of \Cref{cor:algo-sep}]
Let $\delta$ and $t$ be the constants provided at radius $12r$, tolerance $\varepsilon$ and $m=2$
by \Cref{thm:main} if $\C$ is monadically stable, and by \Cref{lem:dependent-packing} otherwise, and
let $k_0$ bound the size of a member of the corresponding net. We follow the proofs of
\Cref{lem:light-balls,lem:packing-sep}, checking that each step is effective.

Compute a $(12r,\varepsilon,\delta,2)$-flip-net $\Fcal$ of order $(k_0,t)$ for $\w$: by
\Cref{thm:algo} in time $\Ocal_{\C,r,\varepsilon}(n^3)$ if $\C$ is monadically stable, and by
\Cref{thm:algo-dependent} in time $\Ocal_{\C,r,\varepsilon}(n^5)$ in general. One pass over $V(G)$
gives the set $H$ of $\delta$-heavy vertices, of size at most $1/\delta$; put
$S:=\bigcup\Fcal\cup H$, so that $|S|\le k_0t+\lfloor1/\delta\rfloor$. By the proof of
\Cref{lem:light-balls}, $B^{6r}_S(u)$ is not $\varepsilon$-heavy for every $\delta$-light $u$; and
an $\varepsilon$-light $u$ that is $\delta$-heavy lies in $S$, so $B^{6r}_S(u)=\{u\}$ by the
isolating flip, which is not $\varepsilon$-heavy either. Thus $B^{6r}_S(u)$ is not
$\varepsilon$-heavy for every $\varepsilon$-light $u$.

It remains to convert the flip-metric over $S$ into a single flip. \Cref{fact:conversion}(2),
applied to the partition $\Pcal_S$ into $S$-classes, supplies $T\supseteq S$ with $|T|\le k$ and
$G'\in\flip(G/T)$ with $B^r_{G'}(u)\subseteq B^{6r}_S(u)$ for every $u$, in time
$\Ocal(|\Pcal_S|^2n^2)=\Ocal_{\C,r,\varepsilon}(n^2)$ by \cite[Lemma~8]{separability}. So no
$B^r_{G'}(u)$ with $u$ $\varepsilon$-light is $\varepsilon$-heavy, and in both cases the running
time is dominated by the computation of the net.
\end{proof}

\section{Relational structures}\label{app:relational}

We explain how \Cref{thm:main} extends to classes of structures in a finite relational signature.
The proof is that of \Cref{sec:packing}, with the relational Flipper theorem of Przybyszewski and
Toru\'nczyk \cite{flipfork} in place of \Cref{thm:finfliprank} and three facts about relational
flips that are immediate for graphs but not in general; we state them and then indicate the
changes.

\subparagraph*{Setting.}
Fix a finite relational signature $\sigma$. The \emph{Gaifman graph} $\Gaif(M)$ of a
$\sigma$-structure $M$ has as vertices the elements of $M$, two of them adjacent if they occur
together in a tuple of some relation of $M$; distances, balls and $r$-independence in $M$ refer to
$\Gaif(M)$. Weightings, heavy sets and light vertices are defined on the domain exactly as for
graphs. Monadic stability and dependence of a class $\C$ of $\sigma$-structures are defined through
its monadic expansions and the order and independence properties, as in \cite{flipfork}, and
coincide with the graph notions of \Cref{sec:prelims} when $\sigma$ is the signature of graphs.

The flips are those of \cite{flipfork}. For $S\subseteq M$, an \emph{$S$-flip} of $M$ is a
structure $N$ on the same domain, in a possibly different finite relational signature, such that
every relation of $N$ is definable in $M$ by a quantifier-free formula with parameters from $S$,
and vice versa; we write $\flip(M/S)$ for the set of $S$-flips of $M$, and define the flip-metric
$\dist_S$, its balls $B^r_S$, and $\ind^r_S$, $\ind^r_N$ as in \Cref{sec:prelims}, with a supremum
in place of the maximum. When $M$ is a graph, every $S$-definable flip in the sense of
\Cref{sec:prelims}, expanded by unary predicates for its $S$-classes, is an $S$-flip in this sense,
and the two flip-metrics coincide on the domain: the relational metric is at least the graph one,
and conversely any relational $S$-flip of a graph is dominated, in the sense of Gaifman graphs, by
a binary one, whose adjacency on each pair of $S$-classes must distinguish edges from non-edges
unless the adjacency there is constant, and is therefore dominated by an $S$-definable flip. So
the theorem below contains \Cref{thm:main} as the special case of graphs.

\subparagraph*{Three facts about relational flips.}
The first is that flips subsume deletions, exactly as for graphs.

\begin{fact}[isolating flips; {\cite{companion}}]\label{fact:rel-isolating}

Let $M$ be a $\sigma$-structure, $S\subseteq M$ finite and $N\in\flip(M/S)$. There is
$N_0\in\flip(M/S)$ whose Gaifman graph is $\Gaif(N)$ with every edge incident to $S$ removed. In
particular every $s\in S$ is isolated in some $S$-flip, so $B^r_S(s)=\{s\}$.
\end{fact}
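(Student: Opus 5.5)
The plan is to build $N_0$ explicitly from $N$ by splitting every relation of $N$ according to which coordinates are occupied by parameters. We then check the two directions of quantifier-free interdefinability and compute the Gaifman graph. Throughout we use that quantifier-free definability with parameters from $S$ is transitive: substituting quantifier-free definitions into a quantifier-free formula yields a quantifier-free formula, with parameters still from $S$. So it suffices to show that $N_0$ and $N$ are quantifier-free interdefinable over $S$; composing with the interdefinability of $N$ and $M$ then gives $N_0\in\flip(M/S)$.

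\emph{Construction.} Let $R$ be an $a$-ary relation of $N$. Consider all \emph{patterns} $p\colon[a]\to S\cup\{\ast\}$; there are finitely many, as $S$ is finite. For each $p$, with $I:=p^{-1}(\ast)$, introduce an $|I|$-ary relation $R_p$ defined by
\[ R_p(\bar x_I)\;:\iff\; R(\bar y)\wedge\bigwedge_{i\in I}\bigwedge_{s\in S}x_i\ne s, \]
where $y_i:=x_i$ for $i\in I$ and $y_i:=p(i)$ otherwise. If $I=\emptyset$, then $R_p$ is a nullary relation; if nullary relations are unwanted, it can be encoded into a unary predicate that holds everywhere or nowhere. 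We also add unary predicates $P_s:=\{s\}$ for $s\in S$, which are harmless. Let $N_0$ consist of all $R_p$ and $P_s$. Its signature is finite, and every relation of $N_0$ is quantifier-free definable in $N$ with parameters from $S$.

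\emph{Converse definability.} $R$ is recovered from $N_0$ by the finite disjunction over patterns $p$ of
\[ \bigwedge_{i\notin I}x_i=p(i)\;\wedge\;\bigwedge_{i\in I}\bigwedge_{s\in S}x_i\ne s\;\wedge\;R_p(\bar x_I). \]
For any tuple $\bar x$, exactly one pattern matches it, namely the one recording which coordinates equal which parameter. So this formula defines $R$, quantifier-free and with parameters from $S$. Hence $N_0\in\flip(M/S)$.

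\emph{Gaifman graph.} The tuples of each $R_p$ consist of elements outside $S$ that occur together in a tuple of $R$. The unary and nullary relations contribute no edges. Therefore every edge of $\Gaif(N_0)$ is an edge of $\Gaif(N)$ avoiding $S$. Conversely, let $uv$ be an edge of $\Gaif(N)$ with $u,v\notin S$, witnessed by a tuple $\bar y\in R$. Taking the pattern that $\bar y$ matches, the restriction $\bar y_I$ lies in $R_p$ and contains both $u$ and $v$, so $uv$ is an edge of $\Gaif(N_0)$. This gives $\Gaif(N_0)=\Gaif(N)$ minus the edges incident to $S$.

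\emph{The ``in particular'' clause.} Apply the statement with $N:=M$, which lies in $\flip(M/S)$ trivially. In the resulting $N_0$ every $s\in S$ is isolated, so $\dist_{N_0}(s,v)=\infty$ for all $v\neq s$. Since the flip-metric is the supremum over $\flip(M/S)$, we get $\dist_S(s,v)=\infty$ for $v\ne s$, and hence $B^r_S(s)=\{s\}$.

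The only delicate point is the bookkeeping of repeated coordinates and of coordinates equal to parameters in the converse definition. It is handled by requiring that the patterns partition all tuples, with the conjuncts $x_i\neq s$ ensuring that exactly one pattern applies.
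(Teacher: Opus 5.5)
Your proof is correct. The paper does not actually prove this fact: it imports it from the companion paper. The only related argument in the text is the graph case in Section~2. There one stays in the signature of graphs and isolates $s$ by complementing the adjacency between $\{s\}$ and the $S$-classes whose union is its neighbourhood.

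Your construction takes a different route, using the freedom, built into the relational definition, to change the signature:
\begin{itemize}
\item Each relation $R$ of $N$ is split according to which coordinates carry which parameter. It is then projected onto the remaining coordinates, which are constrained to avoid $S$.
\item The exhaustive disjunction over patterns recovers $R$.
\item Transitivity of quantifier-free definability over $S$ then gives $N_0\in\flip(M/S)$.
\end{itemize}

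This buys several things over the graph argument:
\begin{itemize}
\item It treats an arbitrary $N\in\flip(M/S)$ at once.
\item It determines $\Gaif(N_0)$ exactly, rather than merely isolating $S$.
\item It never uses finiteness of $M$.
\end{itemize}
The price is that it leaves the fixed signature. The relational definition permits this, but the graph notion of $S$-flip in Section~2 (complementing pairs of $S$-classes within the graph signature) does not. This is why the paper argues differently there.

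Two small simplifications are available. The predicates $P_s$ are not needed. And instead of nullary relations or their unary encodings, you can write the fixed truth value of $R(p(1),\dots,p(a))$ in $N$ directly into the formula defining $R$, as $x_1=x_1$ or $x_1\neq x_1$.
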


The second replaces the trivial bound $|\flip(G/S)|\le\Xi(|S|)$ for graphs. Over a relational
signature there are infinitely many $S$-flips of a structure, of unbounded arity and with pairwise
distinct Gaifman graphs, so the supremum defining $\dist_S$ ranges over an infinite family; the
point is that boundedly many of them, of arity bounded by that of $\sigma$, suffice.

\begin{fact}[finiteness of the flip-metric; {\cite{companion}}]\label{fact:rel-finite}

There is a function $\Xi_\sigma:\N\to\N$, depending only on $\sigma$, such that for every
$\sigma$-structure $M$ and every finite $S\subseteq M$ there is a subfamily
$\flip^*(M/S)\subseteq\flip(M/S)$ of size at most $\Xi_\sigma(|S|)$ such that every
$N\in\flip(M/S)$ satisfies $\Gaif(N')\subseteq\Gaif(N)$ for some $N'\in\flip^*(M/S)$.
Consequently, for all $u,v\in M$ and $r\in\N$,
\[ \dist_S(u,v)=\max_{N'\in\flip^*(M/S)}\dist_{N'}(u,v)
   \qquad\text{and}\qquad
   B^r_S(u)=\bigcap_{N'\in\flip^*(M/S)}B^r_{N'}(u). \]
\end{fact}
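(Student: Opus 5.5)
The plan is to index a canonical, bounded family of $S$-flips by finite combinatorial data depending only on $\sigma$ and $|S|$. For any $N\in\flip(M/S)$, we then choose the member whose data is read off from $N$, and show that its Gaifman graph is contained in $\Gaif(N)$. Let $a$ be the maximal arity of $\sigma$. The basic finite objects are quantifier-free types over $S$ of tuples of length at most $a$ in $M$. Their number is bounded by some function of $\sigma$ and $|S|$, because each such type is determined by which atoms over the finitely many symbols, the tuple and $S$ hold. Since $N$ and $M$ are quantifier-free interdefinable over $S$, quantifier-free types over $S$ in $N$ and in $M$ determine each other for tuples of length at most $a$. I would first reduce to this: if $N$ has relations of larger arity, only the definitions of the $\sigma$-relations of $M$ in $N$ matter.

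The key observation concerns a tuple $\bar u$ of length at most $a$, partitioned by the connected components of $\Gaif(N)[\bar u]$ into blocks $\bar u_1,\dots,\bar u_p$. Every atom of $N$ on $\bar u\cup S$ that does not already hold on some $\bar u_i\cup S$ is false, up to atoms whose tuples involve parameters only. Hence the $N$-type of $\bar u$ over $S$ is a fixed function of the $N$-types of the blocks. Therefore, for every $R\in\sigma$, whether $R(\bar u)$ holds in $M$ is given by a function $D_N$ of the $M$-types over $S$ of the blocks together with the block pattern. This function $D_N$ is the ``default'' determined by $N$. It has boundedly many possible values, ranging over functions from finitely many block patterns of type tuples to truth values.

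For each such default function $D$, I would define $N'_D$ on the domain of $M$. For each $R\in\sigma$ and each coordinate pattern, it has a relation $R'_D$ containing exactly those tuples $\bar u$ such that $R(\bar u)$ in $M$ disagrees with the value $D$ predicts from the partition of $\bar u$ into blocks. It also has unary predicates for the quantifier-free $1$-types over $S$, plus the relations of $M$ restricted to tuples of $S$. The structure $N'_D$ is quantifier-free definable from $M$ over $S$, since the types of the blocks are. Conversely, $M$ is recovered from $N'_D$ by XOR-ing $R'_D$ with the $D$-prediction, again quantifier-free over $S$. So $N'_D\in\flip(M/S)$, and $\flip^*(M/S):=\{N'_D\}$ has size bounded by some $\Xi_\sigma(|S|)$. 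Taking $D:=D_N$, every tuple in $R'_{D_N}$ has a connected Gaifman graph in $N$: if it were disconnected, the block decomposition would give exactly the predicted value. Every edge of $\Gaif(N'_{D_N})$ therefore lies in $\Gaif(N)$, up to the edge-removal at $S$ afforded by \Cref{fact:rel-isolating}. The displayed formulas for $\dist_S$ and $B^r_S$ then follow, since $\Gaif(N')\subseteq\Gaif(N)$ gives $\dist_{N'}\ge\dist_N$.

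The main obstacle is the block argument. Its delicate point is that the prediction must be made relative to a partition that $N'_D$ does not know, namely the components of $\Gaif(N)[\bar u]$. Since $D$ is only a function of $M$-types of pieces, I would have $N'_D$ quantify over all partitions of $\bar u$ that are consistent with the types. Concretely, $D$ would specify, for each partition pattern, whether a coarser one is used. This reduces to the following claim: for the true component partition under $N$, the predicted value is correct, and any tuple on which no consistent finer partition predicts correctly must be $N$-connected. I would try to prove this by choosing the finest partition into pieces that are pairwise $N$-nonadjacent, and by checking that the definition of $N'_D$ uses only information expressible from $M$-types over $S$.
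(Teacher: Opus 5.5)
There is a genuine gap, and it lies in the passage from your block observation to $N'_D$.

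\textbf{The correction scheme is either empty or adds edges.} Taken at face value, your observation gives $R^M(\bar u)=D_N(\text{pattern},\text{block types})$ for \emph{every} $\bar u$. So with $D:=D_N$ the disagreement relations are empty, and $N'_{D_N}$ consists only of the unary predicates for $1$-types and the relations on $S$. The recovery ``XOR with the $D$-prediction'' needs the $M$-types of multi-element blocks and the block partition itself, and $N'_{D_N}$ contains neither. For an $N$-connected $\bar u$ the only block is $\bar u$, whose type already contains $R(\bar u)$, so the recovery is circular.

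Letting the default depend on less information does not help. The corrections then sit on $N$-connected tuples. But a tuple in a relation of $N'$ makes all its entries pairwise adjacent in $\Gaif(N')$, so $\Gaif(N')\subseteq\Gaif(N)$ forces every such tuple to be a \emph{clique} of $\Gaif(N)$, not merely connected. Your inference ``connected, hence every edge of $\Gaif(N'_{D_N})$ lies in $\Gaif(N)$'' is exactly where this fails.

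\textbf{Example.} Let $M$ be a disjoint union of at least two copies of $\{b_1,b_2,b_3\}$. Put $R^M$ the triples $(b_1,b_2,b_3)$, $P^M$ the symmetric paths $b_1b_2b_3$, and unary $U^M$, $W^M$ the $b_1$'s and $b_3$'s. Let $N$ forget $R$; it is an $\emptyset$-flip via $R(x,y,z)\leftrightarrow U(x)\wedge P(x,y)\wedge P(y,z)\wedge W(z)$. Here $N'_{D_N}$ cannot even define $P$. Moreover, no relation of an admissible $N'$ may contain $b_1$ and $b_3$ together. So $R(b_1,b_2,b_3)$ must be recovered from data stored on $\{b_1,b_2\}$ and $\{b_2,b_3\}$ separately, which a partition into components cannot express.

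Two smaller points:
\begin{itemize}
\item Components of $\Gaif(N)[\bar u]$ are not determined by $\mathrm{tp}^M(\bar u/S)$ once $N$ has relations of arity at least $3$, since adjacency may be witnessed by tuples leaving $\bar u\cup S$.
\item The relations of $M$ on $S$ are unnecessary, as atoms on parameters alone are constants. They may also add edges not in $\Gaif(N)$.
\end{itemize}

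\textbf{The missing idea.} Decompose by cliques rather than components; then no default or XOR is needed.
\begin{itemize}
\item Call a quantifier-free type $q$ over $S$ of tuples of length at most $a$ an \emph{$N$-clique type} if, in a realisation $\bar v$, any two distinct entries occur together in some true $N$-atom over $\bar v\cup S$. Since $\mathrm{tp}^M(\bar v/S)$ determines $\mathrm{tp}^N(\bar v/S)$, this depends on $q$ alone.
\item Let $N'$ have, for each $N$-clique type $q$, the relation $C_q:=\{\bar v:\mathrm{tp}^M(\bar v/S)=q\}$. Then $N'$ is quantifier-free definable in $M$ over $S$, and $\Gaif(N')\subseteq\Gaif(N)$ by the clique condition.
\item Conversely, every true $N$-atom over $\bar u\cup S$ uses a subtuple of $\bar u$ that realises an $N$-clique type. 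Hence such an atom is false unless that subtuple lies in some $C_q$, in which case $q$ decides it. So $\mathrm{tp}^{N'}(\bar u/S)$ determines $\mathrm{tp}^N(\bar u/S)$, hence $R^M(\bar u)$. Since $N'$ has a finite signature, this is a quantifier-free definition over $S$.
\item $N'$ depends only on the set of $N$-clique types. That set is a subset of a set of formal types whose size is bounded in terms of $\sigma$ and $|S|$, which yields $\Xi_\sigma$.
\end{itemize}
With this in place, your derivation of the two displayed identities from the domination property is correct.
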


The third is the relational Flipper theorem. The flip-separation rank $\rk_r^M(U/S)$ of
\Cref{def:rank} is defined verbatim for structures, with $B^r_{S\cup\{s\}}(u)$ the ball of the
relational flip-metric, and a class is winning for Flipper if for every $r$ the ranks
$\rk_r(M)=\rk^M_r(M/\emptyset)$ are bounded over $M\in\C$.

\begin{fact}[{\cite{flipfork}}]\label{fact:rel-flipper}
A class of $\sigma$-structures is winning for Flipper if and only if it is monadically stable, and
it is flip-flat, in the sense of \Cref{def:flipflat} read for relational flips, if and only if it
is monadically stable.
\end{fact}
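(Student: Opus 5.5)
The plan is to derive both equivalences from the main results of \cite{flipfork}, which are stated for a forking-independence form of the Flipper game and for flatness phrased through relational flips. The work lies in matching their definitions with the confining rank of \Cref{def:rank} read for structures, and with \Cref{def:flipflat} read for relational flips; the combinatorial content is theirs. I would proceed exactly as the paper did for graphs in the discussion around \Cref{thm:finfliprank}, where the confining game with quantifier-free definable separation of \cite[Section~3.1]{flippergame-full} was identified with our rank.

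\emph{Step 1 (the rank).} In \cite{flipfork} Localiser's move shrinks the arena by removing the elements $w$ that are separated from her chosen $c$ over the parameters. Separation there means that some quantifier-free definable flip with those parameters puts $w$ and $c$ at Gaifman distance greater than $r$. By definition this is $\dist_S(w,c)>r$ for the relational flip-metric, and \Cref{fact:rel-finite} shows that the supremum is attained. So the arena after a round is $A\cap B^r_S(c)$, and bounded game length for every $r$ is literally boundedness of $\rk_r$. The equivalence with monadic stability is then their main theorem.

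If their game is stated in a shrinking form, or with an independence relation defined semantically rather than by quantifier-free flips, I would insert a conversion at a constant factor in the radius. This is the same move as in \cite[Section~3]{flippergame-full}, and it is harmless since the statement quantifies over all $r$.

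\emph{Step 2 (flatness).} For the forward direction I would take flip-flatness of monadically stable classes from \cite{flipfork}. If only the flip-metric form $\ind^r_S(A)\ge m$ is available there, I would pass to a single flip by the Ramsey argument of Stage~2 of \Cref{lem:winning-packing}, using the $\Xi_\sigma(|S|)$ representatives of \Cref{fact:rel-finite} as colours. For the converse, that flip-flatness implies monadic stability, I would argue by contraposition. An unstable class transduces all half-graphs, and indiscernibility yields a large set $A$ in which no quantifier-free definable flip over boundedly many parameters scatters $m$ elements. This is exactly the standard obstruction; the alternative route is to quote the converse from \cite{flipfork} directly.

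\emph{Main obstacle.} I expect Step 1 to be the main obstacle: ensuring that the relational flips of \cite{flipfork}, which may change the signature and raise arity, induce the same metric as the one used here. \Cref{fact:rel-finite} (together with \Cref{fact:rel-isolating} for the base case of an arena inside the parameters) is what closes this gap, so I would invoke it first.
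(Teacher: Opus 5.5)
Your proposal takes essentially the paper's approach: the paper gives no argument of its own here, imports both equivalences from \cite{flipfork}, and relies on the same identification of the confining game with the rank of \Cref{def:rank} that it makes for graphs via \cite[Section~3.1]{flippergame-full}. One small point is that Step~1 does not need \Cref{fact:rel-finite}, since a supremum of values in $\N\cup\{\infty\}$ exceeds $r$ exactly when one of the values does; that fact is needed only to bound the number of colours if you pass from flip-metric flatness to a single flip by Ramsey.
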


\subparagraph*{The theorem.}
Flip-nets and flip-packability are defined for classes of $\sigma$-structures by reading
\Cref{def:flipnet,def:packing} with relational flips.

\begin{theorem}\label{thm:rel-main}
A class of structures in a finite relational signature is flip-packable if and only if it is
monadically stable.
\end{theorem}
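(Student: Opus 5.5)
The plan is to reproduce the two directions of \Cref{thm:main} verbatim, replacing each graph-specific ingredient by one of the three relational facts just stated. The two directions are: winning for Flipper implies flip-packable, and flip-packable implies flip-flat. The two endpoints are then linked through \Cref{fact:rel-flipper}.

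\textbf{Packability from stability.} Let $\C$ be monadically stable. By \Cref{fact:rel-flipper} it is winning for Flipper, so for each $r$ there is $k$ with $\rk_{3r}(M)\le k$ for all $M\in\C$. I then run Stage~1 of \Cref{lem:winning-packing} unchanged, with the same sequences $\delta_k$ and $t_k$. Checking the ingredients one at a time:
\begin{enumerate}
\item The base case needs every parameter to be isolated in some flip. This is \Cref{fact:rel-isolating}, which gives $B^r_S(s)=\{s\}$. It also gives a single $S$-flip isolating all of $S$: apply the fact to any $N\in\flip(M/S)$, so that every edge incident to $S$ is removed.
\item The packing bound $\ell\le\lfloor m/\varepsilon\rfloor$ is purely about weights.
\item The absorption $B^r_{S'}(x)\subseteq B^{3r}_{S'}(v_i)$ uses only the triangle inequality for $\dist_{S'}$. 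This holds because a supremum of metrics is still a metric (possibly $\infty$-valued), and by \Cref{fact:rel-finite} it is in fact a maximum over finitely many.
\item The inductive use of $(*)$ is just the definition of the rank, read verbatim for structures.
\end{enumerate}

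\textbf{The Ramsey stage.} Stage~2 needs a finite palette. Given $Y$ of size $m'$ that is $r$-independent in $\dist_F$, I colour each pair $uv$ by some $N'\in\flip^*(M/F)$ with $\dist_{N'}(u,v)>r$. Such an $N'$ exists because, by \Cref{fact:rel-finite}, $\dist_F$ is the maximum over $\flip^*(M/F)$. There are at most $\Xi_\sigma(k)$ colours, so I set $m':=\Rcal(m,\Xi_\sigma(k))$. A monochromatic $Y'$ of size $m$ is then $r$-independent in the single flip $N'\in\flip(M/F)$, with $|F|\le k$. This yields the net of order $(k,t_k(\varepsilon,m'))$ with $\delta=\delta_k(\varepsilon,m')$.

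\textbf{Stability from packability.} I repeat \Cref{lem:packing-flat} word for word. Take $\varepsilon=1$ and the indicator weighting $\w_A$ of a set $A$ with $|A|>1/\delta$. Then $A$ is $1$-heavy with $\delta$-light elements, so some member of the net gives an $F$-flip $N$ with $\ind^r_N(A)\ge m$. This is relational flip-flatness, and the second half of \Cref{fact:rel-flipper} gives monadic stability.

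\textbf{Main obstacle.} The only genuinely new point is Stage~2. A priori $\flip(M/F)$ is infinite, so the colouring has unboundedly many colours, and even the existence of a separating flip for each pair is unclear when the supremum is over an infinite family. \Cref{fact:rel-finite} resolves both issues. Once it and the isolating-flip fact are granted, the rest is bookkeeping. I would double-check that $\dist_S$ takes values in $\N\cup\{\infty\}$, so that the supremum is attained, and that the monotonicity $\dist_F\le\dist_S$ for $F\subseteq S$ still holds. The latter is used in the inductive step when passing from $F\cup S'$ to the final parameter set, and follows since every $F$-flip is an $S$-flip.
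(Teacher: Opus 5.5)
Your proposal is correct and follows the paper's proof sketch essentially line by line. Stage~1 of \Cref{lem:winning-packing} runs verbatim using \Cref{fact:rel-isolating} and the triangle inequality, Stage~2 uses \Cref{fact:rel-finite} to bound the Ramsey palette by $\Xi_\sigma(k)$, and the converse is \Cref{lem:packing-flat} combined with \Cref{fact:rel-flipper}. Two minor remarks: the monotonicity $\dist_F\le\dist_S$ you flag is not actually needed in Stage~1, since there $F^\star\cup S=F\cup S'$ holds as an equality of sets; and you should replace $\Xi_\sigma$ by its running maximum, so that $|F|\le k$ indeed gives at most $\Xi_\sigma(k)$ colours.
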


\begin{proof}[Proof sketch]
The proof of \Cref{thm:main} goes through mutatis mutandis. Stage~1 of the proof of
\Cref{lem:winning-packing} uses only the rank recursion, the finite additivity of the weighting,
and the triangle inequality of the flip-metric, together with the fact that the parameters played
are isolated in some flip, which is \Cref{fact:rel-isolating}; it applies verbatim. Stage~2 colours
the pairs of an $r$-independent set of the flip-metric by a flip separating them, and needs the
number of colours to be bounded in terms of the size of the parameter set: by
\Cref{fact:rel-finite} the separating flip may be taken in $\flip^*(M/F)$, so $\Xi_\sigma(k)$
colours suffice in place of $\Xi(k)$, and Ramsey's theorem applies as before. The hypothesis that
$\C$ is winning for Flipper is supplied by \Cref{fact:rel-flipper}. For the converse,
\Cref{lem:packing-flat} applies verbatim and yields flip-flatness, which characterises monadic
stability by \Cref{fact:rel-flipper}.
\end{proof}

The counting form of \Cref{cor:counting} extends in the same way, and so does the light-balls
lemma, \Cref{lem:light-balls}, whose proof uses only the triangle inequality and
\Cref{fact:rel-isolating}: for a $2$-flip-packable class of structures, every weighting admits a
bounded parameter set $S$ over which every ball of the flip-metric around a light vertex is light.
This is flip-separability in the flip-metric. What we cannot do for structures is pass from there
to a single flip: the metric conversion of \Cref{fact:conversion}(2) has no known relational
analogue, as its proof rests on a diameter dichotomy for bipartite graphs
\cite[Lemma~10]{separability}, whereas the cells of a relational flip consist of tuples whose pairs
of entries need not form a complete bipartite graph.
Accordingly \Cref{thm:2packing-equiv} is stated for graphs, and the following is open.

\begin{question}[metric conversion]\label{q:conversion}
Is there $c\in\N$ such that for every finite relational signature $\sigma$, every $\sigma$-structure
$M$ and every finite $S\subseteq M$ there are $T\subseteq M$, of size bounded in terms of $|S|$ and
$\sigma$, and $M'\in\flip(M/T)$ with $\dist_S\le c\cdot\dist_{M'}$?
\end{question}

\Cref{fact:rel-finite} shows that $\dist_S$ is computed by boundedly many flips; the question asks
whether one flip suffices up to a constant. A positive answer would extend \Cref{thm:2packing-equiv}
to structures, given the relational forms of \Cref{thm:flipbreak,thm:flipsep}, which are open as
well. Of the remaining entries of \Cref{tab:variants}, the deletion rows are statements about
Gaifman graphs and extend as such, while cliquewidth and shrubdepth are graph parameters; for
shrubdepth, boundedness of $\rk_\infty$ is a candidate relational definition, under which the
corresponding entry extends by the proof of \Cref{prop:app-allm}. Finally, the algorithmic results
of \Cref{sec:algorithmic} rest on the effective form of the Flipper theorem for graphs
\cite{flippergame-full}, and we know of no effective form of \Cref{fact:rel-flipper}; given one,
\Cref{thm:algo} would apply verbatim, with $\Xi_\sigma(k)$ in place of $\Xi(k)$.

\end{document}